\theoremstyle{definition}
\newtheorem{theorem}{Theorem}
\newtheorem{lemma}{Lemma}
\renewenvironment{proof}[1][\proofname]{\par
  \pushQED{\qed}%
  \normalfont \topsep6\p@\@plus6\p@\relax
  \trivlist
  \item[\hskip\labelsep
        \itshape
    #1\@addpunct{:}]\ignorespaces
}{%
  \popQED\endtrivlist\@endpefalse
}
\begin{document}
\title{Optimal 3D Directional WPT Charging via UAV for 3D Wireless Rechargeable Sensor Networks}

	\author{
		Zhenguo~Gao*,~\IEEEmembership{Senior Member,~IEEE}, Hui Li, Yiqin Chen, Qingyu Gao, Zhufang Kuang,~\IEEEmembership{Member,~IEEE}, Shih-Hau Fang,~\IEEEmembership{Scenior Member,~IEEE}, Hsiao-Chun Wu*,~\IEEEmembership{Fellow,~IEEE}
		
		\IEEEcompsocitemizethanks{
			\IEEEcompsocthanksitem Z.~G.~Gao, H.~Li, Y.~Q.~Chen, and Q.~Y.~Gao are with both the Department of Computer Science and Technology in Huaqiao University, and Key Laboratory of Computer Vision and Machine Learning(Huaqiao University), Fujian Province University, Xiamen, FJ, 361021, CHINA. (e-mail: {\tt \{gzg,chwang\}@hqu.edu.cn}); Z.~Kuang is with the School of Computer and Information Engineering, Central South University of Forestry and Technology, Changsha 410004, China (e-mail: {\tt zfkuangcn@163.com}); Shih-Hau Fang is with the Department of Electrical Engineering, Yuan Ze University, Taoyuan City 320, Taiwan. (e-mail:{\tt shfang@saturn.yzu.edu.tw}); H.-C.~Wu is with the Innovation Center for AI Applications, Yuan Ze University, Chungli 32003, Taiwan (e-mail: {\tt eceprofessorwu@gmail.com}).
		}
		
		\thanks{This work was jointly supported by Natural Science Foundation of China under Grants 62372190 and 62072477, and also in part by National Science Foundation, USA under Award Number 2335150.}
	}

%
%
\markboth{IEEE}%
{\MakeLowercase{\textit{et al. }}: }
\maketitle
\IEEEtitleabstractindextext{%
\begin{abstract}
The high mobility and flexible deployment capability of UAVs make them an impressive option for charging nodes in Wireless Rechargeable Sensor Networks (WRSNs) using Directional Wireless Power Transfer (WPT) technology. However, existing studies largely focus on 2D-WRSNs, lacking designs catering to real 3D-WRSNs. The spatial distribution characteristics of nodes in a 3D-WRSN further increase the complexity of the charging scheduling task, thus requiring a systematic framework to solve this problem. In this paper, we investigated the Directional UAV Charging Scheduling problem for 3D-WRSNs (DCS-3D) and established its NP-hard property, and then proposed a three-step framework named as directional charging scheduling algorithm using Functional Equivalent (FuncEqv) direction set and Lin-Kernighan heuristic (LKH) for 3D-WRSNs (FELKH-3D) to solve it. In FELKH-3D, the challenge of infinite charging direction space is solved by designing an algorithm generating a minimum-size direction set guaranteed to be FuncEqv to the infinite set of whole sphere surface, and the optimaility of the method was proved.To determine the optimal charging tour for the UAV, the LKH algorithm is employed.Simulation experiments demonstrated the superiority of FELKH-3D over other classical algorithms.

\end{abstract}

\begin{IEEEkeywords}
\textbf{ 3D Wireless rechargeable sensor networks, charging schedule, UAV charger, directional WPT.}
\end{IEEEkeywords}
}
\maketitle

\IEEEdisplaynontitleabstractindextext

\IEEEpeerreviewmaketitle


\section{ Introduction}
\label{sec_intro}

\IEEEPARstart{W}{ireless} Sensor Networks (WSNs) have broad application prospects in many fields such as precision agriculture~\cite{Pradeep} and logistics monitoring~\cite{Jingjing}. For the significant time and maintenance costs of replacing batteries~\cite{Nowrozian}, however, the limited battery capacity of nodes has continuously been a critical bottleneck hindering the prevalence of WSNs~\cite{Huan2020ToC}. 
With the breakthroughs in Wireless Power Transfer (WPT) technology, using Mobile Chargers (MCs) equipped with WPT devices to charge the nodes in Wireless Rechargeable Sensor Networks (WRSNs) has shown great potential in extending network lifespan~\cite{Sikeridis}. Recently, charging scheduling for WRSNs in complex scenarios, where the movement of MCs are restricted, are attracting research efforts. With strong mobility and flexible deployment capabilities~\cite{Xu2023TWC,Kuang2024JIOT,Zhao2024TVTEarly,Zhang2024TMCEarly}, Unmanned Aerial Vehicles (UAVs) are considered more suitable for charging nodes in complex scenarios, and hence utilizing UAVs as flying chargers to charge WRSNs is becoming a hot topic~\cite{liu2024IOJ}. 

Many related research efforts are targeted for WRSNs with nodes distributed on a two-dimensional plane. With the adoption of UAVs in WRSNs, many work begin to focus on WRSNs with nodes distributed in Three-Dimensional (3D) space. For example, the work in ~\cite{Wang} assumed a two-layer network where chargers are to be deployed in an upper plane to charge the nodes in a bottom plane. Nodes are positioned on rooftops and curved faces. Targeted for scenarios where nodes are on curved terrestrial ground, the author of\cite{Lin2021TMC} investigated the UAV charging schedule problem, assuming that the UAV can only radiate energy in the downward half sphere, not fully exploiting the flexibility of UAV chargers. The network there is not full 3D.

In the true 3D network, sensor nodes are spatially distributed across different locations within a volumetric space—for example, mounted on balloons, attached to trees or poles, or deployed in underwater environments—forming a network topology with distinct spatial hierarchies. Leveraging the high mobility and operational flexibility of UAVs, energy replenishment for these spatially dispersed nodes can be effectively achieved. However, due to the layered and three-dimensional distribution of nodes, determining the optimal direction for wireless energy transmission becomes far more complex than in two-dimensional networks, thereby posing significant challenges to directional energy transfer.

This paper aims to systematically solve the UAV charging scheduling problem caused by the spatial distribution characteristics of nodes in a 3D-WRSN, and to make full use of the UAV's WPT capability to improve scheduling efficiency.
We focus on a scenario where a UAV with directional charging ability is used to charge nodes distributed within a 3D region.
The underlying optimization problem is defined as the Directional UAV Charging Scheduling problem for 3D-WRSNs (DCS-3D). Given the rapid decay of wireless power transfer efficiency with distance—and the extremely low efficiency beyond a certain threshold—we, in line with existing literature~\cite{tormar2021TMC}, consider only node locations as candidate charging positions. Hence, the terms node position and charging position are used interchangeably throughout this paper.

We first prove that the DCS-3D problem is NP-hard and then propose a three-step approximation framework: (1) select charging directions at each charging position (i.e., node location), (2) determine the charging time along each selected direction, and (3) construct a round-trip tour visiting all charging positions. A complete solution is derived from these decisions.
\textit{To address the challenge of the infinite charging direction space}, we extend the 2D analysis in~\cite{Gao2024TMC} to 3D and design an algorithm to compute a minimal-size charging direction set that is functionally equivalent (FuncEqv) to the full spherical space. Here, FuncEqv implies that using this reduced set as the directional search space yields optimal charging schedules identical in quality to those derived from the original continuous direction space. This algorithm, named Creating Minimal FuncEqv Direction Set (cMFEDS), is proven to be optimal.
\textit{To address the UAV path planning problem}, we adopt the Lin-Kernighan heuristic (LKH)\cite{lkh_2019}, a state-of-the-art TSP solver known for high solution quality and efficiency. Finally, we integrate the above components into a unified algorithm—FELKH-3D (FuncEqv-direction-based Energy-efficient LKH algorithm for 3D-WRSNs)—as illustrated in Fig.\ref{fig_DCSBGS}. 

\begin{figure}
    \centering
    \includegraphics[width=0.48\textwidth]{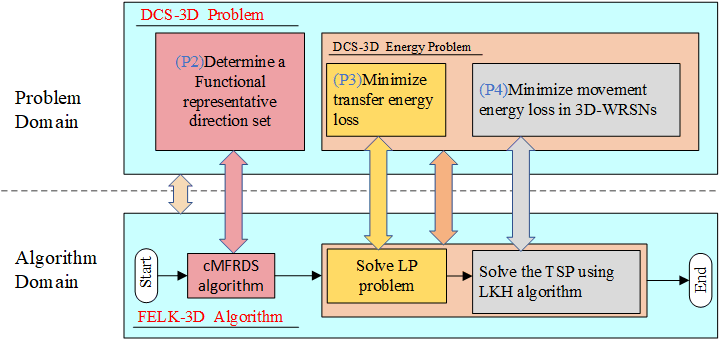}
    \caption{Outline of the FELKH-3D algorithm.}
    \label{fig_DCSBGS}
    \vspace{-0.6cm}
\end{figure}

Main contributions in this paper are summarized as follows:	
\begin{enumerate}[1)]
\item The cMFEDS algorithm was proposed which generates minimum-size charging direction set that is FuncEqv to the original infinite set of the whole sphere surface. cMFEDS's optimality was established.
\item The DCS-3D problem was shown to be NP-hard, and the FELKH-3D algorithm was proposed, which solves the DCS-3D problem in three steps by integrating established methods including cMFEDS and LKH\cite{lkh_2019}.
\item Extensive simulation experiments were performed, which demonstrate the superiority of FELKH-3D.
\end{enumerate}

The remainder of this paper is organized as follows. Sec.~\ref{sec_related_work} reviews related work. Sec.~\ref{sec_models} presents preliminary models. Sec.~\ref{sec_problem_formulation} describes and formulates the DCS-3D problem. Sec.~\ref{sec_address_direction_challenge} provides the analyses and design of the cMFEDS algorithm. Sec.~\ref{sec_algorithm} describes the FELKH-3D algorithm. 
Sec.~\ref{sec_testbed} introduces our testbed experiments.
Sec.~\ref{sec_simulation} presents and analyzes the simulation results. 
Sec.~\ref{sec_conclusion} concludes the paper.

\section{ Related Work}
\label{sec_related_work}
We briefly review researches related to the key components of WPT charging scheduling topic: charging position generation, charging direction selection, and charging tour determination.

\subsection{Charging Position Generation for WPT Chargers}
Despite advances in WPT distance extension, signal attenuation remains a fundamental limitation, making node positions the default charging candidates in most studies. Recent work has explored various optimization approaches: Jiang et al.~\cite{Jiang} proposed grid-based position refinement for cone-shaped chargers in roof-to-ground scenarios, while Liang's team~\cite{Liang2023TVT,LiangShuang} developed clustering methods for 2D-WRSNs and an improved firefly algorithm for UAV deployment. Alternative strategies employ spatial discretization to manage complexity, including piece-wise linearization of energy transfer models~\cite{Dai2021TMC,Dai2017ACM} and spatiotemporal discretization for pseudo-3D uneven terrains~\cite{Lin2021TMC}. These approaches collectively address the trade-off between charging efficiency and computational complexity in position selection.
Although the WPT charging distance had been extended considerably, the energy transfer coefficient is still low due to the serious attenuation of radio signals over distances. As a result, using the positions of the nodes as candidate charging positions is commonly adopted in the literature.

\subsection{Charging Direction Selection for Directional Chargers}

A common approach in existing literature is to generate dedicated charging directions for individual nodes~\cite{lin2017TMC,kumar2021efficient}. Alternatively, fixed direction sets are employed in some works, such as~\cite{Liang2023TVT}, where predefined directions are used irrespective of the spatial distribution of nodes.
More refined methods have also been explored. In~\cite{Jiang}, the authors investigate a pseudo-3D WRSN scenario, where directional chargers are deployed on a roof plane to charge ground-level nodes. Two charging direction selection strategies are introduced: Greedy Cone Covering (GCC) and Adaptive Cone Covering (ACC). GCC iteratively generates directions based on node pairs; depending on whether both nodes can be covered by a single cone, either two or four directions are generated. ACC further compresses the direction set by greedily merging adjacent directions.

To minimize charging delay,~\cite{Lin2021ACM} proposed the Charging Power Discretization (CPD) algorithm, which identifies potential directions from each charging position to nodes within range, then applies k-means clustering to refine the direction set.
In a previous work~\cite{Gao2024TMC}, an optimal charging direction selection method was developed for DMCs in 2D-WRSNs. The concept of functional equivalence (FuncEqv) was introduced, where two direction sets are considered equivalent if they support optimal schedules of equal quality. An optimal algorithm was proposed to generate a minimum-size FuncEqv direction set over the continuous space $[0, 2\pi)$. A similar idea was adopted in~\cite{Dai2022TMC}, though all aforementioned methods are restricted to 2D scenarios and cannot be directly applied to 3D-WRSNs.

\subsection{Charging Tour Searching Methods for Mobile Chargers}

Traditional and heuristic TSP algorithms are widely used for charging tour construction. Gao et al.\cite{Gao2023IOT,Gao2024TMC} employed the Christofides algorithm to determine efficient charging tours in energy-redistribution-assisted WRSNs. 
Chen et al.~\cite{Gao2023Adhoc} extend this to multi-MC, multi-base station settings with enhanced ACO, jointly optimizing energy loss and time span.
Liang et al.~\cite{Liang2023TVT} reduce TSP complexity by strategically grouping sensors and then apply an improved Ant Colony Optimization (ACO) to define fewer stopping points, thereby cutting travel cost.

\section{Preliminary Models}
\label{sec_models}

\subsection{System Model}
\label{s3_system_model}
We consider a 3D-WRSN  contains $n$  nodes $\mathcal{U} = \{u_1,\ldots,u_n\}$ located at positions $\mathcal{L} = \{l_i = (x_i,y_i,z_i) \in \mathbb{R}^3\}$, a UAV equipped with directional wireless power transfer capability, and a base station at location $l_0$. Each node $u_i$ has battery capacity $e_\text{U}(i)$, initial energy level $e_\text{B}(i)$, and energy demand $e_\text{D}(i)$. The charging requirement is satisfied when the condition $\textbf{e}_\textbf{F}{:=}\min\{\textbf{e}_\textbf{B} + \textbf{e}_\textbf{R}, \textbf{e}_\textbf{U}\} \geq (\textbf{e}_\textbf{D} + \textbf{e}_\text{B})$ holds, where $\textbf{e}_\textbf{R}$ represents the received energy vector.
The UAV operates under the following constraints: it begins each mission with full battery capacity $e_\text{U}(0)$ at the base station, maintains a constant flight speed $\bar{v}$, carries sufficient energy to complete all assigned charging tasks, and only engages in wireless power transmission when positioned at designated node locations.

\begin{figure}[ht]
\centering
\includegraphics[width=0.4\textwidth]{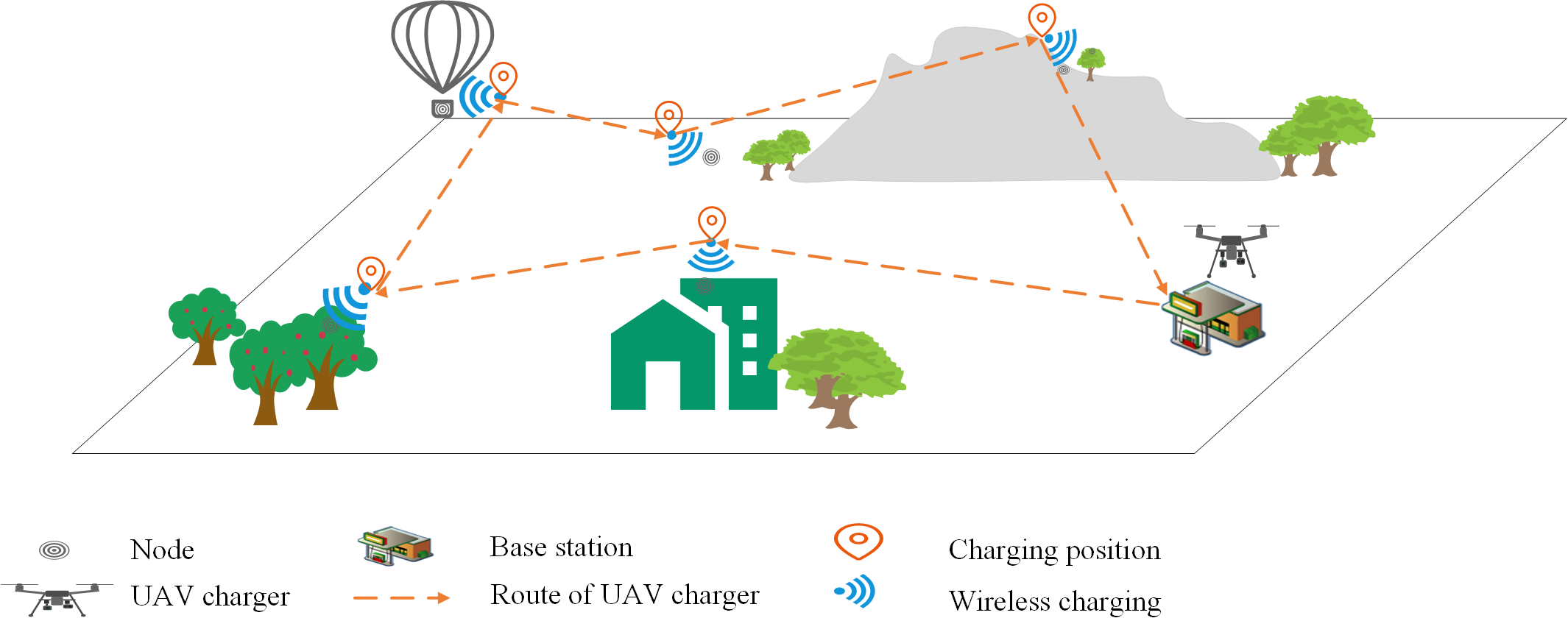}
\caption{An example 3D-WRSN.}
\label{fig_system_model}
\vspace{-0.8cm}
\end{figure}

\subsection{Energy Consumption Model of the UAV}
\label{s3_energy_cost_model_for_uav}

When fulfilling a charging schedule, the energy consumption of the UAV contains three parts: flying energy consumption, hovering energy consumption, and charging energy consumption (which results from charging the nodes). 

The energy consumption power of UAV when flying at speed $v$ is modeled as Eq.~\eqref{eq_uav_power_consumption_model}\cite{Lin2021TMC}, where $p_b$ and $p_i$ are two constants representing the blade profile power and induced power. $U_\text{tip}$ represents the tip speed of the rotor. $v_0$ is the average rotor induced speed in hovering. $d_0$ and $s_\text{UAV}$ are the fuselage drag coefficient and rotor solid ratio, respectively. $r$ and $A$ represent the air density and rotor disc area, respectively.  
\begin{equation}
\label{eq_uav_power_consumption_model}
\begin{split}
p_\text{UAV}(v){=}&p_{b}\left(1{+}\frac{3v^{2}}{U_\text{tip}^{2}}\right)
{+}p_{i}\left(\sqrt{1{+}\frac{v^{4}}{4v_{0}^{4}}}{-}\frac{v^{2}}{2 v_{0}^{2}}\right)^{1/2} \\
&{+}\frac{1}{2} d_{0}{\rho}s_{\text{UAV}}A{v^3}.
\end{split}
\end{equation}

According to Eq.~\eqref{eq_uav_power_consumption_model}, the UAV's flying energy consumption power $p_\text{Fly}$ is $p_\text{Fly}{:=}p_\text{UAV}(\bar{v})$, where $\bar{v}$ is the UAV's constant flying speed. $p_\text{Hov}{=}p_\text{UAV}(0)$ is the hovering energy consumption power.  
Let $\mathcal{S}_\text{DirRep}(i){:=}\{\psi_1,\psi_2,\ldots,\psi_{k_i}\}$ denote the charging direction set at position $l_i$, then a position-direction (Pos-Dir) pair $(l_i,\psi_j)$ completely defines a direction and its associated position. When the UAV is charging at position $l_i$ along direction $\psi_j$, we say that the UAV is charging along (or using) the Pos-Dir pair $(l_i,\psi_j)$. Let define $\mathcal{S}_\text{PosDir}(l_i){:=}\{(l_i,\psi_j)|i{\in}\{1,2,\ldots,k_i\}$ and $\mathcal{S}_\text{PosDir}{:=}$ ${\cup}_{i{=}1}^{|\mathcal{L}|}\mathcal{S}_\text{PosDir}(l_i)$. We assume that the pairs in $\mathcal{S}_\text{PosDir}$ are already properly sorted, firstly on position and then on direction value, and assume the number of charging directions at $l_0$ is $k_0{=}0$. . 

Let $\textbf{t}^\text{UAV}_\text{Chrg}{=}[t^\text{Chrg}_1,t^\text{Chrg}_2,\ldots,t^\text{Chrg}_{K}]^\text{T}$ with $K{:=}\sum_{i{=}1}^{n}k_i$ denote the charge time vector corresponding to the Pos-Dir pairs in $\mathcal{S}_\text{PosDir}$. As the UAV always transmit energy with power $p_0$, the charging energy consumption is Eq.~\eqref{eq_uav_egy_chrg}. During the charging period at a position, the UAV should always hovering there, thence the corresponding hovering energy consumption can be obtained as Eq.~\eqref{eq_uav_egy_hover}.

\begin{align}
e^{\text{UAV}}_\text{Chrg}{:=}&p_0{\cdot}{\sum}_{i{=}1}^{K}t^\text{Chrg}_i
{=}p_0\mathds{1}^{1{\times}K}\textbf{t}^\text{UAV}_\text{Chrg}.\label{eq_uav_egy_chrg}\\
e^{\text{UAV}}_\text{Hov}{:=}&p_\text{UAV}(0){\cdot}{\sum}_{i{=}1}^{K}t^\text{Chrg}_i
{=}p_\text{UAV}(0)\mathds{1}^{1{\times}K}\textbf{t}^\text{UAV}_\text{Chrg}.\label{eq_uav_egy_hover}
\end{align}

Given a charging tour $\text{r}(\mathcal{L}){:=}[l_0,l_{\pi_1},l_{\pi_2},\ldots,l_{\pi_|\mathcal{L}|},l_{\pi_{(|\mathcal{L}|{+}1)}}$ ${=}l_0]$, the flying energy consumption and flying time of the UAV can be expressed as

\begin{align}
e^{\text{UAV}}_\text{Fly}(\text{r}(\mathcal{L}))&=\sum_{i{=}0}^{|\mathcal{L}|}
p_\text{Fly}*d(l_i,l_{i{+}1})/\bar{v}*w(l_i,l_{i{+}1}),\label{eq_uav_tour_fly_egy}\\
t^{\text{UAV}}_\text{Fly}(\text{r}(\mathcal{L}))&=\sum_{i{=}0}^{|\mathcal{L}|}d(l_i,l_{i{+}1})/\bar{v}.\label{eq_uav_tour_fly_time}
\end{align}

The total energy consumption and the remaining energy of the UAV after fulfilling a charging schedule can be expressed by Eqs.~\eqref{eq_uav_egy_cost_total} and~\eqref{eq_uav_egy_final}, respectively.

\begin{align}
e^{\text{UAV}}_\text{Total}{=}&e^{\text{UAV}}_\text{Fly}(\textbf{r}(\mathcal{L})){+}e^{\text{UAV}}_\text{Chrg}{+}e^{\text{UAV}}_\text{Hov},\label{eq_uav_egy_cost_total}\\
e_\text{F0}{=}&e_\text{B0}{-}e^{\text{UAV}}_\text{Total}.\label{eq_uav_egy_final}
\end{align}

\subsection{WPT Energy Transfer Model}
\label{sec_wpt_egy_transfer_model}

Energy transfer coefficient is usually determined using energy transfer coefficient model. Here we adopt the model utilized in \cite{liang2021charging,liu2022ITJ}, where the region covered by UAV's energy transmission signal is the joint of a sphere with radius $D$ and a cone with cone apex angle $\phi$ and height larger than $D$, both rooted at the UAV, as depicted in Fig.~\ref{fig_charging_cone}. Although the bottom surface of the region is not a plane, we still call it as charging cone. The center-line of the cone is referred as the corresponding charging direction, denoted as $\psi$. With the additional parameter including node direction $\theta$, and the distance $d$ from the UAV to the node concerned, Eq.~\eqref{eq_uav_egy_tranfer_coeff_model} determines the energy transfer coefficient from the UAV to the node, where $\angle(\psi,\theta)$ represents the angle between the two directions $\psi$ and $\theta$. They both expressed as three-element vectors. 
We assume that the cone angle $\phi$ remains constant, whereas charging direction $\psi$ is freely adjustable. 

\begin{figure}
    \centering
    \includegraphics[width=0.40\textwidth]{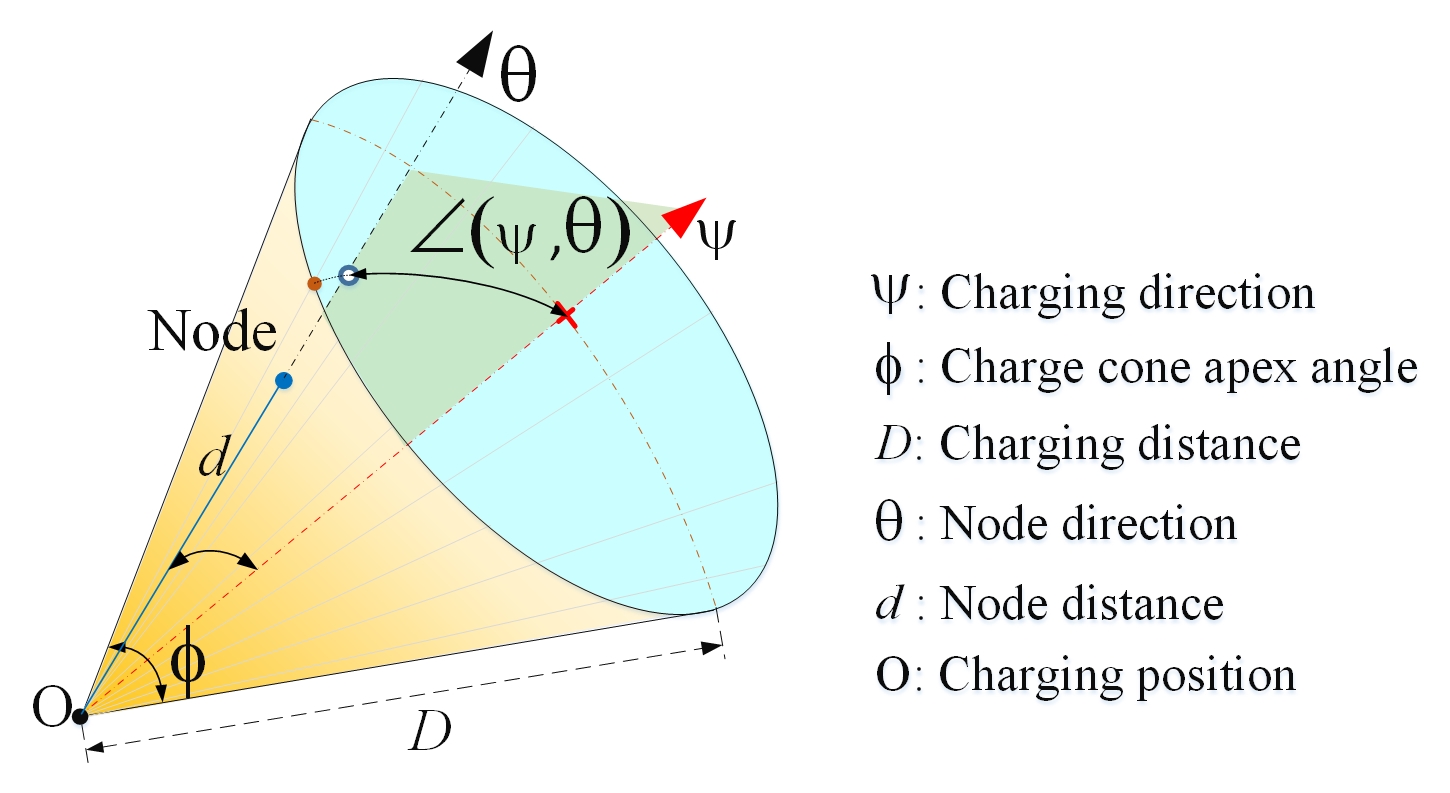}
    \caption{Charging cone diagram}
    \label{fig_charging_cone}
\end{figure}

\begin{equation}
\label{eq_uav_egy_tranfer_coeff_model}
c(\psi,\phi,\theta,d){=} 
\left\{
\begin{array}{ll}
\frac{\delta}{(\alpha{+}d)^{\beta}},&d{\le}D, \angle(\psi,\theta){\leq}\phi/2;\\
0,&\text{otherwise}. \\
\end{array}
\right.
\end{equation}

Table \ref{symbolDefine} lists main symbols for facilitating the readers.
\begin{table}[tbp]
\caption{Symbol Definitions}
\begin{center}
    \small
    \begin{tabular}{|p{0.07\textwidth}|p{0.365\textwidth}|}
        \hline \hline
        \textbf{Symbol} & \textbf{Definition} \\
        \hline
        \hline
        $\mathcal{U}$,$n$& $\mathcal{U}{:=}\{u_1,u_2,\ldots,u_n\}$ is the set of $n$ nodes \\
        \hline
        $\mathcal{L}$ & $\mathcal{L}{:=}\{l_1,l_2,\ldots,l_n\}$ is the set of nodes' positions, with $l_i{:=}(x_i,y_i,z_i)$ is node $u_i$'s position\\
        \hline
        $\textbf{e}_\text{B}$,$\textbf{e}_\text{U}$,$\textbf{e}_\text{D}$,
        $\textbf{e}_\text{R}$, $\textbf{e}_\text{F}$,  & The vectors of the nodes' initial energy, battery capacity (upper limit), energy demand, energy received from WPT, final energy  \\
        \hline
        $e_\text{B0}$,$e_\text{U0}$, $\bar{v}$, $l_0$ & The UAV's initial energy, battery capacity, flying speed, initial position (i.e., BS's position)\\
        \hline
        $\psi$,D,$\phi$ & The UAV's charging direction, charging distance, charge cone apex angle (cone angle)\\
        \hline
        $p_0$,$p_\text{Fly}$,
        $p_\text{Hov}$ & The UAV's constant energy transmission power, hovering energy consumption power, flying energy consumption power\\
        \hline
        $\mathcal{N}_\text{ChrgPos}$ & Set of charging positions  \\
        \hline
        $\mathcal{N}_\text{Sphere}$ $(O)$ & Set of nodes enclosed in the sphere centered at $O$ with radius $D$  \\
        \hline
        $\mathcal{N}_\text{Cone}$ $(O,\textbf{v},\phi)$ &Set of nodes covered by a charging cone rooted at $O$ with direction \textbf{v}, cone angle $\phi$, and height $D$ \\   
        \hline
        $\mathcal{S}_\text{DirRep}$ $(O,A)$ & Set of selected representative directions at position $O$ with reference node $A$ in context $\mathcal{C}_\text{ChrgPos}(O,A)$ \\        \hline
        $\mathcal{S}_\text{DirRep}(O)$ & Set of selected representative directions at position $O$ in context $\mathcal{C}_\text{ChrgPos}(O)$ \\
        \hline
        $\mathcal{S}_\text{PosDir}$& Set of Pos-Dir pairs defined in Eq.~\eqref{Eq_construct_dirrep_all} as three-element tuples, including charging position, charging direction, and its covered node set\\
        \hline
        $\mathcal{C}_\text{ChrgPos}$ $(O,A)$ & Context with charging position $O$ and reference node $A$ \\
        \hline
        $\mathcal{C}_\text{ChrgPos}(O)$ & Context with charging position $O$ \\
        \hline
        $\mathcal{S}_\text{LMaxRng}$ $(O,A)$ & Set of LMax-SCN angle ranges in context $\mathcal{C}_\text{ChrgPos}(O,A)$ \\
        \hline
        $\textbf{s}^\text{CTS}$, $\tau^\text{CTS}$ & Charging schedule and its time span  \\
        \hline
        $e^\text{UAV}_\text{Fly}(\textbf{s})$, $e^\text{UAV}_\text{Hov}(\textbf{s})$, $e^\text{UAV}_\text{Chrg}(\textbf{s})$, $e^\text{Total}_\text{Loss}(\textbf{s})$ & Flying energy consumption, hovering energy consumption, and charging energy consumption for conducting a charging schedule $\textbf{s}$, $e^\text{Total}_\text{Loss}(\textbf{s}){:=}e^\text{UAV}_\text{Fly}(\textbf{s}){+}e^\text{UAV}_\text{Hov}(\textbf{s}){+}e^\text{UAV}_\text{Chrg}(\textbf{s}){-}e^\text{Nodes}_\text{Rcv}(\textbf{s})$  \\
        \hline
        $e^\text{UAV}_\text{Total}$
        $e^\text{UAV}_\text{Hov}(\textbf{t})$, $e^\text{UAV}_\text{Chrg}(\textbf{t})$, $e^\text{Nodes}_\text{Rcv}
        (\textbf{t})$ & Total energy consumption ,hovering energy consumption and charging energy consumption of UAV, and energy actually received by the nodes, which are depend on charging time vector $\textbf{t}$ along Pos-Dir pairs in $\mathcal{S}_\text{PosDir}$  \\
        \hline
        $\mathcal{S}^\text{CTS}_\text{Fly}$, $\mathcal{S}^\text{CTS}_\text{Chrg}$,& Flying schedule item set and charging schedule item set embedded in a charging schedule\\
        \hline
        $\mathbf{C}_\text{ETC}$ & $\mathbf{C}_\text{ETC}{:=}[c(i,j)]_{\{i{\in}\{1,2,\cdots,|\mathcal{S}_\text{PosDir}|\},U_j{\in}\mathcal{U}}$\\ 
        \hline
        $\textbf{t}^\text{UAV}_\text{Chrg}$ & UAV's charging time vector along the Pos-Dir pairs in $\mathcal{S}_\text{PosDir}$ \\
        \hline\hline
    \end{tabular}
\end{center}
\label{symbolDefine}
\vspace{-0.8cm}
\end{table}

\section{The DCS-3D Problem}
\label{sec_problem_formulation}

\subsection{Structure of Charging Schedule}
\label{s4_structure_of_charging_shcedule}
A complete charging schedule solving a DCS-3D problem instance should determine the UAV's tour schedule and charging time schedule along the Pos-Dir pairs. We describe a complete schedule using an ordered list called Charging Tour Schedule (CTS) list. 

We use $\textbf{s}^\text{CTS}{:=}[s^\text{C}_1, s^\text{C}_2, \ldots, s^\text{C}_{m}]$ to represent a CTS list with $m$ items. The $i$-th item $s^\text{C}_i{:=}(state,x,t,\textbf{v})$ has three fields, where $state$ is a binary variable. The $state$ value indicates the type of the item: $state{=}1$ indicates it as a flying schedule item for arranging the travel between charging positions, whereas $state{=}0$ means that it is a charging schedule item for arranging the energy transmission operation along Pos-Dir pairs. For schedule item $(state,x,t,\textbf{v})$, if $state{=}1$ then it means that the UAV should fly toward $x$ for $t$ time, and if $state{=}0$ then it means that the UAV should charging along direction $\textbf{v}$ for time $t$. We define two sets as $\mathcal{S}^\text{CTS}_\text{Fly}{=}\{s^\text{C}_i|s^\text{C}_i.state{=}0\}$ and $\mathcal{S}^\text{CTS}_\text{Chrg}{=}\{s^\text{C}_i|s^\text{C}_i.state{=}1\}$. 

When fulfilling a charging schedule $\mathbf{s}^\text{CTS}$, its items are executed sequentially. Let $\tau_{i}^\text{C}{:=}\sum_{j{=}1}^{i}s_j^\text{C}.t$, then item $s_i^\text{C}$ starts at time $\tau_{i{-}1}^\text{C}$ and ends at time $\tau_{i}^\text{C}$. The time span of the schedule $\textbf{s}^\text{CTS}$ is $\tau^\text{CTS}{:=}\sum_{j{=}1}^{|\textbf{s}^\text{CTS}|}s_j^\text{C}.t$.
\vspace{-0.5cm}

\subsection{The DCS-3D Problem}
\label{s4_problem_statement}

For an instance of the DCS-3D problem, the total initial energy $e_\text{TB}{:=}e_\text{B0}{+}\mathds{1}^{1{\times}n}\textbf{e}_\text{B}$ is a constant. 
After fulfilling a charging schedule $\textbf{s}$, the total remaining energy of all nodes and the UAV is $e_\text{TF}(\textbf{s}){:=}e_\text{F0}{+}\mathds{1}^{1{\times}n}\textbf{e}_\text{F}$. For prevously defined energy consumption symbols, such as $e^\text{UAV}_\text{Hov}$, $e^\text{UAV}_\text{Chrg}$, $e^\text{Nodes}_\text{Rcv}$, we can re-write them by appending $\textbf{s}$ to emphasize the dependence on $\textbf{s}$. The total energy loss incurred during the execution of schedule $\textbf{s}$ is thus computed as:
\begin{equation}
\label{eq_egy_loss_total}
e^\text{Total}_\text{Loss}(\textbf{s}){=}e_\text{TB}{-}e_\text{TF}(\textbf{s}){=}e_\text{Total}^\text{UAV}(\textbf{s}){-}e_\text{Rcv}^\text{Nodes}(\textbf{s})
\end{equation}

As $e^\text{UAV}_\text{Hov}(\textbf{s})$, $e^\text{UAV}_\text{Chrg}(\textbf{s})$ and $e^\text{Nodes}_\text{Rcv}(\textbf{s})$ are all completely determined by the charging time list $\textbf{t}$ embedded in schedule $\textbf{s}$, they can be expressed as $e^\text{UAV}_\text{Hov}(\textbf{t})$ $e^\text{UAV}_\text{Chrg}(\textbf{t})$, and $e^\text{Nodes}_\text{Rcv}(\textbf{t})$, respectively. Let define $e^\text{WPT+Hov}_\text{Loss}(\textbf{t})$ as 
\begin{equation}
\label{eq_egy_loss_wpt_hov}
e^\text{WPT+Hov}_\text{Loss}(\textbf{t}){:=}e^\text{UAV}_\text{Hov}(\textbf{t}){+}e^\text{UAV}_\text{Chrg}(\textbf{t}){-}e^\text{Nodes}_\text{Rcv}(\textbf{t})
\end{equation}

The objective interested here is to minimize the total energy loss with the prerequisite of satisfying the energy demands of all nodes. With this objective, the targeted DCS-3D Problem can be formally stated as follows. 

\textbf{DCS-3D Problem: Given an 3D-WRSN contains a UAV, $n$ nodes at positions in $\mathcal{L}$, a BS at $l_0$, and together with the preliminary models, the task is to find a charging schedule $\mathbf{s}^\text{CTS}$ leading to minimum energy loss $\mathbf{e}^\text{Total}_\text{Loss}$ while guaranteeing that all charging demands are satisfied.}

DCS-3D can be formulated as~Eq.\eqref{eq_DWRACS_problem_math}, where $\Omega_\text{r}(\mathcal{L})$ denotes the solution space of valid charge tours traversing all positions in $\mathcal{L}$, $\mathcal{\psi}$ is the space of charging directions, and ~Eq.\eqref{eq_DWRACS_problem_math_C1} assures that all nodes' energy demands are satisfied.

\begin{subequations}
\label{eq_DWRACS_problem_math}
\begin{align}
\textbf{(P1)} \quad 
& \mathop{\min}\limits_{
\tiny
\makecell{
\textbf{r}(\mathcal{L}),\,
\mathcal{S}_\psi(\mathcal{L}),\,
\textbf{t}^\text{UAV}_\text{Chrg}
}
}
&& e_\text{Loss}^\text{Total}
\!\left(
\textbf{r}(\mathcal{L}),
\mathcal{S}_\psi(\mathcal{L}),
\textbf{t}^\text{UAV}_\text{Chrg}
\right), \\[2mm]
& \text{s.t.} 
&& \textbf{e}_\textbf{F} {\geq} \textbf{e}_\textbf{B} {+} \textbf{e}_\textbf{D}, 
\label{eq_DWRACS_problem_math_C1} \\[1mm]
&&& \textbf{t}^\text{UAV}_\text{Chrg} {\geq} \textbf{0},
\label{eq_DWRACS_problem_math_C2} \\[1mm]
&&& \textbf{r}(\mathcal{L}) {\in} \Omega_\mathcal{L}, \quad 
\mathcal{S}_\psi(\mathcal{L}) {\in} \Omega_\psi.
\label{eq_DWRACS_problem_math_C3}
\end{align}
\end{subequations}

However, the following theorem tells that solving the DCS-3D problem is challenging.
 
\begin{theorem}
The DCS-3D problem is NP-hard.
\end{theorem}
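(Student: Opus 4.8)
The plan is to prove NP-hardness by a polynomial-time reduction from a known NP-hard problem, and the natural candidate is the (Euclidean / metric) Traveling Salesman Problem, since the charging tour $\textbf{r}(\mathcal{L})$ is literally a Hamiltonian round-trip through all node positions and it contributes the term $e^\text{UAV}_\text{Fly}(\textbf{r}(\mathcal{L}))$ to the objective via Eq.~\eqref{eq_uav_tour_fly_egy}. First I would construct, from an arbitrary TSP instance on a point set, a DCS-3D instance in which the charging-direction and charging-time degrees of freedom are rendered inert, so that the only thing left to optimize is the tour. Concretely, I would place the $n$ TSP points as the node positions $\mathcal{L}$ and the BS at one of them (or at an arbitrary extra point), set the charging distance $D$ large and the cone angle $\phi$ close to $2\pi$ (or simply assume each node can be charged from its own position with a fixed coefficient), give every node a small identical energy demand $e_\text{D}(i)$, and verify that with these choices the hovering-plus-WPT loss term $e^\text{WPT+Hov}_\text{Loss}(\textbf{t})$ in Eq.~\eqref{eq_egy_loss_wpt_hov} is a fixed constant independent of the tour (each node needs the same charging time regardless of which direction/position is used). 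Then $e^\text{Total}_\text{Loss}$ differs from $(p_\text{Fly}/\bar v)\cdot(\text{tour length})$ only by an additive constant, so minimizing energy loss is exactly minimizing tour length.

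The key steps, in order, are: (1) state the source problem precisely — metric TSP, whose decision version is NP-complete; (2) describe the polynomial-time mapping from a TSP instance to a DCS-3D instance as above, checking that all the preliminary-model parameters ($e_\text{B}$, $e_\text{U}$, $e_\text{D}$, $p_0$, $p_\text{Fly}$, $p_\text{Hov}$, $D$, $\phi$, $\delta,\alpha,\beta$) are chosen so the instance is well-formed and the charging demands are satisfiable by the trivial schedule that visits each node and charges it from its own location; (3) prove the correspondence of optima: show that for any feasible schedule the charging/hovering loss equals a constant $c_0$ depending only on the instance, hence $e^\text{Total}_\text{Loss}(\textbf{s}) = c_0 + (p_\text{Fly}/\bar v)\,\mathrm{len}(\textbf{r})$, and conclude that $\textbf{s}$ is optimal for DCS-3D iff its tour $\textbf{r}(\mathcal{L})$ is a minimum-length Hamiltonian cycle; (4) observe the reduction is computable in polynomial time and conclude NP-hardness of DCS-3D (and NP-completeness of its decision version, since a schedule can be verified in polynomial time).

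The main obstacle I anticipate is step (3): making the charging/hovering contribution genuinely tour-independent. Because the WPT coefficient in Eq.~\eqref{eq_uav_egy_tranfer_coeff_model} depends on the distance $d$ from the charging position to the node and on the angular offset $\angle(\psi,\theta)$, a clean argument needs the reduction to force each node to be charged essentially from its own position (distance $0$, so coefficient $\delta/\alpha^\beta$) with a direction that covers it — one must check no cheaper arrangement exists that, say, charges two nearby nodes from a single hover point and thereby trades charging time against tour length. The cleanest fix is to make the node positions pairwise far apart relative to $D$ (scale the TSP instance so inter-point distances exceed $D$), so that each hover point can cover only the node it sits on; then the charging time per node is forced to $e_\text{D}(i)/(p_0\,\delta\,\alpha^{-\beta})$, the total charging and hovering energy is a constant, and the reduction goes through. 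I would also remark that an even simpler route is to reduce from the decision version of metric TSP directly to the decision version of DCS-3D (''is there a schedule with loss $\le B$?''), which avoids any optimization-vs-decision subtlety; the structural argument is identical.
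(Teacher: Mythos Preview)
Your proposal is correct and follows the standard route one would expect the paper to take: a reduction from (Euclidean) TSP that neutralizes the direction/time degrees of freedom so that only the tour term $e^\text{UAV}_\text{Fly}$ in Eq.~\eqref{eq_uav_tour_fly_egy} remains to optimize. The key construction you settle on---scaling the TSP point set so that all pairwise distances exceed the charging radius $D$, forcing each node to be charged from its own location with the fixed coefficient $\delta/\alpha^\beta$ and hence making $e^\text{WPT+Hov}_\text{Loss}$ a constant---is exactly the right device, and your treatment of the decision-version correspondence is sound; just be sure to reduce from \emph{Euclidean} TSP (NP-hard in $\mathbb{R}^2$ or $\mathbb{R}^3$) rather than general metric TSP, since arbitrary metrics need not embed in $\mathbb{R}^3$.
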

\begin{proof}
Proof is provided in Sec.I in the Appendix. 
\end{proof}
Given the NP-hard nature of DCS-3D, we focus on developing approximate solutions to address its two core challenges: (1) infinite charging direction space and (2) UAV path planning . Building upon \cite{Gao2024TMC}, we resolve the first challenge by deriving a minimal FuncEqv direction set through our cMFEDS algorithm (Section~\ref{sec_address_direction_challenge}). For the UAV path planning challenge, we adapt the LKH heuristic \cite{lkh_2019} to develop the FELKH-3D solution, detailed in Section~\ref{sec_algorithm}.

\section{Addressing the Infinite Charging Direction Space Challenge}
\label{sec_address_direction_challenge}

\subsection{Preliminaries}
For a vector $\textbf{v}$, let $|\textbf{v}|$ be its norm and  $\text{normalize}(\textbf{v}){:=}\textbf{v}/|\textbf{v}|$. For any two nodes $A$ and $B$, let $\overrightarrow{AB}$ represent the vector from $A$ to $B$, let $|AB|$ be the length of vector $\overrightarrow{AB}$, and let $\textbf{e}_{AB}$ denote the corresponding unit vector. Thus $\overrightarrow{AB}{=}|AB|\textbf{e}_{AB}$. In later equations, For any node, say $A$, its symbol $A$ is also used to refer to the vector of its 3D coordinates. Given a charging position $O$ and two nodes $A$ and $B$, let $\varphi_{AOB}$ denote the radian angle value of $\angle{AOB}$. As the nodes have fixed positions, \textit {node A} and \textit{position A} are used interchangeably. For a node $A$, let $A(x)$, $A(y)$ and $A(z)$ denote its x, y and z coordinates, respectively.   

Let $\mathcal{N}_\text{Sphere}(O)$ denote the set of the nodes contained in the sphere centered at position $O$ with radius $D$ (the charging distance), and let $\mathcal{N}_\text{Cone}(O,\textbf{v},\phi)$ denote the set of the nodes covered by a charging cone rooted at $O$ with charging direction $\textbf{v}$ and cone angle $\phi$. This node set is called as \textbf{the node set covered by direction $\textbf{v}$} for short. All nodes in $\mathcal{N}_\text{Cone}(O,\textbf{v},\phi)$ are said as covered by direction $\textbf{v}$. A node on the boundary of a charging cone is also regarded as covered by the corresponding charging direction. 

    In the following text, discussions are usually made in the context with a certain charging position $O$. We denote a such context as $\mathcal{C}_\text{ChrgPos}(O)$, which is assumed as the default context in later text. Furthermore, when just considering charging cones with a certain node, say $A$, always on the boundary, we denote the context as $\mathcal{C}_\text{ChrgPos}(O,A)$. Such a node $A$ in context $\mathcal{C}_\text{ChrgPos}(O,A)$ is called the reference node in the context. 

In context $\mathcal{C}_\text{ChrgPos}(O)$ or even $\mathcal{C}_\text{ChrgPos}(O,A)$, for a node set $s_1$, if there is no charging direction $\textbf{v}_1$ such that $s_1{\subset}\mathcal{N}_\text{Cone}(O,\textbf{v}_1,\phi)$, we say that $s_1$ is a \textbf{Local Maximum Simultaneous Charging Node set (LMax-SCN set)}, otherwise it is a \textbf{non-LMax-SCN} set. A charging direction covering an LMax-SCN set is called an \textbf{LMax-SCN} direction, otherwise a non-LMax-SCN direction. 

    As in Ref.~\cite{Gao2024TMC}, in context $\mathcal{C}_\text{ChrgPos}(O)$, we classify the relationship between charging directions according to the partial order relationship between their corresponding covered node set. To be specific, for two directions $\textbf{v}_1$ and $\textbf{v}_2$, if $\mathcal{N}_\text{Cone}(O,\textbf{v}_1,\phi){\supset}$ $\mathcal{N}_\text{Cone}(O,\textbf{v}_2,\phi)$, we say that $\textbf{v}_1$ is functional better than $\textbf{v}_2$, or $\textbf{v}_1$ outperforms $\textbf{v}_2$, and denote this relationship as $\textbf{v}_1{\rhd}\textbf{v}_2$, or $\textbf{v}_2{\lhd}\textbf{v}_1$. If $\mathcal{N}_\text{Cone}(O,\textbf{v}_1,\phi){=}\mathcal{N}_\text{Cone}(O,\textbf{v}_2,\phi)$, then we say that they are Functional Equivalent (FuncEqv), and denote the situation as $\textbf{v}_1{\equiv}\textbf{v}_2$. We use $\textbf{v}_1{\unlhd}\textbf{v}_2$ to mean either $\textbf{v}_1{\equiv}\textbf{v}_2$ or $\textbf{v}_1{\lhd}\textbf{v}_2$. If either of the relations of $\textbf{v}_1{\lhd}\textbf{v}_2$, $\textbf{v}_1{\equiv}\textbf{v}_2$, and $\textbf{v}_1{\rhd}\textbf{v}_2$ is true, we state that $\textbf{v}_1$ and $\textbf{v}_2$ are comparable, otherwise non-comparable.

Given two direction sets $S_{V1}$ and $S_{V2}$, if for each direction $\textbf{v}_1{\in}S_{V1}$, there is a $\textbf{v}_2{\in}S_{V2}$ such that $\textbf{v}_1{\unlhd}\textbf{v}_2$, then we denote the relationship as $S_{V1}{\unlhd}S_{V2}$. We use $S_{V1}{\equiv}S_{V2}$ to mean that both $S_{V1}{\unlhd}S_{V2}$ and $S_{V1}{\unrhd}S_{V2}$ holds. If $S_{V1}{\unlhd}S_{V2}$ and ${\exists}\textbf{v}_2{\in}S_{V2}$ such that, for any $\textbf{v}_1{\in}S_{V1}$ that is comparable with $\textbf{v}_2$, there is $\textbf{v}_1{\lhd}\textbf{v}_2$, then we say that $S_{V2}$ is functional better than $S_{V1}$, and denote the relationship as $S_{V1}{\lhd}S_{V2}$. 

Solving charging tasks involves charging the nodes and moving between the charging positions. Even if the node set covered by a charging direction at a charging position is a subset of one direction at another charging position, the two Pos-Dir pairs may both be helpful in serving the tasks. Hence we emphasize that the directions at different charging positions are regarded as not comparable.

We can easily obtain the following results using above definitions, and the proofs are omitted for their trivialness.
\begin{lemma}
\label{lemma_equiv_delete_worse}
Given a direction set $S_{V1}$ and a direction $\textbf{v}_1{\in}S_{V1}$, let $S_\text{Worse}{:=}\{\textbf{v}|\textbf{v}{\in}S_{V1}, \textbf{v}{\lhd}\textbf{v}_1\}$ and $S_{V2}{:=}S_{V1}{-}S_\text{Worse}$, then $S_{V1}{\equiv}S_{V2}$.
\end{lemma}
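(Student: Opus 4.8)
The plan is to prove the two set-level relations $S_{V2}{\unlhd}S_{V1}$ and $S_{V1}{\unlhd}S_{V2}$ separately, which together give $S_{V1}{\equiv}S_{V2}$ by the definition of $\equiv$ for direction sets. Recall $S_\text{Worse}{=}\{\textbf{v}{\in}S_{V1}:\textbf{v}{\lhd}\textbf{v}_1\}$ and $S_{V2}{=}S_{V1}{-}S_\text{Worse}$, so in particular $S_{V2}{\subseteq}S_{V1}$ and $\textbf{v}_1{\in}S_{V2}$ (since $\textbf{v}_1{\lhd}\textbf{v}_1$ is false, $\textbf{v}_1{\notin}S_\text{Worse}$).

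The direction $S_{V2}{\unlhd}S_{V1}$ is immediate: for every $\textbf{v}{\in}S_{V2}{\subseteq}S_{V1}$ we may take the same $\textbf{v}$ as witness in $S_{V1}$, and $\textbf{v}{\unlhd}\textbf{v}$ holds trivially since $\mathcal{N}_\text{Cone}(O,\textbf{v},\phi){=}\mathcal{N}_\text{Cone}(O,\textbf{v},\phi)$ means $\textbf{v}{\equiv}\textbf{v}$. For the reverse direction $S_{V1}{\unlhd}S_{V2}$, I would take an arbitrary $\textbf{v}{\in}S_{V1}$ and split into two cases. If $\textbf{v}{\in}S_{V2}$, then $\textbf{v}$ itself is the required witness in $S_{V2}$ with $\textbf{v}{\unlhd}\textbf{v}$. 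If $\textbf{v}{\notin}S_{V2}$, then by construction $\textbf{v}{\in}S_\text{Worse}$, i.e.\ $\textbf{v}{\lhd}\textbf{v}_1$, and since $\textbf{v}_1{\in}S_{V2}$ this gives a witness $\textbf{v}_1{\in}S_{V2}$ with $\textbf{v}{\lhd}\textbf{v}_1$, hence $\textbf{v}{\unlhd}\textbf{v}_1$. In both cases the defining condition of $S_{V1}{\unlhd}S_{V2}$ is met, so $S_{V1}{\unlhd}S_{V2}$.

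Combining $S_{V2}{\unlhd}S_{V1}$ and $S_{V1}{\unlhd}S_{V2}$ yields $S_{V1}{\equiv}S_{V2}$ by definition, completing the argument. There is essentially no obstacle here — the result is a direct unwinding of the definitions of $\lhd$, $\unlhd$, and $\equiv$ at the set level, which is exactly why the paper remarks that the proof is omitted for triviality. The only point requiring a moment's care is observing that $\textbf{v}_1$ survives into $S_{V2}$ (so that it can serve as the witness for every element of $S_\text{Worse}$), and that the $\unlhd$ relation on individual directions is reflexive; both are immediate from the definitions.
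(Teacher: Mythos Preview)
Your proposal is correct and is precisely the direct definitional unwinding that the paper has in mind when it remarks the proof is ``omitted for its trivialness.'' The only two small observations needed---that $\textbf{v}_1$ remains in $S_{V2}$ and that $\unlhd$ is reflexive---you have identified and handled, so nothing is missing.
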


\begin{lemma}
\label{lemma_equiv_select_representative}
Given a direction set $S_{V1}$ and a direction $\textbf{v}_1{\in}S_{V1}$, let $S_\text{Equiv}{:=}\{\textbf{v}|\textbf{v}{\in}S_{V1}, \textbf{v}{\equiv}\textbf{v}_1\}$ and $S_{V2}{:=}S_{V1}{-}S_\text{Equiv}{+}\{\textbf{v}_1\}$, then $S_{V1}{\equiv}S_{V2}$.
\end{lemma}

In context $\mathcal{C}_\text{ChrgPos}(O)$, the valid charging directions fill a continuous sphere surface, and can be denoted as $\mathcal{S}_\text{Dir,Sphere}{:=}\{\textbf{v}|v{\in}\mathbb{R}^3,|\textbf{v}|=1\}$. Inspired by Ref.~\cite{Gao2024TMC}, to address the infinite charging direction space challenge at a charging position, we tried to find a minimum discrete direction set $\mathcal{S}_\text{DirRep}$ that is FuncEqv with the set $\mathcal{S}_\text{Dir,Sphere}$. This task can be formulated as Eq.~\eqref{Eq_problem_find_min_funceqv}. 
\vspace{-0.2cm}
\begin{subequations}
\label{Eq_problem_find_min_funceqv}
\begin{align}
(\textbf{P2})&\mathop{\min}\limits_{\mathcal{S}_\text{DirRep}}&|\mathcal{S}_\text{DirRep}|,\hspace{24mm}\\
&\text{s.t.}&{\mathcal{S}_\text{DirRep}{\equiv}\mathcal{S}_\text{Dir,Sphere}},\hspace{10mm}\\
& &\mathcal{S}_\text{DirRep}{\subseteq}\mathcal{S}_\text{Dir,Sphere}.\hspace{10mm}
\end{align}
\end{subequations}

In the following text, we try to solve \textbf{P2} by firstly analyzing how to find LMaxSCN directions in context $\mathcal{C}_\text{ChrgPos}(O,A)$, and then, in the extended context $\mathcal{C}_\text{ChrgPos}(O)$, to efficiently obtain $\mathcal{S}_\text{DirRep}$ by combining and refining the charging direction sets using different reference nodes. The $\mathcal{S}_\text{DirRep}$ sets at different charging positions can be combined straightforwardly into a final charging direction set for charging scheduling. 

In context $\mathcal{C}_\text{ChrgPos}(O)$, we assume that all nodes in $\mathcal{N}_\text{Sphere}(O)$ are already projected onto a unit sphere centered at $O$ using Eq.~\eqref{Eq_node_proj}. So in the following text, all nodes involved are assumed as the nodes after projection by default. 

In particular, for a single node $A{\in}\mathcal{N}_\text{Sphere}(O)$, the valid charging direction can be uniquely determined by the vector $\overrightarrow{OA}$, which directly connects the charging position $O$ and the node $A$. This direction is thus regarded as the intrinsic charging direction of node $A$. However, since this one-to-one correspondence yields only trivial single-node coverage, the following analysis focuses on nontrivial cases where a single charging direction simultaneously covers multiple nodes. This enables a more efficient representation and reduction of the continuous directional space.

\begin{equation}
\label{Eq_node_proj}
A{=}O{+}\text{normalize}(\overrightarrow{OA}),\quad{\forall}A{\in}\mathcal{N}_\text{Sphere}(O)
\end{equation}

\subsection{Determine the Projected Angle Range Covering a Node}

\begin{lemma}
\label{lemma_end_point}
In context $\mathcal{C}_\text{ChrgPos}(O)$, for any two nodes $A,B{\in}\mathcal{N}_\text{Sphere}(O)$ with ${\varphi_{AOB}}{\leq}\phi$, the charging direction $\textbf{e}_\text{O,Bound}(A,B)$, with both $A$ and $B$ are exactly on the boundary of the corresponding charging cone, can be expressed as $\textbf{e}_\text{O,Bound}(A,B)$ ${=}\text{normalize}(\textbf{v})$, where vector $\textbf{v}$ is given in 

\begin{eqnarray}
\label{Eq_vec_dir}
\begin{aligned}
\textbf{v}{=}&\sec(\varphi_{AOB}/2){\cdot}\text{normalize}(\textbf{e}_{OA}{+}\textbf{e}_{OB})\\
&{\pm}\sqrt{\tan^2(\phi/2){-}\tan^2(\varphi_{AOB}/2)}{\cdot}\text{normalize}(\textbf{e}_{OB}{\times}\textbf{e}_{OA}).
\end{aligned}
\end{eqnarray}

\end{lemma}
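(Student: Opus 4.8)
The plan is to set up a coordinate-free geometric argument on the unit sphere centered at $O$. Since both $A$ and $B$ are projected onto the unit sphere, we have $|OA|{=}|OB|{=}1$, so $\textbf{e}_{OA}{=}\overrightarrow{OA}$ and $\textbf{e}_{OB}{=}\overrightarrow{OB}$. The condition that a charging direction $\textbf{e}$ (a unit vector) has $A$ and $B$ both exactly on the cone boundary translates to the two angular equations $\angle(\textbf{e},\textbf{e}_{OA}){=}\phi/2$ and $\angle(\textbf{e},\textbf{e}_{OB}){=}\phi/2$, i.e. $\textbf{e}{\cdot}\textbf{e}_{OA}{=}\textbf{e}{\cdot}\textbf{e}_{OB}{=}\cos(\phi/2)$. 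First I would decompose the unknown unit vector $\textbf{e}$ in the orthogonal frame naturally associated with the pair $\{A,B\}$: the normalized bisector $\textbf{b}{:=}\text{normalize}(\textbf{e}_{OA}{+}\textbf{e}_{OB})$, the normalized difference direction (in the plane of $A$, $O$, $B$), and the normal $\textbf{n}{:=}\text{normalize}(\textbf{e}_{OB}{\times}\textbf{e}_{OA})$. Writing $\textbf{e}{=}\lambda\textbf{b}{+}\mu\,(\text{difference dir}){+}\nu\textbf{n}$, the two dot-product conditions become symmetric under swapping $A{\leftrightarrow}B$, which forces $\mu{=}0$; this is the key simplification that collapses the system to two scalar unknowns $\lambda,\nu$.

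Next I would compute the remaining constraints explicitly. Using $\textbf{e}_{OA}{\cdot}\textbf{e}_{OB}{=}\cos\varphi_{AOB}$, one gets $|\textbf{e}_{OA}{+}\textbf{e}_{OB}|{=}2\cos(\varphi_{AOB}/2)$, hence $\textbf{b}{\cdot}\textbf{e}_{OA}{=}\textbf{b}{\cdot}\textbf{e}_{OB}{=}\cos(\varphi_{AOB}/2)$. The condition $\textbf{e}{\cdot}\textbf{e}_{OA}{=}\cos(\phi/2)$ then reads $\lambda\cos(\varphi_{AOB}/2){=}\cos(\phi/2)$ (the $\textbf{n}$ term drops out since $\textbf{n}{\perp}\textbf{e}_{OA}$), giving $\lambda{=}\cos(\phi/2)/\cos(\varphi_{AOB}/2)$. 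Then enforcing $|\textbf{e}|{=}1$, i.e. $\lambda^2{+}\nu^2{=}1$, yields $\nu^2{=}1{-}\cos^2(\phi/2)/\cos^2(\varphi_{AOB}/2)$. To match the stated form of Eq.~\eqref{Eq_vec_dir}, I would factor out $\cos(\phi/2)$: write $\textbf{v}{:=}\textbf{e}/\cos(\phi/2)$ (so $\textbf{v}$ is an unnormalized representative and $\textbf{e}_\text{O,Bound}(A,B){=}\text{normalize}(\textbf{v})$ as claimed), giving $\textbf{v}$'s $\textbf{b}$-coefficient equal to $1/\cos(\varphi_{AOB}/2){=}\sec(\varphi_{AOB}/2)$ and its $\textbf{n}$-coefficient equal to $\pm\sqrt{\nu^2}/\cos(\phi/2)$. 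A short trigonometric identity — dividing $\nu^2$ through by $\cos^2(\phi/2)$ and using $\sec^2{=}1{+}\tan^2$ — shows $\nu^2/\cos^2(\phi/2){=}\tan^2(\phi/2){-}\tan^2(\varphi_{AOB}/2)$, which is exactly the radicand in Eq.~\eqref{Eq_vec_dir}. The $\pm$ reflects the two solutions symmetric about the plane $AOB$.

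The final bookkeeping step is to confirm well-definedness: the radicand $\tan^2(\phi/2){-}\tan^2(\varphi_{AOB}/2)$ is nonnegative precisely when $\varphi_{AOB}{\leq}\phi$ (for the relevant angle range $0{\leq}\varphi_{AOB}/2{<}\pi/2$, since $\tan$ is increasing there and the hypothesis $\varphi_{AOB}{\leq}\phi$ with $\phi{<}\pi$ keeps us in range), matching the lemma's hypothesis; and $\cos(\varphi_{AOB}/2){\neq}0$ so $\sec(\varphi_{AOB}/2)$ is finite. I would also verify a degenerate check: when $\varphi_{AOB}{=}\phi$ the radical vanishes and $\textbf{v}$ reduces to the bisector scaled by $\sec(\phi/2)$, i.e. the unique direction placing both nodes on the boundary — consistent with geometric intuition.

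I expect the main obstacle to be the symmetry argument that kills the $\mu$ component cleanly. It is intuitively obvious (the problem is invariant under reflection through the plane $AOB$ combined with the swap $A{\leftrightarrow}B$), but writing it rigorously requires care about orientation: $\textbf{e}_{OB}{\times}\textbf{e}_{OA}$ changes sign under the swap while the difference direction $\textbf{e}_{OA}{-}\textbf{e}_{OB}$ also changes sign, so one must track signs consistently to conclude $\mu{=}0$ rather than merely $\mu^2$ being determined. An alternative that sidesteps this is purely computational: parametrize $\textbf{e}$ directly as a general linear combination of $\textbf{e}_{OA}$, $\textbf{e}_{OB}$, and $\textbf{e}_{OB}{\times}\textbf{e}_{OA}$ (a basis of $\mathbb{R}^3$ whenever $A,B,O$ are not collinear), impose the three scalar equations, and solve the resulting $3{\times}3$ system; this is more mechanical but avoids any symmetry subtlety, and I would fall back on it if the reflection argument gets delicate.
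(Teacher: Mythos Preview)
Your proof is correct and proceeds by a somewhat different route than the paper's. The paper builds an explicit synthetic construction: in the plane $\alpha$ determined by $O,A,B$ it draws the perpendiculars to $\overrightarrow{OA}$ at $A$ and to $\overrightarrow{OB}$ at $B$, names their intersection $E$ (which lies on the bisector of $\angle AOB$), erects the line through $E$ perpendicular to $\alpha$, and locates the two projection points $O_{B1},O_{B2}$ on that line; the coefficients $\sec(\varphi_{AOB}/2)$ and $\sqrt{\tan^2(\phi/2)-\tan^2(\varphi_{AOB}/2)}$ then fall out of right-triangle length computations ($|OE|$, $|BE|$, $|BO_{B1}|$, $|EO_{B1}|$). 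You instead decompose the unknown unit direction in the orthonormal frame $\{\textbf{b},\text{diff},\textbf{n}\}$ and solve the dot-product and norm constraints algebraically. Both approaches rest on the same geometric decomposition (bisector plus normal to the $AOB$ plane), but yours replaces the auxiliary-point construction with two lines of linear algebra; the paper's version is more pictorial and ties directly to Fig.~\ref{fig_end_direction}, while yours is shorter and needs no diagram.

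One minor remark: your anticipated ``main obstacle'' is not an obstacle at all. Subtracting the two scalar equations $\textbf{e}\cdot\textbf{e}_{OA}=\cos(\phi/2)$ and $\textbf{e}\cdot\textbf{e}_{OB}=\cos(\phi/2)$ gives $\textbf{e}\cdot(\textbf{e}_{OA}-\textbf{e}_{OB})=0$ directly, which is exactly $\mu=0$; no reflection or orientation bookkeeping is required, and the $3\times 3$ fallback is unnecessary.
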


\begin{proof}
This can be proved easily using geometric knowledge as depicted in Fig.~\ref{fig_end_direction}. Detailed proof is provided in Sec.II in the Appendix.
\end{proof}

\begin{figure}[!htbp]
\centering
\includegraphics[width=0.6\linewidth]{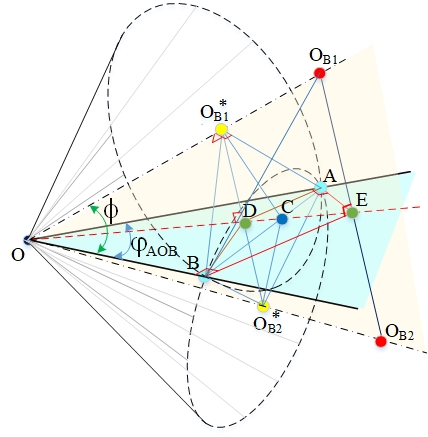}
\caption{Sketch for calculating the two critical directions covering two particular nodes \textit{A} and \textit{B}}
\label{fig_end_direction}
\vspace{-0.3cm}
\end{figure}
In context $\mathcal{C}_\text{ChrgPos}(O,A)$, where node $A$ is the \textit{reference node}, only the charging cones with node node $A$ on its boundary are considered. We create a plane perpendicular to $\overrightarrow{OA}$ while contains $A$, and denote it as $\beta$. We project the charging directions into plane $\beta$ and analyze their relationships in it. For a charging direction $\textbf{e}$, we denote the intersection point of the vector line $\textbf{v}$ (originating from $O$) and plane $\beta$ as point $O_v$, and call it the \textit{projection point} of $\textbf{e}$, and call the vector $\overrightarrow{AO_v}$ as $\textbf{v}$'s \textit{projected direction}. 

To facilitate describing the projected directions, we will setup a reference direction in plane $\beta$ and associate the projected directions with the corresponding angle values relative to the reference direction, with positive in counter-clockwise direction facing $\overrightarrow{OA}$. We denote the reference direction as $\textbf{e}_\text{REF}$. In this paper, we set $\textbf{e}_\text{REF}$ as the intersection line of plane $\beta$ with the horizontal plane with $z{=}A(z)$ towards right. It is easy to note that $\textbf{e}_\text{REF}$ must be perpendicular with both $\overrightarrow{OA}$ and the $z$ axis. So, additionally with the right hand rule for determining the direction of vector $\times$ operation, there is $\textbf{e}_\text{REF}{=}\overrightarrow{OA}{\times}[0,0,1]$. When these two planes are the same, $\textbf{e}_\text{REF}$ can be selected randomly in the plane, and this case is omitted for it is trivial. 

With a fixed $\textbf{e}_\text{REF}$, a direction in plane $\beta$ is associated with an angle in range $[0,2\pi)$. Thus, a charging direction $\textit{e}_{OX}$ is associated with its \textit{projection point} $O_X$, \textit{projected direction} $\overrightarrow{AO_X}$, and its \textit{projected angle} $\theta_{AO_X}$. In context $\mathcal{C}_\text{ChrgPos}(O,A)$, as there is a one-to-one map between a charging direction $\textbf{v}$ and its projected angle $\theta$, we treat $\mathcal{N}_\text{Cone}(O,\theta,\phi)$ and $\mathcal{N}_\text{Cone}(O,\textbf{v},\phi)$ as the same thing.

\begin{lemma}
\label{lemma_angle_range}
In context $\mathcal{C}_\text{ChrgPos}(O,A)$, given a reference direction $\textbf{e}_\text{REF}$ in plane $\beta$, a charging direction $\textbf{e}_{OX}$'s projection point $O_X$, projected direction $\overrightarrow{AO_X}$, and angle $\theta_{AO_X}$ are provided by the equations from Eq.~\eqref{Eq_proj_point} to Eq.~\eqref{Eq_proj_angle}, respectively. Given an angle $\theta_X$, the corresponding projected direction $\textbf{e}_{\theta_X}$ and projection point $O_{\theta_X}$ are given by Eq.~\eqref{Eq_angle2_dir} and Eq.~\eqref{Eq_angle2_point}, respectively.
\vspace{-0.5cm}
\begin{align}
O_X{=}&O{-}\frac{1}{\textbf{e}_{OX}{\cdot}\overrightarrow{OA}}\textbf{e}_{OX},\label{Eq_proj_point}\\
\overrightarrow{AO_X}{=}&\overrightarrow{AO}{-}\frac{1}{\textbf{e}_{OX}{\cdot}\overrightarrow{OA}}\textbf{e}_{X},\label{Eq_proj_dir}\\
|AO_X|{=}&\tan(\phi),\label{Eq_proj_dir_len}
\end{align}

\begin{align}
\theta_{AO_X}{=}&
\begin{cases}
\arccos(\frac{\overrightarrow{AO_X}{\cdot}\textbf{e}_\text{REF}}{|\overrightarrow{AO_X}||\textbf{e}_\text{REF}|}), & \text{if}\hspace{2mm}{AO_X}(z){\geq}0;\\
\arccos(\frac{\overrightarrow{AO_X}{\cdot}\textbf{e}_\text{REF}}{|\overrightarrow{AO_X}||\textbf{e}_\text{REF}|}){+}\pi, & \text{if}\hspace{2mm}{AO_X}(z){<}0.\\
\end{cases}\label{Eq_proj_angle}\\
\textbf{e}_{\theta_X}{=}&\cos(\phi){\cdot}\textbf{e}_\text{REF}{+}\cos(\theta{-}\pi/2){\cdot}\text{normalize}(\overrightarrow{OA}{\times}\textbf{e}_\text{REF}),\label{Eq_angle2_dir}\\
O_{\theta_X}{=}&A{+}\textbf{e}_{\theta_X}\tan{\phi}.\label{Eq_angle2_point}
\end{align}

\end{lemma}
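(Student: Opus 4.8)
The plan is to verify the five displayed formulas of Lemma~\ref{lemma_angle_range} one at a time, all within the fixed context $\mathcal{C}_\text{ChrgPos}(O,A)$, where plane $\beta$ is the plane through $A$ normal to $\overrightarrow{OA}$ and (after the node-projection of Eq.~\eqref{Eq_node_proj}) $|OA|=1$. The underlying geometric picture is: a charging direction $\textbf{e}_{OX}$ is a unit vector emanating from $O$; its supporting ray meets plane $\beta$ at the projection point $O_X$; the projected direction is $\overrightarrow{AO_X}$ inside $\beta$; and the projected angle $\theta_{AO_X}$ measures $\overrightarrow{AO_X}$ against the reference direction $\textbf{e}_\text{REF}=\overrightarrow{OA}\times[0,0,1]$, counter-clockwise when viewed along $\overrightarrow{OA}$.

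First I would establish Eq.~\eqref{Eq_proj_point}. Plane $\beta$ is $\{P:\overrightarrow{OP}\cdot\overrightarrow{OA}=|OA|^2=1\}$ (using $A$ on $\beta$ and $|OA|=1$). Writing the ray as $P(t)=O+t\,\textbf{e}_{OX}$ and imposing $\overrightarrow{OP(t)}\cdot\overrightarrow{OA}=1$ gives $t=1/(\textbf{e}_{OX}\cdot\overrightarrow{OA})$, hence $O_X=O+\textbf{e}_{OX}/(\textbf{e}_{OX}\cdot\overrightarrow{OA})$. I note the displayed Eq.~\eqref{Eq_proj_point} has a minus sign and Eq.~\eqref{Eq_proj_dir} has a stray $\textbf{e}_X$ rather than $\textbf{e}_{OX}$; I would present the clean derivation and remark that the sign/subscript in the statement should read $O_X=O+\tfrac{1}{\textbf{e}_{OX}\cdot\overrightarrow{OA}}\textbf{e}_{OX}$ and $\overrightarrow{AO_X}=\overrightarrow{AO}+\tfrac{1}{\textbf{e}_{OX}\cdot\overrightarrow{OA}}\textbf{e}_{OX}$, the latter being immediate by subtracting $A$ from both sides. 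For Eq.~\eqref{Eq_proj_dir_len}: in the right triangle $OAO_X$ the angle at $A$ is a right angle (since $\overrightarrow{AO_X}\subset\beta\perp\overrightarrow{OA}$), and the angle at $O$ between $\overrightarrow{OA}$ and $\overrightarrow{OO_X}=\overrightarrow{OX}$-direction is exactly the cone's half-aperture-type angle; here it equals $\phi$ because we are measuring the full opening used to define the projection radius, so $|AO_X|=|OA|\tan\phi=\tan\phi$.

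Next, Eq.~\eqref{Eq_proj_angle}: by definition $\theta_{AO_X}$ is the angle from $\textbf{e}_\text{REF}$ to $\overrightarrow{AO_X}$ in $\beta$, so $\cos\theta_{AO_X}=\tfrac{\overrightarrow{AO_X}\cdot\textbf{e}_\text{REF}}{|\overrightarrow{AO_X}||\textbf{e}_\text{REF}|}$, giving the value in $[0,\pi]$; the case split on the sign of the $z$-component of $\overrightarrow{AO_X}$ resolves the standard $[0,\pi]$-versus-$[0,2\pi)$ ambiguity, because $\textbf{e}_\text{REF}$ was fixed to lie in the $z=A(z)$ horizontal level and pointing "right," so the half-turn $+\pi$ correction applies precisely when $\overrightarrow{AO_X}$ dips below that level. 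Then Eqs.~\eqref{Eq_angle2_dir}--\eqref{Eq_angle2_point} are the inverse map: given $\theta$, the unit projected direction in $\beta$ should be $\cos\theta\,\textbf{e}_\text{REF}+\sin\theta\,\text{normalize}(\overrightarrow{OA}\times\textbf{e}_\text{REF})$, i.e. an orthonormal-frame expansion of $\beta$ using $\{\textbf{e}_\text{REF},\,\overrightarrow{OA}\times\textbf{e}_\text{REF}\}$; I would point out that the displayed Eq.~\eqref{Eq_angle2_dir} as written (with $\cos\phi\,\textbf{e}_\text{REF}$ and $\cos(\theta-\pi/2)$) is a typo for this orthonormal expansion and should read with $\cos\theta$ in the first coefficient and $\sin\theta$ (i.e. $\cos(\theta-\pi/2)$) in the second, and then $O_{\theta_X}=A+\textbf{e}_{\theta_X}\tan\phi$ follows from Eq.~\eqref{Eq_proj_dir_len}.

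I expect the main obstacle to be not the algebra but pinning down the orientation/sign conventions so that the claimed formulas are literally correct: which of $O$ or $A$ is the base point, the direction of $\textbf{e}_\text{REF}$ (and hence the sign inside the cross product $\overrightarrow{OA}\times[0,0,1]$ versus $[0,0,1]\times\overrightarrow{OA}$), whether "counter-clockwise facing $\overrightarrow{OA}$" matches the right-hand rule applied to $\{\textbf{e}_\text{REF},\overrightarrow{OA}\times\textbf{e}_\text{REF}\}$, and whether the relevant opening angle in Eq.~\eqref{Eq_proj_dir_len} is $\phi$ or $\phi/2$. Because the excerpt's displayed equations contain at least the sign and $\cos/\sin$ slips noted above, the honest proof is: set up the orthonormal frame $\{\textbf{e}_\text{REF},\textbf{e}_\text{REF}^{\perp}:=\text{normalize}(\overrightarrow{OA}\times\textbf{e}_\text{REF}),\overrightarrow{OA}\}$ once, derive every quantity by projecting onto it, and then read off each formula, flagging the corrected signs. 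I would relegate the fully expanded coordinate computations to the Appendix, as the paper already does for the neighbouring lemmas.
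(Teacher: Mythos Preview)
Your approach is essentially the same as the paper's: both intersect the ray $O+t\,\textbf{e}_{OX}$ with the plane $\beta=\{X:\overrightarrow{OA}\cdot(X-A)=0\}$ to obtain $O_X$, use the right triangle $OAO_X$ for $|AO_X|$, disambiguate $\arccos$ via the sign of $\overrightarrow{AO_X}(z)$, and expand the inverse map in the orthonormal basis $\{\textbf{e}_\text{REF},\,\textbf{e}_\text{AUX}:=\text{normalize}(\overrightarrow{OA}\times\textbf{e}_\text{REF})\}$ of $\beta$. Your typo flags (the sign in Eq.~\eqref{Eq_proj_point}, $\textbf{e}_X$ versus $\textbf{e}_{OX}$ in Eq.~\eqref{Eq_proj_dir}, and $\cos\phi$ versus $\cos\theta_X$ in Eq.~\eqref{Eq_angle2_dir}) are well taken --- the paper's own Appendix derivation confirms your intended formulas, since its text explicitly computes the $\textbf{e}_\text{REF}$-component as $\cos(\theta_X)\cdot\textbf{e}_\text{REF}$ before the same slip reappears in its displayed equation.
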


\begin{proof}
This can be proved easily with the help of Fig.~\ref{fig_ref_range}.  Detailed proof is provided in Sec.III in the Appendix.
\end{proof}

\begin{figure}[!htbp]
\centering
\includegraphics[width=0.6\linewidth]{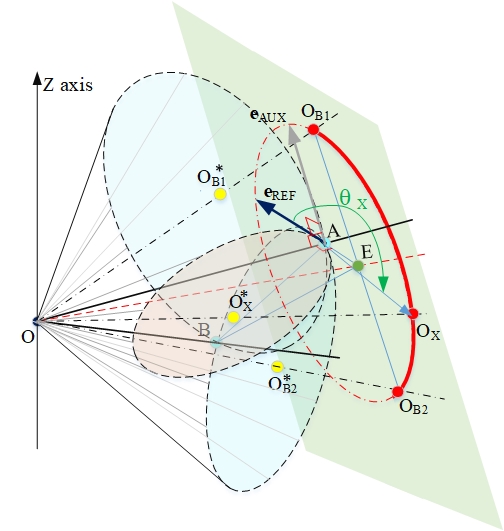}
\caption{Sketch of the range of the projected angles for covering node $B$ when node $A$ is the reference node}
\label{fig_ref_range}
\vspace{-0.6cm}
\end{figure}

\subsection{Determine Local-Maximum Direction Angle Ranges}

In context $\mathcal{C}_\text{ChrgPos}(O,A)$, $A$ is the reference node. If $\mathcal{N}_\text{Cone}(O,\overrightarrow{OA},2\phi){=}\{A\}$, it implies that there are no cone directions that can simultaneously cover $A$ with any other node. In this trivial case, to assure covering node $A$, we select $\textbf{e}_\text{OA}$ into the final $\mathcal{S}_{\text{DirRep}}$.

Now let focus on the the case  $\{A\} \subset \mathcal{N}_{\text{Cone}}(O,\overrightarrow{OA},2\phi)$. For any node $B{\in}\mathcal{N}_\text{Cone}(O,\overrightarrow{OA},2\phi)$, let denote the two directions obtained in Lemma~\ref{lemma_end_point} as $\textbf{e}_\text{OB1}$ and $\textbf{e}_{\text{OB2}}$, then we can obtain their projection points $O_\text{B1}$ and $O_\text{B2}$, projected angles $\theta_\text{B1}$ and $\theta_\text{B2}$ using Lemma~\ref{lemma_angle_range}. Then, as shown in Fig.~\ref{fig_ref_range}, node $B$ is covered by any charging direction with projected angle in range $[\theta_\text{B1}$, $\theta_\text{B2}]$, i.e., the charging direction's projection point lies in the arc between $O_\text{B1}$ and $O_\text{B2}$ on the circle centered at $A$ with radius $\csc{\phi}$. We call this angle range as B's charging direction projected angle range, and call $\theta_\text{B1}$ and $\theta_\text{B2}$ as its \textit{start angle} and the \textit{end angle}, respectively. Following the same way, we generate the CPD  angle ranges for all nodes in $\mathcal{N}_\text{Cone}(O,\overrightarrow{OA},2\phi)$. Fig.~\ref{fig_ranges} provides an illustration showing multiple angle ranges, where for distinguishing overlapped angle ranges, the ranges are demonstrated on multiple circles with different radius. 

\begin{figure}[!htbp]
\centering
\includegraphics[width=0.55\linewidth]{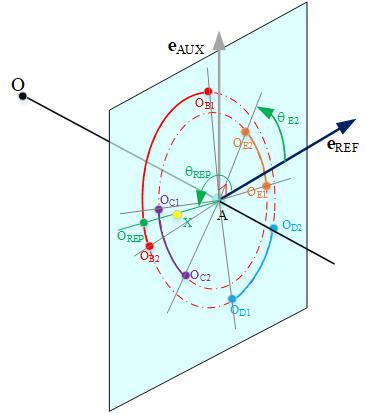}
\caption{Sketch map of multiple projected angle ranges}
\label{fig_ranges}
\end{figure}

Given a node $X$, two charging directions can be determined according to Lemma~\ref{lemma_end_point}, from which the corresponding angular range $[\theta_{X1}, \theta_{X2}]$ is derived. Although $\theta_{X1}$ is typically smaller than $\theta_{X2}$, there are cases where $\theta_{X1} {\geq} \theta_{X2}$, indicating that the angular interval crosses the reference vector $\mathbf{e}_{\text{REF}}$. 

Now we exploit the angle ranges for the nodes in $\mathcal{N}_\text{Cone}(O,\overrightarrow{OA},2\phi)$ for facilitating the determination of LMax-SCN directions. For each projected angle range $[\theta_\text{B1},\theta_\text{B2}]$,
$B{\in}\mathcal{N}_\text{Cone}(O,\overrightarrow{OA},2\phi)$, we represent each of the two endpoints with a tuple having structure as

\begin{equation}
\label{Eq_endpoint_tuple}
l_\text{Bi}{:=}(\theta,\delta,\eta,\sigma,\tau), \hspace{5mm}i{\in}\{1,2\},
\end{equation}
where $\theta$ stores the angle value, $\delta{\in}\{0,1\}$ is a binary variable for indicating whether $\delta$ is a start angle or an end angle. $\delta{=}0$ means a start angle, otherwise it is an end angle. $\eta$ stores the node, which is $B$ here. $\sigma{=}\theta_\text{B2}{-}\theta_\text{B1}$ records the range size. $\tau{=}[\theta_\text{B1},\theta_\text{B2}]$ stores the original angle range. These information are used in refining charging directions.

We collect and sort all the tuples in ascending order of the $\theta$ value into a list $L_\text{AngRng}{=}[l_1,l_2,\ldots,l_\kappa]$ with $\kappa{=}2*|\mathcal{N}_\text{Cone}(O,\overrightarrow{OA},2\phi)|$. Let assume that the fields of the tuples can be accessed using point operation, e.g., $l_i.\theta$ means the $l_i$'s $\theta$ value.

In the context of $\mathcal{C}_\text{ChrgPos}(O,A)$, by exploiting the projected angles of all feasible charging directions, we can construct the set $\mathcal{S}_\text{LMaxRng}(O,A)$ using Eq.~\eqref{Eq_lmax_angle_range}, which stores the angular ranges associated with potential LMax-SCN sets. Since the projected angles lie on a circular domain $[0,2\pi)$, the ordering of the tuples in $\mathcal{L}_\text{AngRng}$ induces not only consecutive start–end pairs within the list but also a wrap-around pair between the last tuple $l_\kappa$ and the first tuple $l_1$. This ensures that angle intervals spanning across the boundary between $2\pi$ and $0$ are also properly represented.

\begin{equation}
\label{Eq_lmax_angle_range}
\begin{aligned}
\mathcal{S}_\text{LMaxRng}(O,A)
= &\;\{(l_i, l_{i+1}) \mid l_i.\delta = 0,\; l_{i+1}.\delta = 1,\; 1 \le i < \kappa \} \\
&\;\cup\; \{(l_\kappa, l_1) \mid l_\kappa.\delta = 0,\; l_1.\delta = 1\}.
\end{aligned}
\end{equation}

Lemma~\ref{lemma_lmax_angle_range} and Lemma~\ref{lemma_lmax_angle_viseversa} show that, in the context of $\mathcal{C}_\text{ChrgPos}(O,A)$, each angle range in $\mathcal{S}_\text{LMaxRng}(O,A)$ corresponds to an LMax-SCN set, which can be obtained by checking the projected angle ranges of nodes in $\mathcal{N}_\text{ChrgPos}(O, O_A, \phi)$. Such an interval is referred to as an \emph{LMax-SCN angle range}, and $l_a.\eta$ and $l_b.\eta$ are termed the critical nodes associated with the angle range $[l_a.\theta,\, l_b.\theta]$.

\begin{lemma}
\label{lemma_lmax_angle_range}
In context $\mathcal{C}_\text{ChrgPos}(O,A)$, for any two projected angles $\theta_1,\theta_2{\in}[l_a.\theta,l_b.\theta]$ with $(l_a,l_b){\in}\mathcal{S}_\text{LMaxRng}(O,A)$, there is $\mathcal{N}_\text{Cone}(O,\theta_1,\phi)$ ${=}\mathcal{N}_\text{Cone}(O,\theta_2,\phi)$.
\end{lemma}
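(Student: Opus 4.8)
The plan is to reduce the claim to a one‑dimensional bookkeeping fact: as the projected angle moves inside an LMax‑SCN angle range, no node changes its coverage status, so the covered set stays constant. First I would record the coverage characterization already established in the paragraphs preceding the lemma. In context $\mathcal{C}_\text{ChrgPos}(O,A)$ a valid charging direction is faithfully encoded by its projected angle $\theta\in[0,2\pi)$; node $A$ lies on every such cone's boundary and is therefore always covered; and any node $E$ covered by a cone with $A$ on its boundary satisfies $\varphi_{AOE}\le\phi$, hence $E\in\mathcal{N}_\text{Cone}(O,\overrightarrow{OA},2\phi)$. Combining this with Lemma~\ref{lemma_end_point} and the fact noted in the discussion preceding this lemma that a node $B\neq A$ is covered exactly for projected angles in the arc $[\theta_{B1},\theta_{B2}]$ delimited by its two boundary directions, I obtain
\[
\mathcal{N}_\text{Cone}(O,\theta,\phi)=\{A\}\cup\bigl\{\,B\in\mathcal{N}_\text{Cone}(O,\overrightarrow{OA},2\phi)\setminus\{A\}\ :\ \theta\in[\theta_{B1},\theta_{B2}]\,\bigr\},
\]
so the covered set at $\theta$ is completely determined by which of the arcs $[\theta_{B1},\theta_{B2}]$ contain $\theta$.

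Next I would invoke the defining property of $\mathcal{S}_\text{LMaxRng}(O,A)$ from Eq.~\eqref{Eq_lmax_angle_range}: for $(l_a,l_b)\in\mathcal{S}_\text{LMaxRng}(O,A)$ the tuples $l_a$ and $l_b$ are adjacent in the sorted list $L_\text{AngRng}$, so no endpoint tuple $l_{Bi}$ — equivalently, no endpoint $\theta_{B1}$ or $\theta_{B2}$ of any node's arc — has $\theta$‑value strictly inside the arc $(l_a.\theta,l_b.\theta)$. Fixing $\theta_1,\theta_2\in[l_a.\theta,l_b.\theta]$ and a node $B\neq A$, I would argue $\theta_1\in[\theta_{B1},\theta_{B2}]\Leftrightarrow\theta_2\in[\theta_{B1},\theta_{B2}]$: since $B$'s covering set is a connected proper arc, differing truth values at $\theta_1$ and $\theta_2$ would force an endpoint of that arc to lie on the sub‑arc from $\theta_1$ to $\theta_2$ contained in $[l_a.\theta,l_b.\theta]$; as no such endpoint lies strictly inside $(l_a.\theta,l_b.\theta)$, it would have to equal $l_a.\theta$ or $l_b.\theta$, and then $l_a.\delta=0$ (a start), $l_b.\delta=1$ (an end), together with the convention that cone boundaries are covered, keep the two truth values equal. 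Intersecting over all $B$ and adjoining the always‑covered node $A$ yields $\mathcal{N}_\text{Cone}(O,\theta_1,\phi)=\mathcal{N}_\text{Cone}(O,\theta_2,\phi)$. The wrap‑around pair $(l_\kappa,l_1)$ needs no separate treatment, since the entire argument lives on the circle $[0,2\pi)$.

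The step I expect to require the most care is the treatment of ties — several endpoint tuples sharing one $\theta$‑value — together with the half‑open conventions governing when a node on a cone boundary, or an angle at an interval endpoint, counts as covered. I would pin these down by fixing a secondary sort key for $L_\text{AngRng}$ (for instance, placing all start tuples before all end tuples at a common angle), so that any $(l_a,l_b)\in\mathcal{S}_\text{LMaxRng}(O,A)$ with $l_a.\theta=l_b.\theta$ gives a degenerate single‑point range on which the lemma is trivial, while every nondegenerate LMax‑SCN range $[l_a.\theta,l_b.\theta]$ genuinely contains no node‑arc endpoint other than its own two ends. With that convention the coverage equivalence extends cleanly from the open arc to the closed arc precisely because cone boundaries are declared covered, and the rest is the routine bookkeeping above.
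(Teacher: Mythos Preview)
Your proposal is correct and follows essentially the same approach as the paper: both arguments observe that, by construction of $\mathcal{S}_\text{LMaxRng}(O,A)$, no node's critical (start/end) angle lies strictly inside $[l_a.\theta,l_b.\theta]$, so no node enters or leaves the cone as the projected angle varies there, and the boundary convention handles the endpoints. Your treatment is simply more explicit about the node-by-node bookkeeping and the tie/boundary conventions than the paper's terse version.
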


\begin{proof}
Proof is provided in Sec.IV in the Appendix. 
\end{proof}

\begin{lemma}
\label{lemma_lmax_angle_viseversa}
In context $\mathcal{C}_\text{ChrgPos}(O,A)$, for any projected angle $\theta_1{\in}[l_a.\theta,l_b.\theta]$ with $(l_a,l_b){\in}\mathcal{S}_\text{LMaxRng}(O,A)$, $\mathcal{N}_\text{Cone}(O,\theta_1,\phi)$ is an LMax-SCN set. Conversely, for any LMax-SCN set $s_1$ in context $\mathcal{C}_\text{ChrgPos}(O,A)$, there must be an item $(l_a,l_b){\in}\mathcal{S}_\text{LMaxRng}(O,A)$ in which, for any charging direction $\theta_1{\in}[l_a.\theta,l_b.\theta]$, there is $\mathcal{N}_\text{Cone}(O,\theta_1,\phi){=}s_1$.
\end{lemma}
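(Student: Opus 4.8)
The plan is to prove both directions of the equivalence by working directly with the critical-angle structure encoded in the sorted list $\mathcal{L}_\text{AngRng}$. For the forward direction, fix $(l_a,l_b)\in\mathcal{S}_\text{LMaxRng}(O,A)$ and any $\theta_1\in[l_a.\theta,l_b.\theta]$. By Lemma~\ref{lemma_lmax_angle_range} the covered set $\mathcal{N}_\text{Cone}(O,\theta_1,\phi)$ is constant on this interval, so it suffices to show it is an LMax-SCN set, i.e.\ that no charging direction strictly enlarges it. I would argue by contradiction: suppose some direction $\textbf{v}'$ (projected angle $\theta'$, when $A$ remains on the boundary — recall we are in context $\mathcal{C}_\text{ChrgPos}(O,A)$) satisfies $\mathcal{N}_\text{Cone}(O,\theta_1,\phi)\subsetneq\mathcal{N}_\text{Cone}(O,\theta',\phi)$. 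Then there is a node $C$ covered at $\theta'$ but not at any angle in $[l_a.\theta,l_b.\theta]$. By the construction of the endpoint tuples (Eq.~\eqref{Eq_endpoint_tuple}) and Lemma~\ref{lemma_angle_range}, $C$'s projected angle range $[\theta_{C1},\theta_{C2}]$ contains $\theta'$ but is disjoint from $[l_a.\theta,l_b.\theta]$; moving continuously from $\theta'$ to $\theta_1$ one must cross an endpoint of $C$'s range, hence a tuple in $\mathcal{L}_\text{AngRng}$, and more carefully one must cross a start-tuple just after $l_b$ or an end-tuple just before $l_a$ — but the definition of $\mathcal{S}_\text{LMaxRng}$ says $l_{a}$ is immediately preceded (cyclically) by an end-tuple and $l_b$ immediately followed by a start-tuple, so the path from $\theta'$ into $[l_a.\theta,l_b.\theta]$ would have to lose at least as many nodes as it gains. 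I would turn this ``lose as many as you gain'' intuition into a clean counting/monotonicity statement: as $\theta$ increases past a start-tuple the covered set gains exactly that node, past an end-tuple it loses exactly that node, and $l_a,l_b$ being an adjacent start/end pair forces the covered set on $[l_a.\theta,l_b.\theta]$ to be a local maximum of $|\mathcal{N}_\text{Cone}(O,\cdot,\phi)|$ among the candidate angles, and in fact maximal with respect to inclusion.

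For the converse, let $s_1$ be any LMax-SCN set in context $\mathcal{C}_\text{ChrgPos}(O,A)$. By definition $s_1=\mathcal{N}_\text{Cone}(O,\theta^*,\phi)$ for some projected angle $\theta^*$ with $A$ on the boundary, and no direction strictly contains it. I would locate $\theta^*$ inside the list: let $l_a$ be the last tuple at or before $\theta^*$ in the cyclic order and $l_b$ the first tuple at or after $\theta^*$. The claim is $(l_a,l_b)\in\mathcal{S}_\text{LMaxRng}(O,A)$, i.e.\ $l_a.\delta=0$ and $l_b.\delta=1$. If $l_a.\delta=1$ (an end-tuple) then decreasing $\theta^*$ slightly past $l_a$ would add the node $l_a.\eta$ back into the cover without removing anything (since no other tuple lies strictly between), producing a strict superset of $s_1$, contradicting LMax-maximality; symmetrically $l_b.\delta=0$ is impossible. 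Hence $(l_a,l_b)\in\mathcal{S}_\text{LMaxRng}(O,A)$, and by Lemma~\ref{lemma_lmax_angle_range}, $\mathcal{N}_\text{Cone}(O,\theta_1,\phi)=\mathcal{N}_\text{Cone}(O,\theta^*,\phi)=s_1$ for every $\theta_1\in[l_a.\theta,l_b.\theta]$. A degenerate subcase to handle: $\theta^*$ coincides with a tuple angle, or several tuples share the same angle — then I would note that a node on the cone boundary counts as covered, so ties can be broken by placing start-tuples before end-tuples at equal angles, which keeps the argument intact.

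The main obstacle I anticipate is the cyclic/wrap-around bookkeeping: the interval $[l_a.\theta,l_b.\theta]$ may straddle the $2\pi\!\to\!0$ seam (this is exactly the $\theta_{X1}\ge\theta_{X2}$ case flagged before Eq.~\eqref{Eq_endpoint_tuple}), so ``the tuple immediately after $l_b$'' must be read modulo $\kappa$, and the monotonicity argument ``gain a node at a start-tuple, lose one at an end-tuple'' has to be stated on the circle rather than on a line. I would handle this by fixing once and for all the cyclic successor/predecessor notation on $\mathcal{L}_\text{AngRng}$ and checking that $\mathcal{S}_\text{LMaxRng}(O,A)$ as defined in Eq.~\eqref{Eq_lmax_angle_range} already includes the wrap pair $(l_\kappa,l_1)$, so no case is lost. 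A secondary subtlety is verifying that between an adjacent start-tuple $l_a$ and end-tuple $l_b$ there is genuinely no other tuple — this is immediate from $l_a,l_b$ being consecutive in the sorted list, which is precisely the indexing condition $l_{i}.\delta=0,\ l_{i+1}.\delta=1$ in Eq.~\eqref{Eq_lmax_angle_range}. Once the cyclic setup is fixed, both directions reduce to the elementary ``a set between an adjacent open/close pair of nested interval endpoints is inclusion-maximal'' fact, so I expect the proof to be short.
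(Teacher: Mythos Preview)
Your proposal is correct and follows essentially the same approach as the paper: the converse direction is virtually identical (locate $\theta^*$ between consecutive tuples and contradict LMax-maximality if $l_a.\delta=1$ or $l_b.\delta=0$ by the same ``crossing would add a node'' argument), and the forward direction rests on the same start/end-flag structure, though the paper argues more tersely---it simply observes that moving left past $l_a.\theta$ drops $l_a.\eta$ (since $l_a.\delta=0$) and moving right past $l_b.\theta$ drops $l_b.\eta$ (since $l_b.\delta=1$), and concludes LMax-SCN directly. Your more elaborate tracking of a hypothetical extra node $C$, the crossing-count intuition, and the wrap-around/tie-breaking bookkeeping are sound refinements of the same idea rather than a different route; a slightly cleaner shortcut than your counting argument is to note that any $\theta'$ covering a superset of $s_1$ must in particular cover both critical nodes $l_a.\eta$ and $l_b.\eta$, whose coverage intervals start at $l_a.\theta$ and end at $l_b.\theta$ respectively, forcing $\theta'\in[l_a.\theta,l_b.\theta]$ immediately.
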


\begin{proof}
Proof is provided in Sec.V in the Appendix. 
\end{proof}

Fig.~\ref{fig_range_bars} illustrates the determination of 4 tuple items in $\mathcal{S}_\text{LMaxRng}(O,A)$, where each bar with light blue background represents an item $(l_a,l_b)$. In this figure, the four tuples items in $\mathcal{S}_\text{LMaxRng}(O,A)$ are $(l_3,l_4)$, $(l_5,l_6)$, $(l_8,l_9)$ and $(l_{11},l_{12})$, which correspond to LMax-SCN angle ranges of $[\theta_\text{C1},\theta_\text{B2}]$, $[\theta_\text{D1},\theta_\text{C2}]$, $[\theta_\text{F1},\theta_\text{D2}]$ and $[\theta_\text{G1},\theta_\text{G2}]$, respectively. Each of these angle ranges corresponds to an LMax-SCN node set. To be specific, the LMax-SCN node sets corresponding to these angle ranges are $\{B,C,E\}$, $\{C,D,E\}$, $\{D,F\}$ and $\{G\}$, respectively.
 
\begin{figure}[!htbp]
\centering
\includegraphics[width=0.99\linewidth]{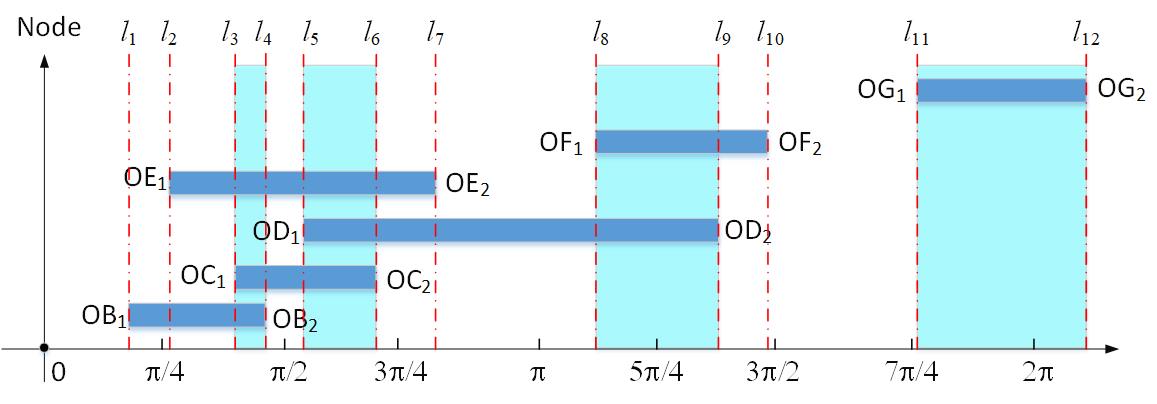}
\caption{Sketch map for determining LMax-SCN angle ranges in $\mathcal{S}_\text{LMaxRng}(O,A)$.}
\label{fig_range_bars}
\vspace{-0.6cm}
\end{figure}

\subsection{Select and Refine Representative Directions for Each LMax-SCN Angle Range}

As the projected angles in a $(l_a,l_b){\in}\mathcal{S}_\text{LMaxRng}(O,A)$ cover the same node set, the charging directions are FuncEqv with each other. As a basic idea, we want to make the covered nodes near to the enter-line of the charge cone as much as possible. Here we select a angle using Eq.~\eqref{Eq_angle_range_rep_equ} as the representation of the angle range $[l_a.\theta, l_b.\theta]$. Fig.~\ref{fig_repdir_alter} illustrates a projected angle $\theta_\text{REP}$ representing the angle range $[\theta_\text{C1},\theta_\text{B2}]$.  

\begin{equation}
\label{Eq_angle_range_rep_equ}
\frac{\theta_\text{REP}{-}l_a.\theta}{l_b.\theta-\theta_\text{REP}}{=}\frac{l_a.\tau}{l_b.\tau}
\end{equation}

\begin{figure}[!htbp]
\centering
\includegraphics[width=0.55\linewidth]{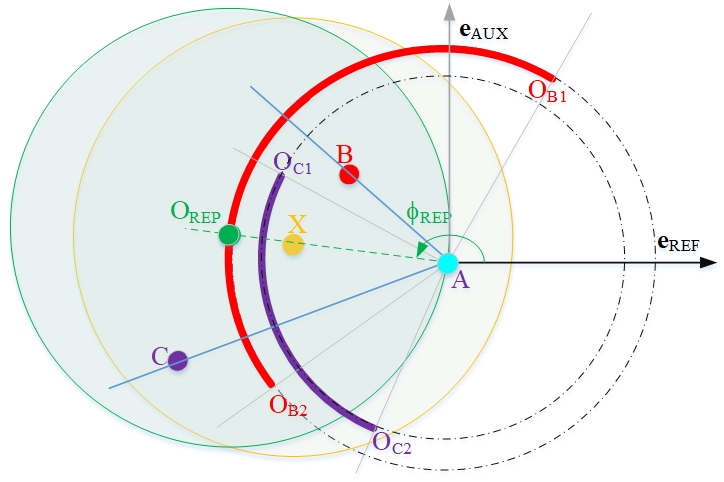}
\caption{Sketch map for selecting and refining the representative charging direction for an LMax-SCN angle range.}
\label{fig_repdir_alter}
\end{figure}

Eq.~\eqref{Eq_angle_range_rep_equ} leads to 

\begin{equation}
\label{Eq_angle_range_rep}
\theta_\text{REP}{=}\frac{(l_b.\theta){\cdot}(l_a.\tau){+}(l_a.\theta){\cdot}(l_b.\tau)}{(l_a.\tau){+}(l_b.\tau)}
\end{equation}

Eq.~\eqref{Eq_angle_range_rep_equ} as the representative of the angle range $[l_a.\theta, l_b.\theta]$, which yields the result in Eq.~\eqref{Eq_angle_range_rep}. Eq.~\eqref{Eq_angle_range_rep} is valid for tuples $[l_a.\theta, l_b.\theta]$ where $l_b.\theta {>} l_a.\theta$. When $l_b.\theta {<} l_a.\theta$, we can update $l_b.\theta {=} l_b.\theta + 2\pi$ as a preprocessing operation, and the resulting $\theta_\text{REP}$ obtained using Eq.~\eqref{Eq_angle_range_rep} should be subtracted by $2\pi$ to ensure $\theta_\text{REP} {\in} [0, 2\pi)$. 

In context $\mathcal{C}_\text{ChrgPos}(O,A)$, the reference node $A$ is on the boundary of all charging cones with protected angles in range $[0,2\pi)$, including the selected $\theta_\text{REP}$. We denote the projection point corresponding to $\theta_\text{REP}$ as $O_\text{REP}$, which can be obtained using Eq.~\eqref{Eq_angle2_point} in Lemma~\ref{lemma_angle_range}. We then alter the direction $\theta_\text{REP}$ by moving node $A$ inward the charging cone, meanwhile ensuring that no node leaves out of the charging cone. To this end, we make a one dimensional search on the line segment $AO_\text{REP}$ from point $O_\text{REP}$ towards $A$ discretely, and select the one satisfying Eq.~\eqref{Eq_angle_rep_opt} as a final representative direction $\textbf{e}_\text{REP}{:=}\text{normalize}(\overrightarrow{OX})$ for the corresponding angle range.

\begin{subequations}
\label{Eq_angle_rep_opt}
\begin{align}
\arg{\hspace{0.2mm}}\min\limits_{\overrightarrow{OX}}{\hspace{0.2mm}}&\max\limits_{Y{\in}\mathcal{N}_\text{Cone}(O,\overrightarrow{OX},\phi)}
\angle{XOY}{\hspace{0.2mm}}{=}\arccos\left(\frac{\overrightarrow{OX}{\cdot}\overrightarrow{OY}}{|\overrightarrow{OX}||\overrightarrow{OY}|}\right), \label{Eq_angle_rep_opt1}\\
    \mathrm{s.t.}\quad  & X{\in}[O_\text{REP},A];\label{Eq_angle_rep_opt2}\\\
    &\mathcal{N}_\text{Cone}(O,\overrightarrow{OX},\phi){\supseteq}\mathcal{N}_\text{Cone}(O,\overrightarrow{OO_\text{REP}},\phi);\label{Eq_angle_rep_opt3}
\end{align}
\end{subequations}

Till now, for a charging point $O$ and reference node $A$, we obtain a set of representative directions with each covers an LMax-SCN set. Let $s_\text{REP}(O,A)$ be the set of representative directions corresponding to the LMax-SCN angle ranges in $\mathcal{S}_\text{LMaxRng}(O,A)$. For each $\textbf{e}_\text{REP}$ and its corresponding $\mathcal{S}_\text{Cone}(O,\textbf{e}_\text{REP},\phi)$, we create a tuple $(\textbf{v}_\text{Dir},S_\text{Cone})$ with $\textbf{v}_\text{Dir}{=}\textbf{e}_\text{REP}$ and $S_\text{Cone}{=}\mathcal{S}_\text{Cone}(O,\textbf{e}_\text{REP},\phi)$, and collect all such tuples into a set $\mathcal{S}_\text{DirRep}(O,A)$, i.e.,

\begin{align}
\label{Eq_construct_posref_dir}
\mathcal{S}_\text{DirRep}(O,A){:=}
\{(\textbf{e}_\text{REP},\mathcal{S}_\text{Cone}(O,\textbf{e}_\text{REP},\phi))| \textbf{e}_\text{REP}{\in}s_\text{REP}(O,A)\}. 
\end{align}
\vspace{-0.8cm}

\subsection{Build FuncEqv Representative Direction Set for a Charging Position}

Above analyses are conducted in the context $\mathcal{C}_\text{ChrgPos}(O,A)$.
For a certain charging position $O$, we use each node $A{\in}\mathcal{N}_\text{Sphere}(O)$ as a reference node to obtain its representative direction set $\mathcal{S}_\text{DirRep}(O,A)$, and combine them into a larger set using Eq.~\eqref{Eq_construct_pos_repset}.

\begin{equation}
\label{Eq_construct_pos_repset}
\mathcal{S}_\text{DirRep}(O){:=}\mathop{\cup}\limits_{A{\in}\mathcal{N}_\text{Sphere}(O)}\mathcal{S}_\text{DirRep}(O,A). 
\end{equation}

Each charging direction in $\mathcal{S}_\text{DirRep}(O,A)$ corresponds to an LMax-SCN set. This statement is valid in the context with a certain reference node, e.g., $\mathcal{C}_\text{ChrgPos}(O,A)$. In the larger context $\mathcal{C}_\text{ChrgPos}(O)$, it may become invalid. In $\mathcal{S}_\text{DirRep}(O)$, there may be some charging directions that are functional better than some other directions. In other words, $\mathcal{C}_\text{ChrgPos}(O)$ may contain some non-LMax-SCN directions. We name this issue as \textit{non-LMax-SCN direction issue}. The following lemma help us to refine $\mathcal{S}_\text{DirRep}(O)$ by checking and removing non-LMax-SCN direction items, as utilized in code line \ref{alg_line_remove_non_max} in Alg.~\ref{alg_cmfeds}.

\begin{lemma}
\label{lemma_contexto_nonlmax_remove}
In context $\mathcal{C}_\text{ChrgPos}(O)$, for any two nodes $A,B{\in}\mathcal{N}_\text{Sphere}(O)$, the non-LMax-SCN direction issue may happen between $\mathcal{S}_\text{DirRep}(O,A)$ and $\mathcal{S}_\text{DirRep}(O,B)$ if and only if $\varphi_{AOB}{\leq}\phi$. 
\end{lemma}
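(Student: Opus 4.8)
The plan is to prove both directions of the biconditional by reasoning about when a charging direction in one reference-node family can strictly outperform a direction in the other family. The key observation is that a direction $\textbf{v}_A\in\mathcal{S}_\text{DirRep}(O,A)$ has node $A$ on the boundary of its cone, and similarly a direction $\textbf{v}_B\in\mathcal{S}_\text{DirRep}(O,B)$ has $B$ on the boundary of its cone. The non-LMax-SCN direction issue between the two families means there exist $\textbf{v}_A,\textbf{v}_B$ with $\mathcal{N}_\text{Cone}(O,\textbf{v}_A,\phi)\subsetneq\mathcal{N}_\text{Cone}(O,\textbf{v}_B,\phi)$ (or vice versa), i.e.\ one cone's covered node set is strictly contained in the other's, so that in the merged context $\mathcal{C}_\text{ChrgPos}(O)$ one of them ceases to be an LMax-SCN direction even though it was locally maximal in its own single-reference context.

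For the ``only if'' direction, I would argue the contrapositive: suppose $\varphi_{AOB}>\phi$. Then no single charging cone of apex angle $\phi$ rooted at $O$ can contain both $A$ and $B$ (since the cone's angular half-width is $\phi/2$, any two covered nodes subtend an angle at most $\phi$ at $O$). Now take any $\textbf{v}_A\in\mathcal{S}_\text{DirRep}(O,A)$ and $\textbf{v}_B\in\mathcal{S}_\text{DirRep}(O,B)$: the cone of $\textbf{v}_A$ contains $A$ but cannot contain $B$, and the cone of $\textbf{v}_B$ contains $B$ but cannot contain $A$. Hence neither $\mathcal{N}_\text{Cone}(O,\textbf{v}_A,\phi)\subseteq\mathcal{N}_\text{Cone}(O,\textbf{v}_B,\phi)$ nor the reverse inclusion can hold (each set contains a node the other misses), so no $\rhd$, $\lhd$, or $\equiv$ relation can arise between the two families — the non-LMax-SCN issue cannot happen.

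For the ``if'' direction, assume $\varphi_{AOB}\le\phi$. Then by Lemma~\ref{lemma_end_point} there exists a charging direction $\textbf{e}_\text{O,Bound}(A,B)$ with both $A$ and $B$ exactly on the cone boundary, so the set of directions covering $\{A,B\}$ simultaneously is nonempty. I would then exhibit a concrete configuration of a third node (or more generally argue existence of node placements consistent with the hypothesis) so that the LMax-SCN direction produced in context $\mathcal{C}_\text{ChrgPos}(O,A)$ covers a strictly smaller node set than some LMax-SCN direction produced in context $\mathcal{C}_\text{ChrgPos}(O,B)$ — for instance, a direction through the family of $A$ that is pinned by $A$ on the boundary and covers only $\{A,B\}$, versus a direction in the family of $B$ that, not being constrained to keep $A$ on the boundary, can sweep to also cover an additional node $C$ lying just outside $A$'s reach. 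The point is that the extra boundary constraint ``$A$ on the cone boundary'' in the $A$-family is genuinely restrictive once $\varphi_{AOB}\le\phi$, so the $B$-family (which only pins $B$) can produce a strictly better direction covering $\{A,B,\dots\}\supsetneq\{A,B\}$. This realizes the non-LMax-SCN direction issue.

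The main obstacle I anticipate is the ``if'' direction: showing that whenever $\varphi_{AOB}\le\phi$ the issue \emph{can} occur requires either constructing an explicit witness (a specific third node position, or even just the two nodes if degenerate coverage suffices) or carefully characterizing which LMax-SCN directions the two reference-node procedures generate and comparing their covered sets. I would lean on the geometry already developed — Lemma~\ref{lemma_end_point} for the two-node boundary direction and Lemma~\ref{lemma_angle_range} for the projected-angle machinery — to make the comparison precise, paying attention to boundary/degenerate cases (e.g.\ when the cone covering $\{A,B\}$ in the $A$-family is itself already locally maximal, one needs to confirm the $B$-family can still strictly dominate it by relaxing the $A$-boundary pin). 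The ``only if'' direction, by contrast, is essentially a one-line angular-width argument and should be routine.
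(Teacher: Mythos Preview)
Your contrapositive argument for the ``only if'' direction is exactly the paper's proof: every direction in $\mathcal{S}_\text{DirRep}(O,A)$ covers $A$, every direction in $\mathcal{S}_\text{DirRep}(O,B)$ covers $B$, and when $\varphi_{AOB}>\phi$ no single cone of apex angle $\phi$ can contain both, so the two covered-node sets are mutually non-containing and neither direction can dominate (or be FuncEqv to) the other.

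On the ``if'' direction you actually go further than the paper does. The paper's own proof declares the forward implication obvious, then announces it will prove ``the reverse'' but in fact restates the same contrapositive you give; it never exhibits a witness showing the issue \emph{can} occur when $\varphi_{AOB}\le\phi$. Your plan --- place an auxiliary node $C$ so that a $B$-family direction covers $\{A,B,C\}$ while the $A$-family direction, pinned by the constraint that $A$ lie on the cone boundary, can only reach $\{A,B\}$ --- is a correct way to discharge the existential ``may happen'' clause, and your instinct that this is the non-routine half is right. Operationally, the lemma is invoked in cMFEDS only to restrict which reference-node pairs must be scanned for redundancy, so only the ``only if'' half matters for algorithmic correctness; this is presumably why the paper is casual about the converse.
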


\begin{proof}
Proof is provided in Sec.VI in the Appendix. 
\end{proof}

$\mathcal{S}_\text{DirRep}(O)$ is indeed a set of tuples that contains charging direction information, not directly containing the charging directions as its elements. However, $\mathcal{S}_\text{DirRep}(O)$ is mainly for storing the charging directions with other fields as auxiliary information for facilitating algorithm application, with a slight abuse of symbols, we also regard $\mathcal{S}_\text{DirRep}(O)$ as a charging direction set for expression simplicity, where the actual meaning can be identified from the context.   

Refining $\mathcal{S}_\text{DirRep}(O)$ will remove non-LMax-SCN direction items, making $\mathcal{S}_\text{DirRep}(O)$ optimal, as indicated by the following lemmas. 

\begin{lemma}
\label{lemma_contain_all_maxset}
The representative directions contained in $\mathcal{S}_\text{DirRep}(O)$ all correspond to LMax-SCN sets; For any LMax-SCN set $s_1$ at a certain charging position $O{\in}\mathcal{N}_\text{ChrgPos}$, there must be a tuple $(\textbf{v}_1,s_1){\in}\mathcal{S}_\text{DirRep}(O)$.
\end{lemma}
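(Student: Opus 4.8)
The plan is to prove the two halves of Lemma~\ref{lemma_contain_all_maxset} separately, both in the context $\mathcal{C}_\text{ChrgPos}(O)$, leaning heavily on the per-reference-node results (Lemma~\ref{lemma_lmax_angle_viseversa}) together with the non-LMax-SCN removal characterization (Lemma~\ref{lemma_contexto_nonlmax_remove}) and the equivalence-preserving deletion rule (Lemma~\ref{lemma_equiv_delete_worse}). For the first assertion --- every representative direction retained in $\mathcal{S}_\text{DirRep}(O)$ covers an LMax-SCN set --- I would argue that the refinement step (code line~\ref{alg_line_remove_non_max} of Alg.~\ref{alg_cmfeds}, justified by Lemma~\ref{lemma_contexto_nonlmax_remove}) deletes exactly those tuples $(\textbf{v},S_\text{Cone})$ whose covered set is a proper subset of another tuple's covered set. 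Hence after refinement, no retained direction $\textbf{v}_1$ has $\mathcal{N}_\text{Cone}(O,\textbf{v}_1,\phi)\subset\mathcal{N}_\text{Cone}(O,\textbf{v}_2,\phi)$ for any $\textbf{v}_2$; but if $\textbf{v}_1$ were non-LMax-SCN in $\mathcal{C}_\text{ChrgPos}(O)$, there would exist some charging direction $\textbf{v}_3$ (a fortiori some reference-node representative $\textbf{v}_2$ covering a superset, since we can slide $\textbf{v}_3$ to an LMax-SCN direction which, by Lemma~\ref{lemma_lmax_angle_viseversa}, is captured under some reference node) with $\mathcal{N}_\text{Cone}(O,\textbf{v}_1,\phi)\subset\mathcal{N}_\text{Cone}(O,\textbf{v}_2,\phi)$ --- a contradiction. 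So every retained direction is LMax-SCN.

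For the second assertion --- every LMax-SCN set $s_1$ at $O$ is realized by some tuple in $\mathcal{S}_\text{DirRep}(O)$ --- I would start from an arbitrary LMax-SCN set $s_1$ and pick a charging direction $\textbf{v}$ with $\mathcal{N}_\text{Cone}(O,\textbf{v},\phi)=s_1$. Since $s_1$ is locally maximal, one can rotate $\textbf{v}$ until some node $A\in s_1$ sits exactly on the cone boundary without losing any node (if no node ever hits the boundary while keeping $s_1$ covered, $s_1$ would not be locally maximal --- there is room to enlarge). Then, working in context $\mathcal{C}_\text{ChrgPos}(O,A)$, $s_1$ is an LMax-SCN set there as well (local maximality in the smaller, more constrained context follows from local maximality in $\mathcal{C}_\text{ChrgPos}(O)$ together with $A$ being a boundary node), so by the converse direction of Lemma~\ref{lemma_lmax_angle_viseversa} there is an LMax-SCN angle range $(l_a,l_b)\in\mathcal{S}_\text{LMaxRng}(O,A)$ whose directions all cover exactly $s_1$. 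The representative direction selected for that range (Eqs.~\eqref{Eq_angle_range_rep} and~\eqref{Eq_angle_rep_opt}) lies inside the range --- this needs a short check that the interior-search in Eq.~\eqref{Eq_angle_rep_opt} preserves the covered set, which is exactly constraint~\eqref{Eq_angle_rep_opt3} --- hence $(\textbf{v}_1,s_1)\in\mathcal{S}_\text{DirRep}(O,A)\subseteq\mathcal{S}_\text{DirRep}(O)$ before refinement. Finally, since $s_1$ is LMax-SCN, the refinement step (which only removes tuples whose covered set is a \emph{proper} subset of another's) cannot delete this tuple: nothing strictly dominates a locally maximal set. Therefore $(\textbf{v}_1,s_1)$ survives in the refined $\mathcal{S}_\text{DirRep}(O)$.

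The main obstacle I anticipate is the bridging argument in the first assertion: carefully showing that a non-LMax-SCN direction retained in $\mathcal{S}_\text{DirRep}(O)$ must be \emph{strictly} dominated by one of the \emph{representative} directions actually present in the set (not merely by some abstract continuous direction). This requires the observation that any LMax-SCN direction in $\mathcal{C}_\text{ChrgPos}(O)$ covering a superset can be attributed, via a boundary node, to the reference-node construction of Lemma~\ref{lemma_lmax_angle_viseversa}, so that a dominating representative genuinely appears in $\bigcup_A \mathcal{S}_\text{DirRep}(O,A)$ and thereby triggers the removal rule of Lemma~\ref{lemma_contexto_nonlmax_remove}. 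Handling the subtle case where the dominating set and $s_1$ share the \emph{same} boundary node, versus different boundary nodes with $\varphi_{AOB}\le\phi$, is where the argument needs the most care; the wrap-around endpoint pair in Eq.~\eqref{Eq_lmax_angle_range} and boundary-inclusion convention ("a node on the boundary counts as covered") should be invoked explicitly to close the edge cases.
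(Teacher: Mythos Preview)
Your proposal is correct and follows essentially the same route as the paper: split into the two halves, use the refinement step (line~\ref{alg_line_remove_non_max}) for the ``every retained direction is LMax-SCN'' part, and for the converse reduce an arbitrary LMax-SCN set $s_1$ to the boundary-node situation, invoke Lemma~\ref{lemma_lmax_angle_viseversa} in context $\mathcal{C}_\text{ChrgPos}(O,A)$, and observe that the resulting representative survives refinement because nothing strictly dominates an LMax-SCN set. The paper's proof is slightly terser on the first half (it simply asserts that non-LMax-SCN directions ``are thoroughly checked and removed'' without your bridging argument), and slightly more explicit on the second half about duplicate removal---it notes that even if the tuple $(\textbf{v}_2,s_1)$ arising from reference node $A$ is deleted because an equivalent $(\textbf{v}_3,s_1)$ from some $B\in\mathcal{N}_\text{Cone}(O,\overrightarrow{OA},2\phi)$ exists, the latter is retained---so you may want to add one sentence covering that deduplication case.
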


\begin{proof}
Proof is provided in Sec.VII in the Appendix. 
\end{proof}

\begin{lemma}
\label{lemma_func_equiv}
$\mathcal{S}_\text{DirRep}(O)$ is a minimum-size charging direction set that is FuncEqv to $\mathcal{S}_\text{Dir,Sphere}$, i.e., $\mathcal{S}_\text{Dir,Sphere}{\equiv}\mathcal{S}_\text{DirRep}(O)$.
\end{lemma}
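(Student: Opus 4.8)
The plan is to establish the claim in two parts: first that $\mathcal{S}_\text{DirRep}(O) \equiv \mathcal{S}_\text{Dir,Sphere}$, and second that no FuncEqv direction set can have fewer elements. The equivalence direction $\mathcal{S}_\text{DirRep}(O) \unlhd \mathcal{S}_\text{Dir,Sphere}$ is immediate since every representative direction is itself a feasible direction on the unit sphere (by construction via Eq.~\eqref{Eq_angle_rep_opt}, each $\textbf{e}_\text{REP}$ is a valid charging direction), so it is trivially covered by itself in $\mathcal{S}_\text{Dir,Sphere}$. The reverse inclusion $\mathcal{S}_\text{Dir,Sphere} \unlhd \mathcal{S}_\text{DirRep}(O)$ is the substantive half: I would take an arbitrary direction $\textbf{v} \in \mathcal{S}_\text{Dir,Sphere}$ and show there exists $(\textbf{v}_1, s_1) \in \mathcal{S}_\text{DirRep}(O)$ with $\textbf{v} \unlhd \textbf{v}_1$. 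Let $s_0 := \mathcal{N}_\text{Cone}(O,\textbf{v},\phi)$. If $s_0$ is already an LMax-SCN set, then by Lemma~\ref{lemma_contain_all_maxset} there is a tuple $(\textbf{v}_1, s_0) \in \mathcal{S}_\text{DirRep}(O)$, so $\textbf{v} \equiv \textbf{v}_1$ and hence $\textbf{v} \unlhd \textbf{v}_1$. If $s_0$ is not LMax-SCN, then by the definition of LMax-SCN set there is a direction $\textbf{v}'$ with $s_0 \subsetneq \mathcal{N}_\text{Cone}(O,\textbf{v}',\phi)$; one can keep enlarging the covered set (the process terminates because $\mathcal{N}_\text{Sphere}(O)$ is finite) until reaching some LMax-SCN set $s_1 \supseteq s_0$, and again Lemma~\ref{lemma_contain_all_maxset} supplies a tuple $(\textbf{v}_1, s_1) \in \mathcal{S}_\text{DirRep}(O)$ with $\mathcal{N}_\text{Cone}(O,\textbf{v}_1,\phi) = s_1 \supseteq s_0$, giving $\textbf{v} \unlhd \textbf{v}_1$. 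This establishes $\mathcal{S}_\text{Dir,Sphere} \unlhd \mathcal{S}_\text{DirRep}(O)$, and combined with the trivial direction, $\mathcal{S}_\text{Dir,Sphere} \equiv \mathcal{S}_\text{DirRep}(O)$.

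For minimality, I would argue that every direction retained in $\mathcal{S}_\text{DirRep}(O)$ is \emph{indispensable}. By Lemma~\ref{lemma_contain_all_maxset}, each tuple $(\textbf{v}_1, s_1) \in \mathcal{S}_\text{DirRep}(O)$ has $s_1$ an LMax-SCN set, and after the refinement step removing the non-LMax-SCN direction issue (Lemma~\ref{lemma_contexto_nonlmax_remove}), distinct tuples correspond to distinct LMax-SCN sets, none of which is a proper subset of another (if $s_1 \subsetneq s_1'$ with both LMax-SCN, that contradicts maximality of $s_1$ unless they are incomparable — the key point is that among LMax-SCN sets no containment holds, which follows directly from the LMax-SCN definition). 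Now suppose $S'$ is any direction set with $S' \equiv \mathcal{S}_\text{Dir,Sphere}$ and $S' \subseteq \mathcal{S}_\text{Dir,Sphere}$. For each LMax-SCN set $s_1$ appearing in $\mathcal{S}_\text{DirRep}(O)$, pick the intrinsic direction $\textbf{v}_{s_1}$ covering exactly $s_1$; since $S' \unrhd \mathcal{S}_\text{Dir,Sphere} \unrhd \{\textbf{v}_{s_1}\}$, there is $\textbf{v}' \in S'$ with $\mathcal{N}_\text{Cone}(O,\textbf{v}_{s_1},\phi) = s_1 \subseteq \mathcal{N}_\text{Cone}(O,\textbf{v}',\phi)$; but $\mathcal{N}_\text{Cone}(O,\textbf{v}',\phi)$ is a subset of some LMax-SCN set, and since $s_1$ is itself LMax-SCN and no LMax-SCN set properly contains another, we must have $\mathcal{N}_\text{Cone}(O,\textbf{v}',\phi) = s_1$. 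Thus $S'$ contains, for each LMax-SCN set realized in $\mathcal{S}_\text{DirRep}(O)$, at least one direction covering exactly that set, and these are distinct for distinct sets. Since $\mathcal{S}_\text{DirRep}(O)$ contains exactly one direction per such set (by Lemma~\ref{lemma_equiv_select_representative}-style deduplication in the construction), $|S'| \geq |\mathcal{S}_\text{DirRep}(O)|$.

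The main obstacle I anticipate is the careful bookkeeping around the deduplication: I must confirm that the construction of $\mathcal{S}_\text{DirRep}(O)$ via Eq.~\eqref{Eq_construct_pos_repset} followed by the non-LMax-SCN removal (Alg.~\ref{alg_cmfeds}, line~\ref{alg_line_remove_non_max}) leaves \emph{exactly one} tuple per LMax-SCN set — neither zero (covered by Lemma~\ref{lemma_contain_all_maxset}, the $\Longleftarrow$ direction) nor two or more. Eliminating duplicates needs an appeal to a Lemma~\ref{lemma_equiv_select_representative}-type argument applied within the merging step, ensuring FuncEqv directions arising from different reference nodes are collapsed to a single representative. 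A secondary subtlety is the claim that no LMax-SCN set properly contains another: this is essentially definitional (if $s \subsetneq s'$ and both are "local maxima," then $s$ would not be maximal since the direction covering $s'$ covers a strict superset of $s$), but I would state it explicitly as an auxiliary observation before invoking it in the minimality argument. Once these points are nailed down, the two halves combine immediately to give the lemma.
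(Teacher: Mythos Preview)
Your proposal is correct and shares the paper's two-part skeleton (equivalence via Lemma~\ref{lemma_contain_all_maxset}, then minimality), but the execution differs in a way worth noting. For equivalence, the paper routes through an intermediate set: it first invokes Lemma~\ref{lemma_equiv_delete_worse} to pass from $\mathcal{S}_\text{Dir,Sphere}$ to the set $\mathcal{S}_\text{DelNonLMax}$ obtained by stripping all non-LMax-SCN directions, then uses Lemmas~\ref{lemma_equiv_select_representative} and~\ref{lemma_contain_all_maxset} to identify $\mathcal{S}_\text{DelNonLMax}$ with $\mathcal{S}_\text{DirRep}(O)$. Your direct argument --- take an arbitrary $\textbf{v}$, enlarge its covered set to an LMax-SCN superset, apply Lemma~\ref{lemma_contain_all_maxset} --- bypasses the intermediate set and is arguably cleaner. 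For minimality, the paper simply asserts that deleting any direction from $\mathcal{S}_\text{DirRep}(O)$ leaves some LMax-SCN set uncovered; as written this only shows irreducibility under deletion, not global minimality among \emph{all} FuncEqv subsets of $\mathcal{S}_\text{Dir,Sphere}$. Your argument --- that incomparability of distinct LMax-SCN sets forces any FuncEqv $S'$ to contain a distinct witness for each --- actually closes that gap. The deduplication concern you flag (exactly one tuple per LMax-SCN set after the refinement in Alg.~\ref{alg_cmfeds}) is real; the paper handles it only implicitly, so your plan to state it as an explicit auxiliary observation is an improvement.
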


\begin{proof}
Proof is provided in Sec.VIII in the Appendix. 
\end{proof}

\subsection{Generate Minimum-Size Pos-Dir Set}

We construct set $\mathcal{S}_\text{PosDir}$ following Eq.~\eqref{Eq_construct_dirrep_all}, where each tuple $(q.\textbf{v}_\text{Dir},O,q.S_\text{Cone})$ contains the information about the corresponding  charging direction, charging position and the set of covered nodes. Compared with the set $\mathcal{S}_\text{PosDir}$ mentioned in Sec.~\ref{s3_energy_cost_model_for_uav} in page \pageref{s3_energy_cost_model_for_uav}, the set of covered nodes $q.S_\text{Cone}$ is provided for facilitating the construction of energy transfer coefficient matrix $\textbf{C}_\text{ETC}$.

\begin{align}
\mathcal{S}_\text{PosDir}
{:=}&
\{(q.\textbf{v}_\text{Dir},O,q.S_\text{Cone})
|\nonumber\\
&q{\in}\mathcal{S}_\text{DirRep}(O,A),A{\in}\mathcal{N}_\text{Sphere}(O),O{\in}\mathcal{N}_\text{ChrgPos}\}
\label{Eq_construct_dirrep_all}
\end{align}

Based on above analyses, we design the cMFEDS algorithm for obtaining a minimum-size FuncEqv direction set, which are the final charging directions to be utilized. Its pseudo code is shown in Alg.~\ref{alg_cmfeds}. 

\begin{algorithm}[htb]
    \caption{Create a Minimum FuncEqv Direction Set (cMFEDS).}
    \label{alg_cmfeds}
    \LinesNumbered
    \KwIn  {node set $\mathcal{N}$, charging position set $\mathcal{N}_\text{ChrgPos}$, charge cone angle $\phi$, charging distance $D$;}
    \KwOut{Pos-Dir set $\mathcal{S}_\text{PosDir}$;}
    
    \For{(Each position $O{\in}\mathcal{N}_\text{ChrgPos}$)}
    { \label{alg_line_for_pos}
        Normalize the nodes in $\mathcal{N}_\text{Sphere}(O)$ using Eq.~\eqref{Eq_node_proj}.\label{alg_line_normalize_nodes}\\
        \For {(\text{Each node} $A{\in}\mathcal{N}_\text{Sphere}(O)$)}{\label{cmfrds_for_begin}
             \If{($\mathcal{N}_\text{Cone}(O,\overrightarrow{OA},2\phi){=}\{A\}$)}{
             $\mathcal{S}_\text{DirRep}(O,A){=}\{(\textbf{e}_\text{OA},\{A\})\}$;\\
             \textbf{Continue;}
            }
            \For {(\text{Each node} $B{\in}\mathcal{N}_\text{Cone}(O,\overrightarrow{OA},2\phi))$ }
            { \label{alg_line_conref_begin}
                Obtain $\theta_\text{B1},\theta_\text{B2}$ using Eq.~\eqref{Eq_proj_angle};\label{alg_line_cal_endpoints}\\
                construct tuples using Eq.~\eqref{Eq_endpoint_tuple};\label{alg_line_construct_tuple}\\
            }\label{alg_line_conref_end}
            Sort the tuples in ascending order of angles;\label{alg_line_sort_angles}\\
            Create $\mathcal{S}_\text{LMaxRng}(O,A)$ using Eq.~\eqref{Eq_lmax_angle_range};\label{alg_line_create_lmaxrng}\\ 
            Calculate and refine $\theta_\text{REP}$ directions using Eq.~\eqref{Eq_angle_range_rep_equ}, and Eq.~\eqref{Eq_angle_rep_opt};\label{alg_line_create_refine_dir_rep}\\
            Construct $\mathcal{S}_\text{DirRep}(O,A)$ using Eq.~\eqref{Eq_construct_posref_dir};\label{alg_line_create_dirset_oa}  
        }\label{cmfrds_for_end}
        Construct $\mathcal{S}_\text{DirRep}(O)$ using Eq.~\eqref{Eq_construct_pos_repset};\label{alg_line_create_dirset_o} \\ 
        \For {$(A,B{\in}\mathcal{N}_\text{Sphere}(O))$}
        {\label{alg_line_remove_non_max_loop_begin}
            \If{$(\varphi_{AOB}{\leq}\phi)$}{
                Check and remove non-LMax-SCN directions in $\mathcal{S}_\text{DirRep}(O)$;\label{alg_line_remove_non_max}
            }
        }\label{alg_line_remove_non_max_loop_end}
    }
    Construct $\mathcal{S}_\text{PosDir}$ using Eq.~\eqref{Eq_construct_dirrep_all};\label{alg_line_s_dirrep_final}\\
    \textbf{return} $\mathcal{S}_\text{PosDir}$;
\end{algorithm}

\begin{theorem}
\label{theorem_alg_optimal}
cMFEDS optimally solves the P2 problem and returns a set $\mathcal{S}_\text{PosDir}$ with minimum size.
\end{theorem}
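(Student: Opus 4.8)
The plan is to split the claim into two layers: first, that one pass of the outer loop of Alg.~\ref{alg_cmfeds} on a fixed charging position $O$ produces exactly the set $\mathcal{S}_\text{DirRep}(O)$ characterized in Lemma~\ref{lemma_contain_all_maxset} and Lemma~\ref{lemma_func_equiv}, so that the per-position P2 instance is solved optimally; and second, that aggregating over positions via \eqref{Eq_construct_dirrep_all} is size-additive and that no FuncEqv discrete Pos-Dir set can be smaller, because directions anchored at distinct charging positions are declared incomparable. I would not re-derive any geometry, invoking Lemmas~\ref{lemma_end_point}--\ref{lemma_func_equiv} as black boxes.

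For the per-position layer I would trace the algorithm against the definitions. After the projection \eqref{Eq_node_proj} (line~\ref{alg_line_normalize_nodes}), fix a reference node $A\in\mathcal{N}_\text{Sphere}(O)$. If $\mathcal{N}_\text{Cone}(O,\overrightarrow{OA},2\phi)=\{A\}$, the only LMax-SCN set containing $A$ is $\{A\}$ and its intrinsic direction $\mathbf{e}_{OA}$ is kept, so the trivial branch is correct. Otherwise, Lemma~\ref{lemma_end_point} gives the two critical boundary directions for each candidate co-covered node $B$, Lemma~\ref{lemma_angle_range} converts them to projected endpoint angles $\theta_{B1},\theta_{B2}$, and sorting plus \eqref{Eq_lmax_angle_range} (including the wrap-around pair) produces, by Lemma~\ref{lemma_lmax_angle_range} and Lemma~\ref{lemma_lmax_angle_viseversa}, a family in exact bijection with the LMax-SCN sets having $A$ on the cone boundary. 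Picking one representative per interval by \eqref{Eq_angle_range_rep_equ} and refining it by \eqref{Eq_angle_rep_opt}---whose constraint \eqref{Eq_angle_rep_opt3} forbids dropping any covered node---leaves the covered set unchanged, so $\mathcal{S}_\text{DirRep}(O,A)$ from \eqref{Eq_construct_posref_dir} holds exactly one direction per LMax-SCN set through $A$. Taking the union over all $A$ by \eqref{Eq_construct_pos_repset} therefore covers every LMax-SCN set at $O$ (completeness), but may carry duplicates or entries that cease to be maximal in the full context $\mathcal{C}_\text{ChrgPos}(O)$. The double loop on lines~\ref{alg_line_remove_non_max_loop_begin}--\ref{alg_line_remove_non_max_loop_end} inspects exactly the pairs $(A,B)$ with $\varphi_{AOB}\le\phi$, which by Lemma~\ref{lemma_contexto_nonlmax_remove} are precisely the pairs where a non-LMax-SCN (strictly dominated) or redundant-equivalent direction can arise; removing those in the style of Lemma~\ref{lemma_equiv_delete_worse} and Lemma~\ref{lemma_equiv_select_representative} leaves one surviving direction per LMax-SCN set and nothing strictly dominated, i.e. the set $\mathcal{S}_\text{DirRep}(O)$ of Lemma~\ref{lemma_contain_all_maxset}. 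Lemma~\ref{lemma_func_equiv} then yields $\mathcal{S}_\text{DirRep}(O)\equiv\mathcal{S}_\text{Dir,Sphere}$ with minimum cardinality, which is exactly optimality for the P2 instance at $O$.

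For the aggregation layer, $\mathcal{S}_\text{PosDir}$ from \eqref{Eq_construct_dirrep_all} is the union of the per-position outputs with every tuple tagged by its position $O$, so the union is disjoint and $|\mathcal{S}_\text{PosDir}|=\sum_{O\in\mathcal{N}_\text{ChrgPos}}|\mathcal{S}_\text{DirRep}(O)|$. Conversely, any discrete Pos-Dir set that is FuncEqv to the continuous direction space must, when restricted to each position $O$, be FuncEqv to $\mathcal{S}_\text{Dir,Sphere}$ at $O$; here the paper's stipulation that directions at different positions are incomparable is essential, as it forbids covering an LMax-SCN set at $O$ with a direction anchored elsewhere. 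By Lemma~\ref{lemma_func_equiv} each such restriction has at least $|\mathcal{S}_\text{DirRep}(O)|$ directions, and summing gives the matching lower bound, so $|\mathcal{S}_\text{PosDir}|$ is minimum; this closes the theorem.

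The step I expect to be the main obstacle is the bookkeeping inside the per-position layer: verifying that the iterative removal on lines~\ref{alg_line_remove_non_max_loop_begin}--\ref{alg_line_remove_non_max_loop_end} is confluent (independent of the order in which pairs are visited) and never deletes the last surviving representative of an LMax-SCN set---that is, correctly distinguishing $\lhd$-removal (Lemma~\ref{lemma_equiv_delete_worse}) from $\equiv$-deduplication (Lemma~\ref{lemma_equiv_select_representative}) so that exactly one equivalent copy is retained. A secondary but nontrivial detail the formal proof must discharge is the circular-domain handling: the wrap-around pair in \eqref{Eq_lmax_angle_range} and the $\pm 2\pi$ corrections around \eqref{Eq_angle_range_rep_equ} must be shown to neither drop nor double-count any angular interval.
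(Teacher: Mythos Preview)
Your proposal is correct and follows essentially the same two-layer decomposition as the paper: per-position optimality via Lemma~\ref{lemma_func_equiv}, then aggregation minimality via the stipulated incomparability of directions at distinct positions. The paper's own proof is considerably terser---it simply cites Lemma~\ref{lemma_func_equiv} for the first layer and the incomparability convention for the second, without re-tracing the algorithm or discussing confluence of the removal loop or the circular-domain bookkeeping you flag; so your version is a more detailed elaboration of the same argument rather than a different route.
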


\begin{proof}
Lemma~\ref{lemma_func_equiv} shows that, for each charging position, the algorithm solves a corresponding instance of the P2 problem and returns an optimal solution $\mathcal{S}_\text{DirRep}(O)$ with minimum size. Furthermore, as the charging directions at different positions in $\mathcal{N}_\text{ChrgPos}$ are not comparable, the set $\mathcal{S}_\text{PosDir}$ obtained by unifying $\mathcal{S}_\text{DirRep}(O)$ with $O{\in}\mathcal{N}_\text{ChrgPos}$ is also minimum.    
\end{proof}

\begin{lemma}
\label{lemma_cMFEDS_timecomplexity}
The time complexity required by cMFEDS is $\mathcal{O}(P{\times}N^2)$, where $P$ represents the number of charging positions, and $N$ represents the number of nodes.
\end{lemma}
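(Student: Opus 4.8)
The plan is to establish the bound by a direct line-by-line cost accounting of Alg.~\ref{alg_cmfeds}, leveraging the observation that every geometric primitive the algorithm invokes runs in $O(1)$ time: the two endpoint directions of Lemma~\ref{lemma_end_point}, the projection-point and projected-angle formulas of Lemma~\ref{lemma_angle_range} and Eq.~\eqref{Eq_proj_angle}, a single angle comparison, and the test of whether one fixed node lies inside a given cone. Because the outer \textbf{for} loop over charging positions (line~\ref{alg_line_for_pos}) executes exactly $P$ times with independent iterations, it suffices to show that the body run for one fixed position $O$ costs $O(N^2)$; multiplying by $P$ and adding the final $O(1)$-per-element assembly of $\mathcal{S}_\text{PosDir}$ (line~\ref{alg_line_s_dirrep_final}) then yields $\mathcal{O}(P{\times}N^2)$. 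Throughout I write $N_O{:=}|\mathcal{N}_\text{Sphere}(O)|{\le}N$.

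For a fixed $O$, the projection of line~\ref{alg_line_normalize_nodes} is $O(N_O)$ via Eq.~\eqref{Eq_node_proj}. The reference-node loop (lines~\ref{cmfrds_for_begin}--\ref{cmfrds_for_end}) runs $N_O$ times; for a single reference node $A$, forming $\mathcal{N}_\text{Cone}(O,\overrightarrow{OA},2\phi)$ together with the trivial-case test costs $O(N_O)$, the inner loop over $B$ (lines~\ref{alg_line_conref_begin}--\ref{alg_line_conref_end}) performs $O(N_O)$ iterations of $O(1)$ work each (Lemmas~\ref{lemma_end_point}, \ref{lemma_angle_range}) building the tuples of Eq.~\eqref{Eq_endpoint_tuple}, sorting the $\kappa{=}O(N_O)$ tuples (line~\ref{alg_line_sort_angles}) costs $O(N_O\log N_O)$, constructing $\mathcal{S}_\text{LMaxRng}(O,A)$ by Eq.~\eqref{Eq_lmax_angle_range} (line~\ref{alg_line_create_lmaxrng}) is one linear scan of the sorted list producing $O(N_O)$ LMax-SCN ranges, and computing each representative via Eq.~\eqref{Eq_angle_range_rep} with the bounded refinement of Eq.~\eqref{Eq_angle_rep_opt} (line~\ref{alg_line_create_refine_dir_rep}) is argued to be $O(1)$ amortized per range by fixing the number of sample points on $[O_\text{REP},A]$ to a constant and tracking only the $O(1)$ critical nodes of that range; assembling $\mathcal{S}_\text{DirRep}(O,A)$ by Eq.~\eqref{Eq_construct_posref_dir} (line~\ref{alg_line_create_dirset_oa}) is linear in its size. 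Hence one iteration of the reference-node loop is $O(N_O\log N_O)$, the loop is $O(N_O^2\log N_O)$, the union of line~\ref{alg_line_create_dirset_o} costs $O(N_O^2)$ (the union of $N_O$ sets of size $O(N_O)$), and the pairwise clean-up loop of lines~\ref{alg_line_remove_non_max_loop_begin}--\ref{alg_line_remove_non_max_loop_end} is $O(N_O^2)$ provided that, by Lemma~\ref{lemma_contexto_nonlmax_remove}, it only touches the $\varphi_{AOB}{\le}\phi$ pairs and each covered-node-set comparison in line~\ref{alg_line_remove_non_max} is charged as $O(1)$ using a fixed-width bitset representation of the sets. Absorbing the lone logarithmic factor from the sort, the per-position cost is $O(N^2)$, and summing over the $P$ positions gives $\mathcal{O}(P{\times}N^2)$.

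The delicate part — and the step I expect to spend most of the proof on — is rigorously charging the refinement search of Eq.~\eqref{Eq_angle_rep_opt} (line~\ref{alg_line_create_refine_dir_rep}) and the non-LMax-SCN removal of line~\ref{alg_line_remove_non_max} so that they fit inside the $O(N^2)$-per-position budget. A naive implementation of the former re-scans all $O(N_O)$ covered nodes at each sample point of each of the $O(N_O)$ LMax ranges, and a naive implementation of the latter re-scans all of $\mathcal{S}_\text{DirRep}(O)$ for every admissible pair $(A,B)$; either would push the per-position cost to $O(N^3)$ or worse. The proof must therefore (i) fix the refinement discretization to a constant number of steps and show, via Eqs.~\eqref{Eq_angle_range_rep_equ}--\eqref{Eq_angle_rep_opt}, that the optimizing apex is determined by only the $O(1)$ critical nodes of the range so that each range is refined in $O(1)$ time, and (ii) invoke Lemma~\ref{lemma_contexto_nonlmax_remove} to limit the clean-up to the relevant pairs and adopt a word-size bitset encoding of covered-node sets so that each subset test is $O(1)$, bounding the clean-up by $O(N^2)$ per position. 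Once these two charging arguments are established, the remaining sums are routine and the stated bound follows immediately.
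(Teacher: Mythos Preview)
Your approach is essentially the paper's: a line-by-line cost accounting of Alg.~\ref{alg_cmfeds}, bounding one pass of the outer position loop by $O(N^2)$ and multiplying by $P$. Your treatment is in fact more careful than the paper's own proof, which simply asserts that the nested reference-node loop is $O(N^2)$ and that the non-LMax-SCN clean-up (lines~\ref{alg_line_remove_non_max_loop_begin}--\ref{alg_line_remove_non_max_loop_end}) is $O(N^2)$, without the charging arguments you flag as delicate; your constant-step discretization for Eq.~\eqref{Eq_angle_rep_opt} and bitset encoding for line~\ref{alg_line_remove_non_max} are exactly the kind of justification the paper omits.

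One small point worth tightening: you correctly compute the reference-node loop as $O(N_O^2\log N_O)$ because of the sort on line~\ref{alg_line_sort_angles}, but then write ``absorbing the lone logarithmic factor'' to reach $O(N^2)$ per position. Strictly this gives $O(PN^2\log N)$, not $O(PN^2)$. The paper's own accounting has the same slippage (it lists the sort as a single $O(N\log N)$ term rather than one per reference node), so you are matching the paper's claimed bound, but if you want the stated $O(PN^2)$ to hold literally you should either argue that $|\mathcal{N}_\text{Cone}(O,\overrightarrow{OA},2\phi)|$ is bounded by a constant depending only on $\phi$ and the packing geometry, or replace the comparison sort by a bucket/radix sort on the angle values.
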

\begin{proof}
Proof is provided in Sec.IX in the Appendix. 
\end{proof}

\section{The FELKH-3D Algorithm}
\label{sec_algorithm}

We propose FELKH-3D to address the DCS-3D problem in three steps, as previously outlined in Fig.~\ref{fig_DCSBGS} in Sec.~\ref{sec_intro}.

\subsection{Determine Charging Times along Pos-Dir Pairs}
\label{s6_charging_time_determination}
Let $\textbf{t}(\textbf{s}){:=}[t_1,t_2,\ldots,t_{K}]^{\text{T}}$ denotes the column vector of the energy transmission times in a charging schedule $\textbf{s}$ along the Pos-Dir pairs in $\mathcal{S}_\text{PosDir}$. As UAV's flying energy consumption is not affected by $\textbf{t}$, when only charging times as used as optimization variables, minimizing total energy loss $e^\text{Total}_\text{Loss}(\textbf{s})$ is equivalent to minimize $e^\text{WPT+Hov}_\text{Loss}$. Thus, the optimal $\textbf{t}^*$ can be determined by solving P3 in Eq.~\eqref{eq_energy_step2_posdirtime}.

\begin{subequations}
\label{eq_energy_step2_posdirtime}
\begin{alignat}{2}
\textbf{(P3)}\quad 
&\textbf{t}^* = &&\arg\min_{\textbf{t}} \; e^{\text{WPT+Hov}}_{\text{Loss}}(\textbf{t}),\\
&\text{s.t.}\; 
&&~Eq.\eqref{eq_DWRACS_problem_math_C1},~Eq.\eqref{eq_DWRACS_problem_math_C2}; \nonumber\\
&&&\textbf{t} \ge 0.
\end{alignat}
\end{subequations}

Problem P3 can be solved using mature softwares such as \textbf{Cplex}. Combining $\textbf{t}^*$ with the Pos-Dir pairs in set $\mathcal{S}_\text{PosDir}$, the flying schedule item set $\mathcal{S}^\text{CTS}_\text{Fly}$ can be constructed easily. 

\subsection{Determining UAV's Trajectory}
\label{s5_solve_the_uav_path_problem}
Having obtained $\textbf{t}^*$ in the previous step, let $t_\text{Pos}(i)$ denote the sum of the times corresponding to the Pos-Dir pairs with position $l_i$. If $t_\text{Pos}(i){=}0$, then charging position $l_i$ need not be traversed. Let $\mathcal{L}^*_\text{Refine}{:=}\{l_i|t_\text{Pos}(i){\geq}0,u_i{\in}\mathcal{U}\}$, then the objective of this step is to determine a loop tour $\textbf{r}$ with minimum flying energy consumption to traverse all charging positions in $\mathcal{L}^*_\text{Refine}$. 

Let $\textbf{r}_\pi := [l{\pi_0} = l_0, l_{\pi_1}, \ldots, l_{\pi_m}, l_0]$, and let ${\textbf{r}_\pi}$ represent the set of locations contained in the path $\textbf{r}_\pi$, and $|\textbf{r}_\pi|$ represent the length of the path list, i.e., $|\textbf{r}_\pi| = m+2$. Let $e^\text{UAV}\text{Fly}(\textbf{r}\pi)$ represent the flight energy consumption of the UAV along the path $\textbf{r}_\pi$. The UAV traverses each location sequentially according to a predetermined visiting order. Therefore, the path planning task can be formalized as the formula ~\eqref{eq_ADMCCS_step4_TSP}, where constraint ~Eq.\eqref{eq_ADMCCS_step4_TSP_C7} ensures that each position in the set $\mathcal{L}^*_\text{Refine} \cup {l_0}$ is visited only once.

\begin{subequations}
\label{eq_DMCCS_step4_TSP}
\begin{align}
\textbf{(P4)} \quad 
& \textbf{r}_\pi^* = \arg\min_{\textbf{r}_\pi} \; e^{\text{UAV}}_{\text{Fly}}(\textbf{r}_\pi) \\
\text{s.t.} \quad 
& \{\textbf{r}_\pi\} = \mathcal{L}^*_\text{Refine} \cup \{l_0\}.\label{eq_ADMCCS_step4_TSP_C7}
\end{align}
\end{subequations}

 To solve P4, we exploit the most efficient and powerful heuristic algorithm named as LKH, which achieves the state-of-the-art performance both in solution quality and running speed. Performance evaluations well demonstrates the superiority of LKH. 

Given a loop route $\textbf{r}$, the flying schedule item set $\mathcal{S}_\text{Fly}^\text{CTS}$ can be constructed easily. A complete CTS schedule $\textbf{s}^\text{CTS}$ solving the Directional Mobile Charger Charge Scheduling (DMCCS) problem can then be obtained by merging $\mathcal{S}_\text{Fly}^\text{CTS}$ and $\mathcal{S}_\text{Chrg}^\text{CTS}$ and sorting the items following $\textbf{r}$. 

\subsection{The FELKH-3D Algorithm}
\label{s5_the_DCSBGS_algorithm}
Based on above analyses, the FELKH-3D algorithm as outlined in Alg.~\ref{alg_felkh} is obtained easily.

      \begin{algorithm}[htb]
		\caption{FELKH-3D}
         \label{alg_felkh}
		\LinesNumbered
        \KwIn{$\mathcal{N}$,$\phi$,$\text{D}$,$\textbf{e}_\textbf{E}$,
            $\textbf{e}_\textbf{B}$,$\textbf{p}_\textbf{hov}$,
             $\textbf{p}_\textbf{0}$,$\textbf{p}_\textbf{U}$,$p_\text{mov}$,$v$}
            \KwOut{$\text{A charging schedule }\textbf{s}_\text{CS}$}
            $\mathcal{S}_\text{PosDir}{=}$\text{cMFEDS}$(\mathcal{N},\mathcal{N}_\text{ChrgPos}{=}\mathcal{N},\phi,D)$;\label{alg_felkh:1}\\
            $\text{Construct ETC matrix }\mathbf{C}_\text{ETC}{:=}[c(i,j)]_{i{\in}\mathcal{S}_\text{PosDir},u_j{\in}\mathcal{U}}$;\label{alg_felkh:2}\\
            $[\textbf{t},e^\text{UAV}_\text{Hov},e^\text{UAV}_\text{Chrg},e^\text{Nodes}_\text{Rcv}]{=}\text{sloveP3}(\textbf{C}_\textbf{ETC},\textbf{e}_\textbf{B},\textbf{e}_\textbf{E},\textbf{p}_\textbf{0},\textbf{p}_\textbf{hov})$;\label{alg_felkh:3}\\
            Construct charging schedule item set $\mathcal{S}^\text{CTS}_\text{Chrg}$;\label{alg_felkh:4}\\
            Construct $\mathcal{L}^*_\text{Refine}{:=}\{l_i|t_\text{Pos}(i){\geq}0,u_i{\in}\mathcal{U}\}$ using $\textbf{t}$ and $\mathcal{S}_\text{PosDir}$;\label{alg_felkh:5}\\             $[\textbf{r},e^\text{UAV}_\text{Fly},\textbf{t}^\text{UAV}_\textbf{Fly}]=\text{LKH}(\mathcal{L}^*_\text{Refine})$;\label{alg_felkh:6}\\
            $e^\text{Total}_\text{Loss}=e^\text{UAV}_\text{Fly}{+}e^\text{UAV}_\text{Hov}{+}e^\text{UAV}_\text{Chrg}{-}e^\text{Nodes}_\text{Rcv}$;\label{alg_felkh:7}\\
            Construct flying schedule item set $\mathcal{S}^\text{CTS}_\text{Fly}$;\label{alg_felkh:8}\\
            $\textbf{s}^\text{CTS}{=}$combineAndSort$(\mathcal{S}^\text{CTS}_\text{Chrg},\mathcal{S}^\text{CTS}_\text{Fly})$;\label{alg_felkh:9}\\
            $\textbf{return } \textbf{s}^\text{CTS}$;
	  \end{algorithm}

\begin{lemma}
\label{lemma_timecomplexity_felkh3d}
The time complexity of FELKH-3D is $O(PN^2{+}MN{+}P^2)$, where $M{=}|\mathcal{S}_\text{PosDir}|$.
\end{lemma}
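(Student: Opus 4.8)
The plan is to bound the cost of Algorithm~\ref{alg_felkh} line by line and then retain the dominant terms. Throughout, $P{=}|\mathcal{N}_\text{ChrgPos}|$ denotes the number of candidate charging positions, $N{=}n$ the number of nodes, and $M{:=}|\mathcal{S}_\text{PosDir}|$ the number of Pos-Dir pairs produced by cMFEDS.

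First I would charge line~\ref{alg_felkh:1}: by Lemma~\ref{lemma_cMFEDS_timecomplexity} the call to cMFEDS costs $\mathcal{O}(PN^2)$. Line~\ref{alg_felkh:2} assembles the ETC matrix $\mathbf{C}_\text{ETC}$, which has $M$ rows (one per Pos-Dir pair) and $N$ columns (one per node); each entry is a single evaluation of the closed-form coefficient in Eq.~\eqref{eq_uav_egy_tranfer_coeff_model} (one distance, one angle test, one arithmetic expression), i.e.\ $\mathcal{O}(1)$, so this line costs $\mathcal{O}(MN)$. Lines~\ref{alg_felkh:4}, \ref{alg_felkh:5} and~\ref{alg_felkh:8} each make a single pass over $\mathcal{S}_\text{PosDir}$ together with the position list to build the charging/flying schedule-item sets and the reduced position set $\mathcal{L}^*_\text{Refine}$, costing $\mathcal{O}(M{+}P)$; line~\ref{alg_felkh:7} is $\mathcal{O}(1)$, and line~\ref{alg_felkh:9} merges the two schedule-item lists --- already grouped by position in tour order --- in $\mathcal{O}(M{+}P)$. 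Since $N{\ge}1$, the term $\mathcal{O}(M{+}P)$ is absorbed into $\mathcal{O}(MN{+}P^2)$.

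The two lines requiring a stated convention are line~\ref{alg_felkh:3} and line~\ref{alg_felkh:6}. Line~\ref{alg_felkh:3} solves the linear program P3 in the $M$-dimensional charging-time vector with $n$ demand constraints; treating the LP solver (\textbf{Cplex}) as a polynomial-time black box, I would account its cost as subsumed by the other terms (and at worst comparable to $\mathcal{O}(MN)$). Line~\ref{alg_felkh:6} runs LKH on $|\mathcal{L}^*_\text{Refine}|{+}1{\le}P{+}1$ points; as LKH is a heuristic lacking a clean worst-case bound, I would adopt the conventional accounting that a run with a bounded number of improvement rounds is $\mathcal{O}(P^2)$ (dominated by the quadratic distance-matrix setup and the per-round local moves). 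Collecting everything, the total is $\mathcal{O}(PN^2){+}\mathcal{O}(MN){+}\mathcal{O}(P^2){+}\mathcal{O}(M{+}P){=}\mathcal{O}(PN^2{+}MN{+}P^2)$.

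I expect the main obstacle to be precisely the justification of the $\mathcal{O}(P^2)$ charge for LKH and, secondarily, the LP-solve bound: both are external solvers whose exact worst-case complexity is either unfavorable or not well defined, so the lemma is best read as resting on the standard quadratic/polynomial accounting customarily used for these subroutines rather than on a first-principles derivation. Once that convention is fixed, the remainder is a routine line-by-line summation.
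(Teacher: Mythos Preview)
Your proposal is correct and follows essentially the same line-by-line accounting as the paper's own proof: cMFEDS at $\mathcal{O}(PN^2)$, the ETC matrix at $\mathcal{O}(MN)$, LKH charged at $\mathcal{O}(P^2)$, and the LP solve treated as dominated. If anything you are more careful than the paper, which simply asserts $\mathcal{O}(M)$ for the Cplex call and $\mathcal{O}(P^2)$ for LKH without discussion; your explicit flagging of these as conventional (rather than worst-case) charges is the right way to read the lemma.
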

\begin{proof}
Proof is provided in Sec.X in the Appendix.
\end{proof}
\section{Testbed experiment}
\label{sec_testbed}
We set up a small-scale 3D WRSN scenario outdoors and conducted tests on the FELKH-3D algorithm. 
\subsection{Experiment Setup}
As illustrated in Fig.~\ref{fig_intro}, the UAV is equipped with a Powercast TX91501 module for wireless power transfer, while five WRSN nodes employ Powercast P2110-EVB modules with energy storage. These nodes are deployed within a 4m × 4m × 2m three-dimensional space and periodically report their energy levels to a PC via wireless communication for centralized data processing. Although the TX91501 module supports a maximum transmission distance of 12m, the effective communication range is restricted to 2m to improve transmission efficiency. In the simulated deployment, nodes are positioned at A (1.5, 0.3, 1.2), B (2.7, 0.9, 0.9), C (2.7, 1.2, 0.8), D (3.3, 1.2, 0), and E (3.3, 1.8, 0), with the UAV base at (4, 4, 0).
Directional charging is achieved through the TX91501's built-in directional antenna, which features a 60° beam width and height. Mounted on the UAV, this module enables controlled directional charging by adjusting the drone’s pitch, roll, and yaw, ensuring precise alignment between the antenna’s emission direction and the target node. This mechanism enhances wireless power transfer efficiency.

\begin{figure}
    \centering
    \includegraphics[width=1\linewidth]{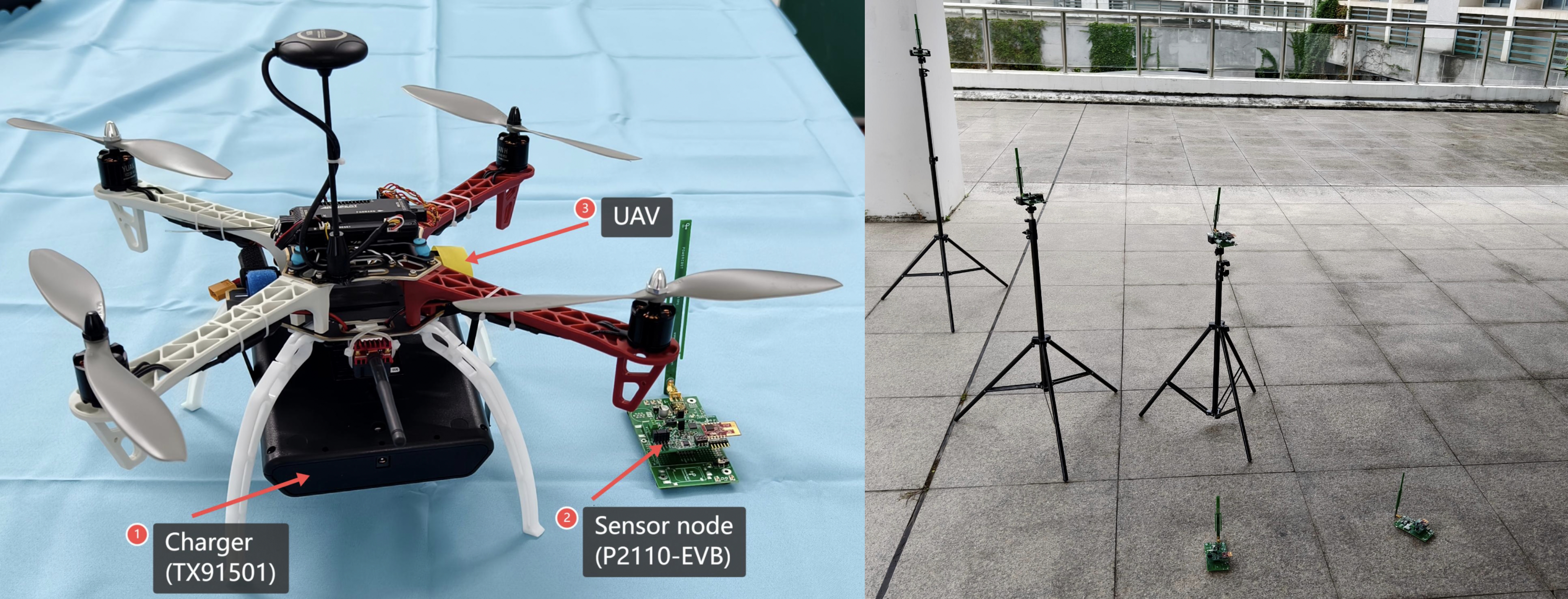}
    \caption{Test-bed devices and system}
    \label{fig_intro}
    \vspace{-0.8cm}
\end{figure}

\begin{figure}[!htbp]
    \centering
    \begin{subfigure}[b]{0.48\linewidth} 
      \centering
    \includegraphics[width=1\linewidth]{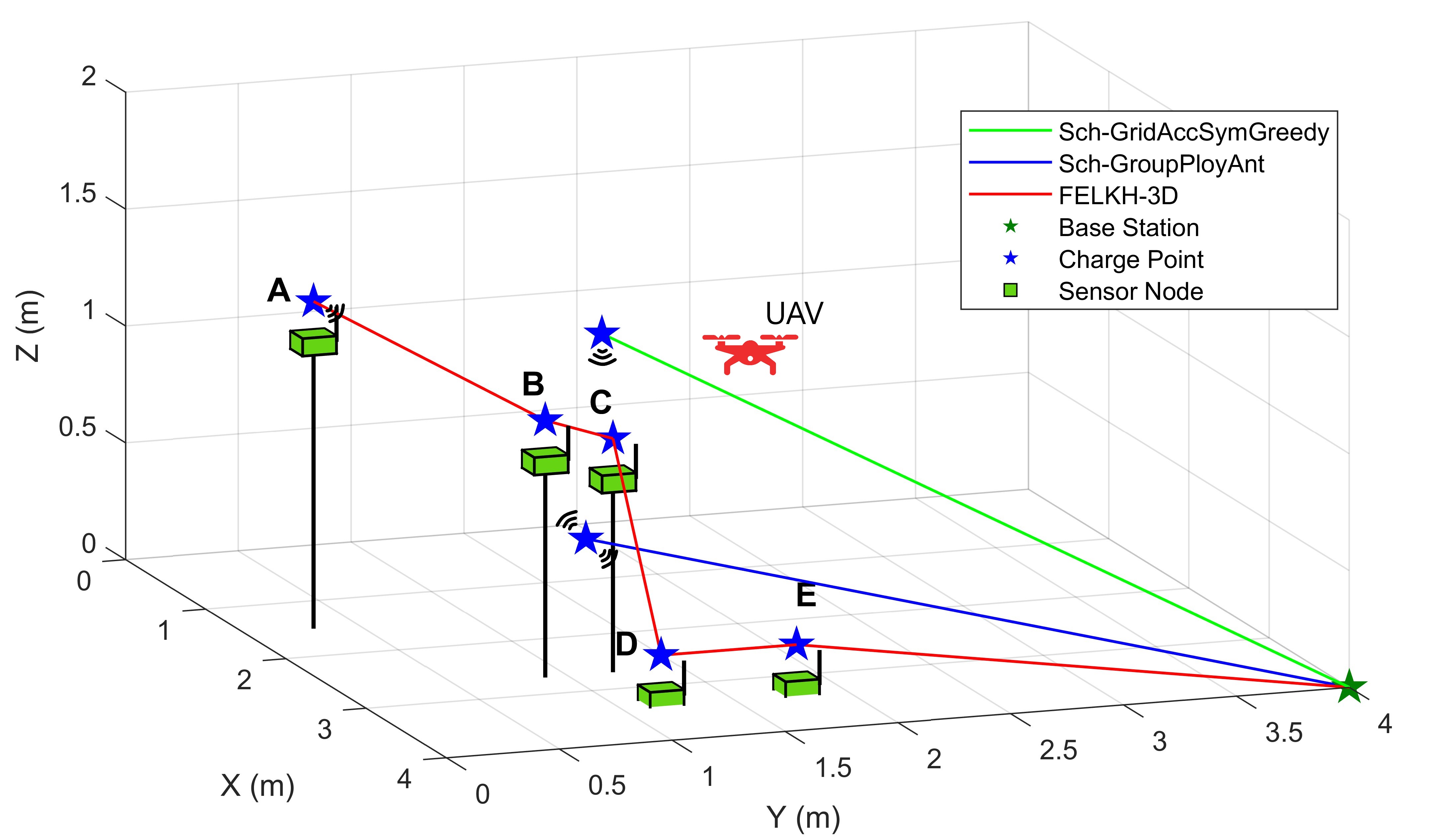}
    \caption{Test-bed scenario}
    \label{fig_test_scenario}
    \end{subfigure}
    \hfill
    \begin{subfigure}[b]{0.48\linewidth} 
        \centering
        \includegraphics[width=1\linewidth]{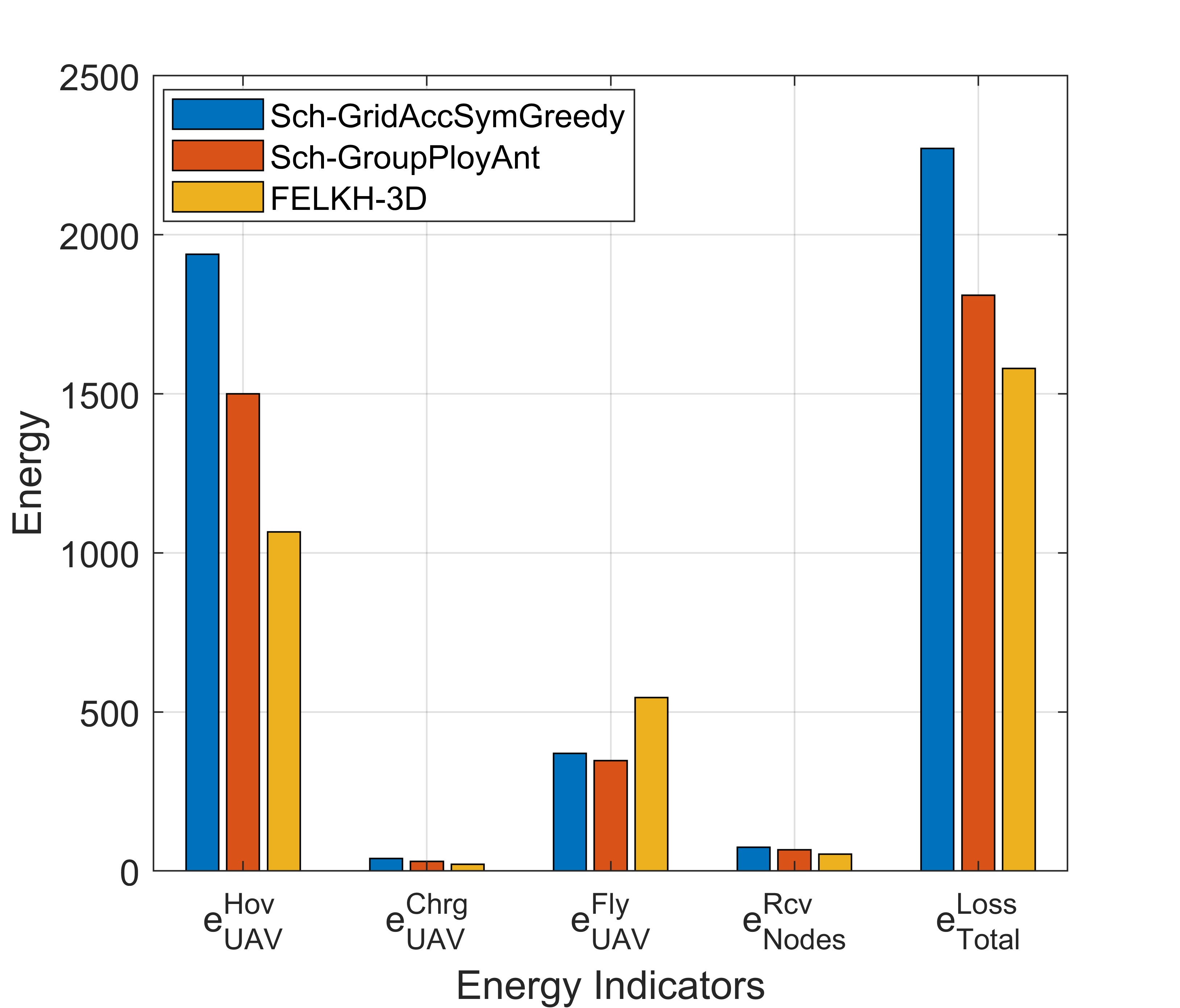}
        \caption{Experimental Results}
        \label{fig_test_results}
    \end{subfigure}
    \caption{Testbed experiment}
    \label{fig_res4}
    \vspace{-0.5cm}
\end{figure}

\begin{table}[htbp]
    \centering
    \caption{Experiment Parameters}
    \label{t_exp_simu}
    \small
    \begin{tabular}{c|c||c|c||c|c}
        \hline
        \makebox[0.05\textwidth][c]{\text{Parameter}} & \makebox[0.025\textwidth][c]{\text{Value}} & \makebox[0.05\textwidth][c]{\text{Parameter}} & \makebox[0.035\textwidth][c]{\text{Value}} & \makebox[0.05\textwidth][c]{\text{Parameter}} & \makebox[0.020\textwidth][c]{\text{Value}} \\
        \hline
        $e_\text{B}$ & 20$\sim$80 J & $e_\text{E}$ & 30$\sim$90 J & $n$ & 5\\ \hline
        $p_\text{0}$ & 3 W & $p_\text{Hov}$& 150 J/s& $p_\text{Fly}$& 160 J/s \\ \hline
        $l_0$ & (4,4,0) &  $D$ & 2 m & $v$ & 3 m/s \\ \hline
       \multicolumn{2}{c|}{Region} &  \multicolumn{4}{|c}{4m×4m×2m}\\ \hline
    \end{tabular}
    \vspace{-0.5cm}
\end{table}

\subsection{Experimental Results}
Figure~\ref{fig_test_scenario} illustrates the deployment configuration of five sensor nodes in a three-dimensional experimental scenario, along with the predefined flight paths and the WRSN. Three algorithms were compared, with results shown in Fig.\ref{fig_test_results}. The proposed FELKH-3D algorithm achieves the lowest energy loss—30\% and 13\% lower than the Sch-GridAccGreedy and Sch-GroupPloyAnt algorithms, respectively. This improvement is due to FELKH-3D’s ability to position the UAV closer to nodes during hovering. In contrast, Sch-GridAccGreedy constrains the UAV to a fixed altitude, reducing energy reception efficiency and yielding suboptimal charging directions. Sch-GroupPloyAnt forms charging groups based on node proximity, but suffers from prolonged charging times due to limited transmission power. FELKH-3D addresses both reception efficiency and power constraints by distributing charging time across multiple positions and applying efficient path planning to approximate an optimal route, thereby reducing charging cost and overall energy loss. Although FELKH-3D incurs slightly higher flight costs, it significantly reduces hovering cost—the dominant component—leading to minimal total energy loss, as shown in Fig.~\ref{fig_test_results}. In larger-scale scenarios with higher energy demands, faster UAV speeds, and stronger transmission capabilities, the advantage of FELKH-3D is expected to be even more substantial.

\section{Performance Evaluation}
\label{sec_simulation}
We evaluate FELKH-3D’s performance via numerical simulations using MATLAB 2022b running on a computer with
i5-9300H CPU, 16 GB RAM, and Windows 10 OS.

\subsection{Algorithms for Comparison}
\label{s7_algorithms_for_comparision}
As solving the DCS-3D problem involves several steps each has several alternative options/methods, we will comparatively evaluate the alternative options in the context of a complete algorithm. To emphasize the comparison of the methods for a step, all other steps take the same options. We also evaluate several complete algorithms with different step-wise options.

For charging position generation, the node-based method, Grid-Based method~\cite{Jiang}, and Clustering-based method~\cite{Liang2023TVT} are selected for comparison, which are denoted by ``\emph{Pos-Node}``, ``\emph{Pos-Grid}``, ``\emph{Pos-Group}``, and ``\emph{Pos-Cluster}``, respectively. For charging direction selection, we select the simple node-based direction, GCC~\cite{Jiang}, ACC~\cite{Jiang}, and our cMFEDS algorithm. As an representation of fixed direction method, we propose to divide the whole sphere surface into regular faces by referring to suitable regular polyhedrons or the soccer ball, and select the directions passing the centers of the faces as charging directions. The methods are denoted as ``\emph{Dir-Node}``, ``\emph{Dir-GCC}``, ``\emph{Dir-ACC}``, ``\emph{Dir-FuncEqv}``, and ``\emph{Dir-Polyhedron}``, respectively. For charging tour searching, 
 the ant colony algorithm was chosen as the representative heuristic algorithm for solving the TSP. In addition, classic TSP algorithms (LKH algorithm and greedy algorithm) were also used. Therefore, the three methods used are represented as "Tour-LKH", "Tour-Ant", and "Tour-Greedy" respectively.
Combining different options of the steps leads to complete algorithms for solving the DCS-3D problem. We will select typical complete algorithms to compare with our FELKH-3D in Sec.~\ref{sec_cmp_complete_schemes}.

\subsection{Performance Metrics and Simulation Setup}
\label{s7_preformace_metrics_and_simulation_setup}
Two primary performance metrics adopted here are \emph{total energy loss} and \emph{time span}. Detailed metrics such as charging energy consumption, charging time, flying time will also be checked when suitable. 

Table~\ref{t_simu} presents the crucial simulation parameters and the corresponding default values in our simulations. A typical set of parameter values is referred to as a simulation configuration here. The impact of a parameter on the algorithms is investigated by exhaustive experiments under various simulation configurations. The UAV’s energy consumption related parameters in the simulation experiments are referenced from~\cite{Lin2021TMC,sun2023CC,Tianle2023CC}. To mitigate randomness, 100 problem instances for each simulation configuration are tested and the results are averaged. We collect and average the performance metrics of the instances as the final result. 

\begin{table}[htbp]
    \centering
    \caption{Simulation Parameters}
    \label{t_simu}
    \small
    \begin{tabular}{c|c||c|c||c|c}
        \hline
        \makebox[0.05\textwidth][c]{\text{Parameter}} & \makebox[0.025\textwidth][c]{\text{Value}} & \makebox[0.035\textwidth][c]{\text{Parameter}} & \makebox[0.035\textwidth][c]{\text{Value}} & \makebox[0.045\textwidth][c]{\text{Parameter}} & \makebox[0.020\textwidth][c]{\text{Value}} \\
        \hline
        $e_\text{B}$ & 20$\sim$90 J & $e_\text{E}$ & 20$\sim$90 J & $e_\text{U0}$ & 90 J\\ \hline
        $\beta$ & 4 & $\gamma$ & 2 & $\delta$ & 12 \\ \hline
        $p_\text{0}$ & 1 W & $p_\text{Hov}$/$p_\text{Fly}$ & 0.1$\sim$1& $p$& 8W \\ \hline
        $l_0$ & (0,0,0) & $\phi$ & $\pi/3$& $v$ & 1 m/s \\ \hline
        $D$ & 6 m & $c_\text{max}$ & 0.9 & $n$ & 400$\sim$800 \\ \hline
       \multicolumn{2}{c|}{Region} &  \multicolumn{4}{|c}{100m×100m×20m}\\ \hline
    \end{tabular}
    \label{table2}
    \vspace{-0.6cm}
\end{table}

\subsection{Comparison of Charging Position Generation Methods}
\label{s7_performace_CPG}
We fix the options of Dir-FuncEqv and Tour-LKH and comparatively evaluate the four charging position generation methods with respect to node number $n$ and $p_\text{Hov}/p_\text{Fly}$.

1) Number of Nodes: We varied the number of nodes $n$ from 400 to 800 while keeping the ratio $p_\text{Hov}/p_\text{Fly}$ fixed at 1. The results are shown in Fig.~\ref{fig_res1}. Total energy loss, time span, charging position number, and tour length are used as performance metrics. 

Fig.~\ref{fig_res1_energy} shows that $e^\text{Total}_\text{Loss}$ of Pos-Node is lower than those of the other three algorithms. This is due to the fact that the energy transmission coefficient decreases as charging distance increases, resulting in higher energy consumption for Pos-Cluster, Pos-Group, and Pos-Grid, where the nodes are charged over a distance. This not only leads to higher charging energy consumption, but also results in charging schedules with longer time span. As shown in Fig.~\ref{fig_res1_time}, Pos-Node exhibits a more favorable time span. Fig.~\ref{fig_res1_count} shows the trend of charging position number as $n$ increases. Pos-Cluster and Pos-Group methods creates much fewer charging positions than Pos-Grid and Pos-Node, however fewer charging positions do not necessarily leads to lower energy loss schedules. In Fig.~\ref{fig_res1_tour}, Pos-Cluster and Pos-Group generate charging tours considerably shorter than the others, owing to their fewer charging positions. Although Pos-Node leads to longer tour distance, it achieves the least energy consumption and shortest time span, and so it is adopted in our FELKH-3D.    

\begin{figure*}[!htb]
    \centering
    \begin{subfigure}[b]{0.24\linewidth} 
        \centering
        \includegraphics[width=\linewidth]{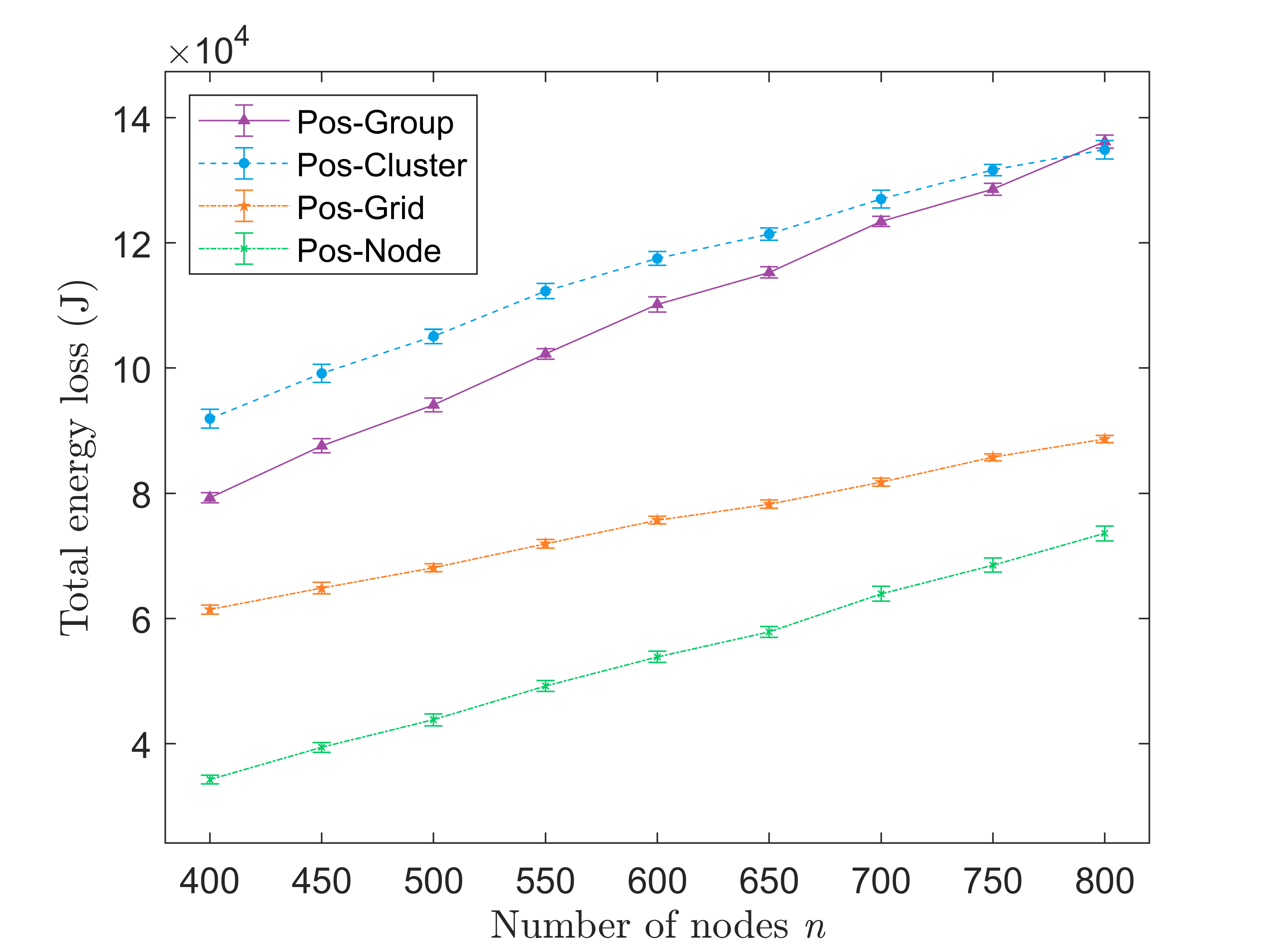}
        \caption{Total energy loss}
        \label{fig_res1_energy}
    \end{subfigure}
    \hfill
    \begin{subfigure}[b]{0.24\linewidth} 
        \centering
        \includegraphics[width=\linewidth]{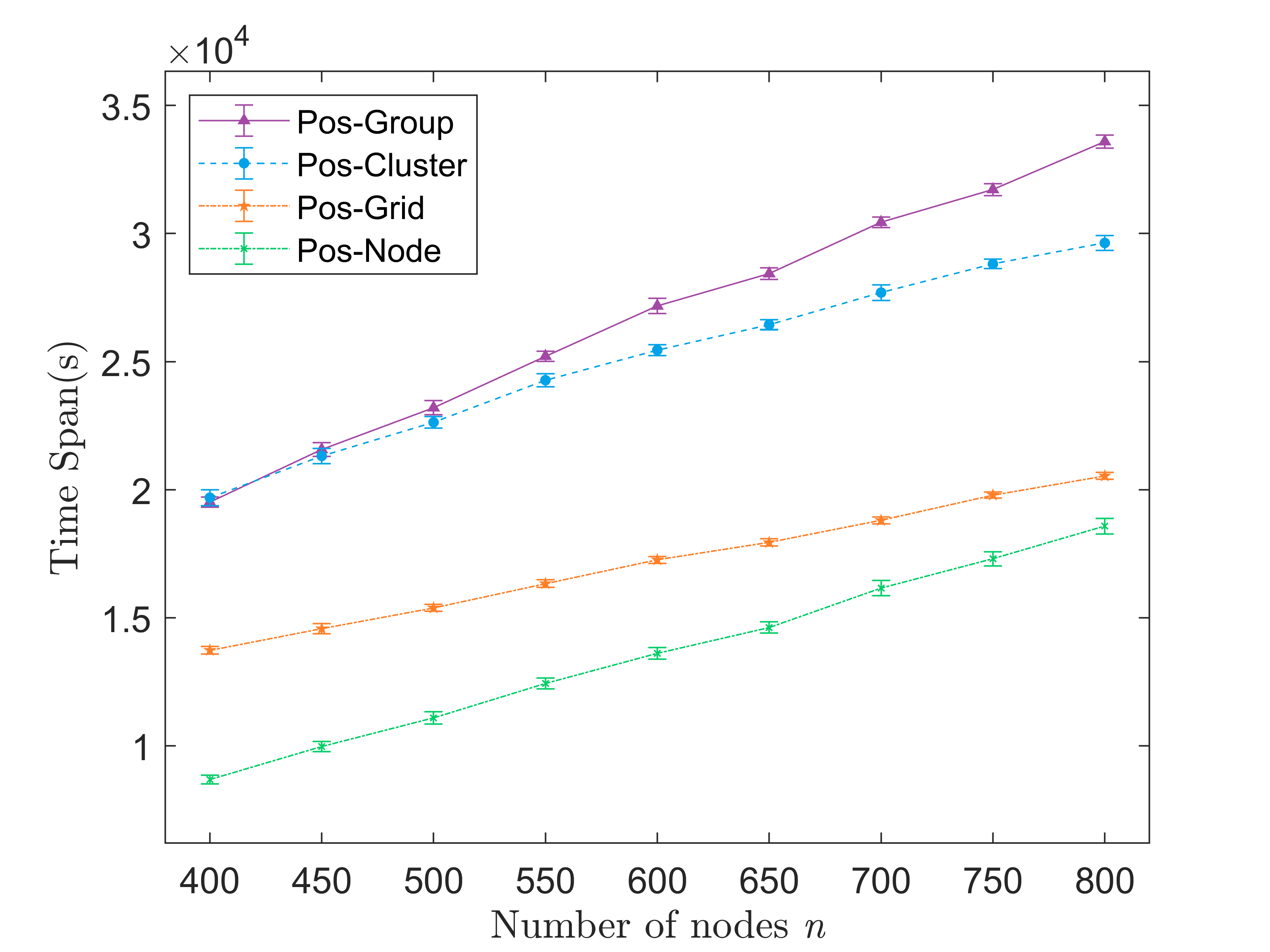}
        \caption{Time span}
        \label{fig_res1_time}
    \end{subfigure}
    \hfill
    \begin{subfigure}[b]{0.24\linewidth} 
        \centering
        \includegraphics[width=\linewidth]{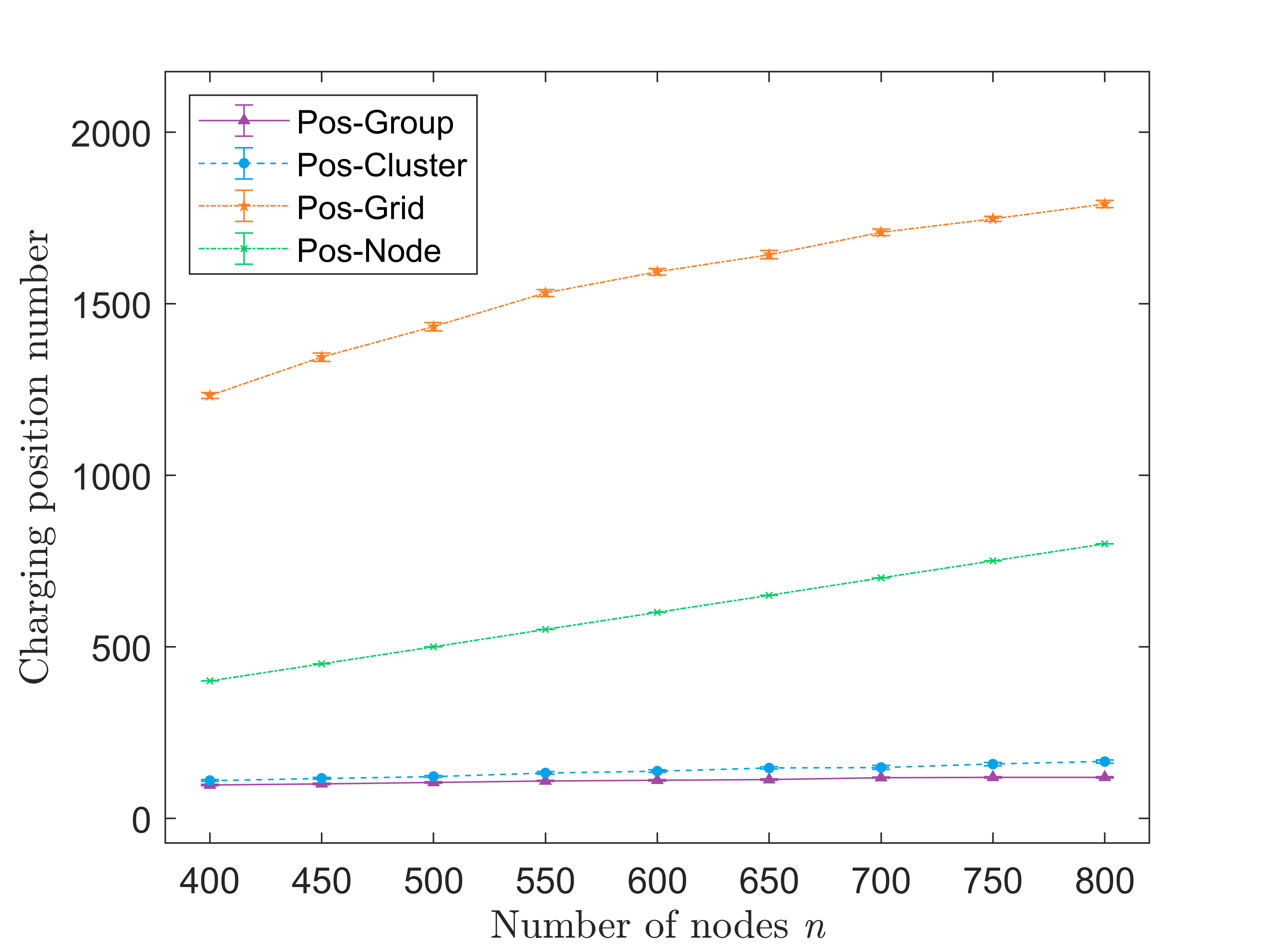}
        \caption{Charging position number}
        \label{fig_res1_count}
    \end{subfigure}
    \hfill
    \begin{subfigure}[b]{0.24\linewidth} 
        \centering
        \includegraphics[width=\linewidth]{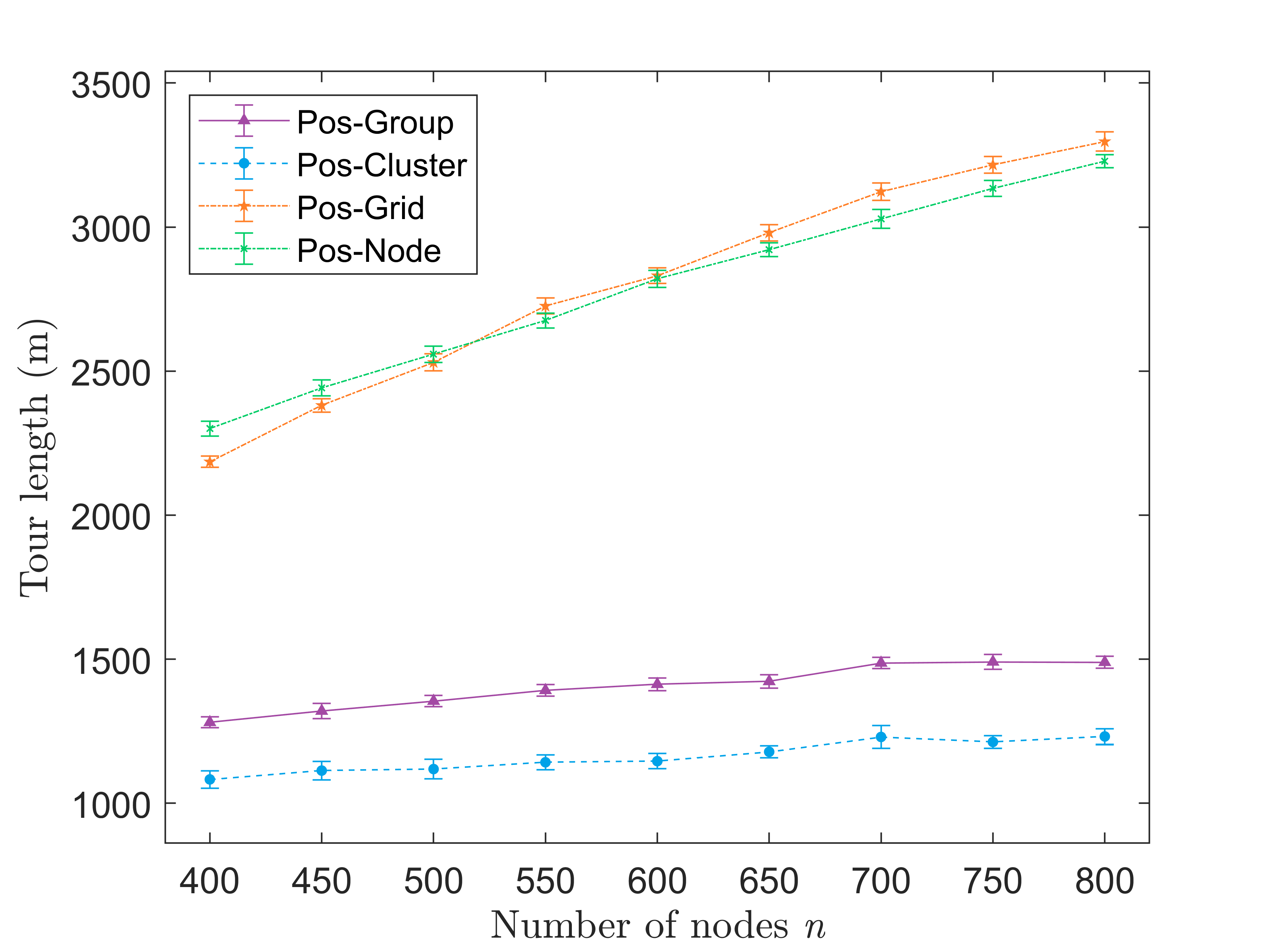}
        \caption{Tour length}
        \label{fig_res1_tour}
    \end{subfigure}
    \caption{Effects of the number of nodes}
    \label{fig_res1}
    \vspace{-0.6cm}
\end{figure*}

2) Ratio of $p_\text{Hov}/p_\text{Fly}$: The ratio $p_\text{Hov}/p_\text{Fly}$ increases from 0.1 to 1 with a step size of 0.1, while the number of nodes $n$ is fixed at $n=400$. The results are shown in Fig.~\ref{fig_res2}. The total energy loss $e^\text{Total}_\text{Loss}$ of the four algorithms all increase, indicating that the UAV's hovering energy consumption is a primary source of energy consumption of the charging schedules. It is also evident that $e^\text{Total}_\text{Loss}$ of Pos-Node consistently remains lower than those of the other three algorithms, confirming that Pos-Node is the best among the charging position set generation methods.

\begin{figure*}[!htb]
    \centering
    \begin{subfigure}[b]{0.3\linewidth}
        \centering
        \includegraphics[width=\linewidth]{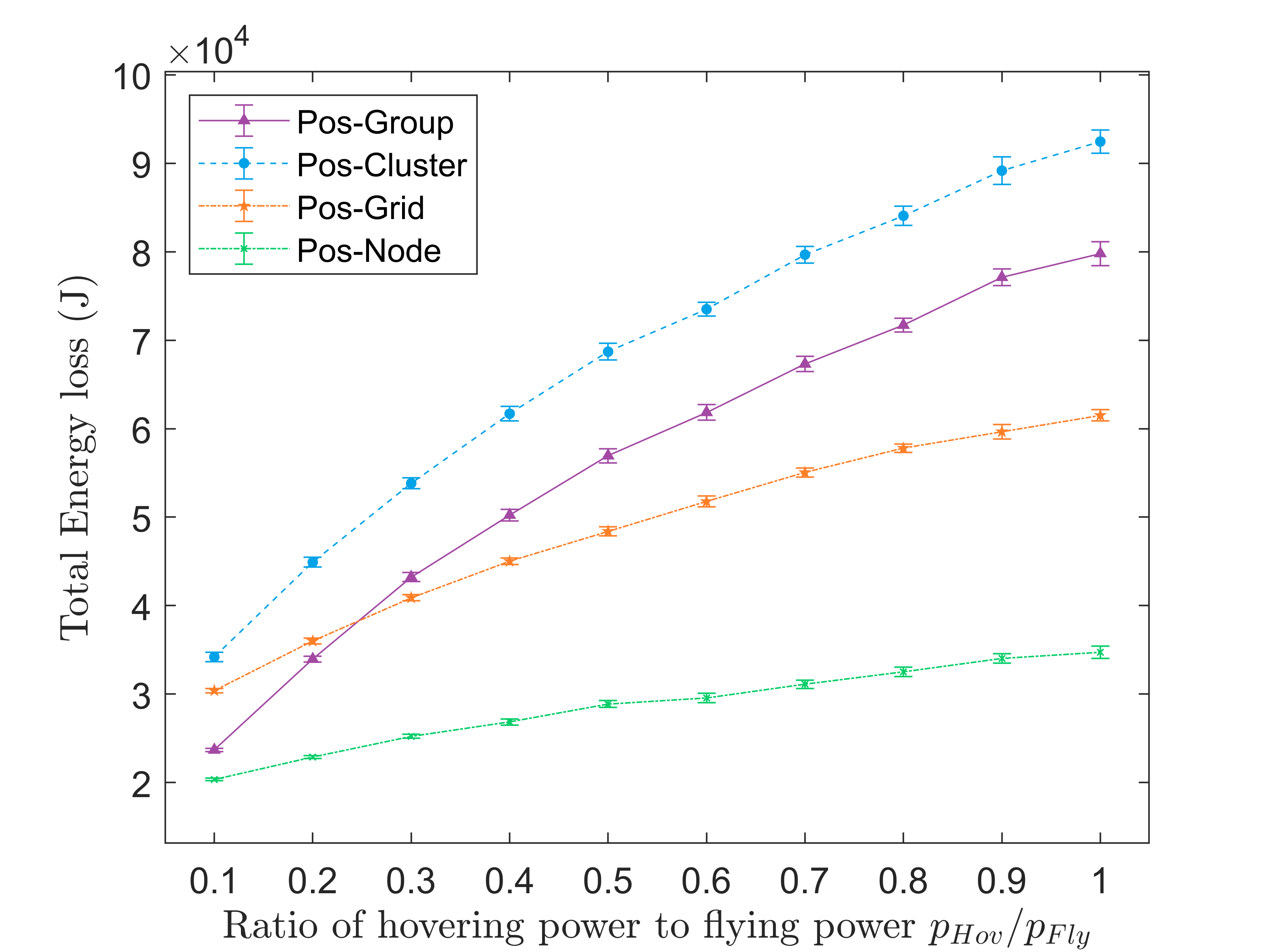}
        \caption{Total energy loss}
        \label{fig_res2_loss}
    \end{subfigure}
    \hfill
    \begin{subfigure}[b]{0.3\linewidth}
         \centering
        \includegraphics[width=\linewidth]{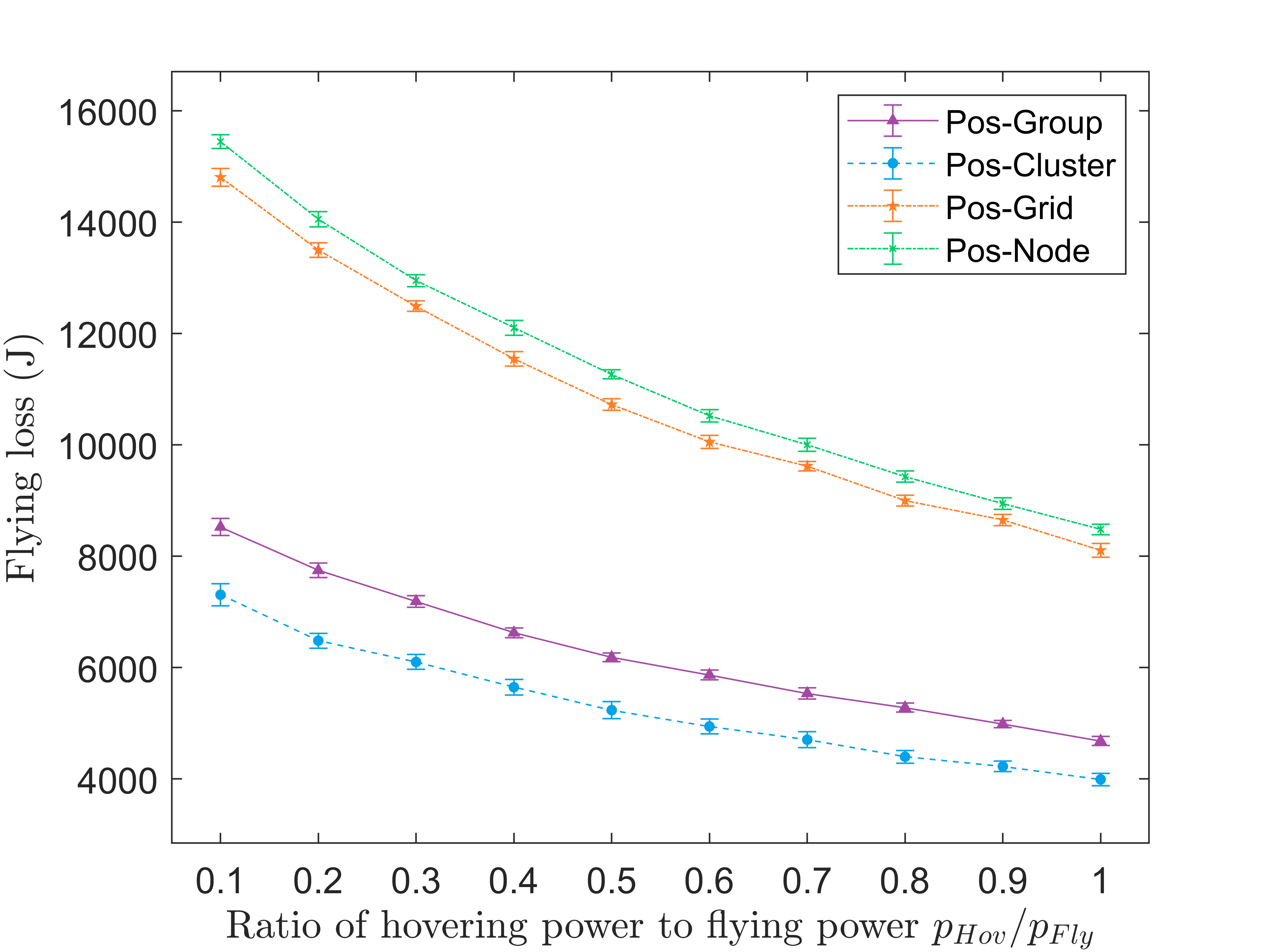}
        \caption{Flying energy consumption}
        \label{fig_res2_fly}
    \end{subfigure}
     \hfill
      \begin{subfigure}[b]{0.3\linewidth}
         \centering
        \includegraphics[width=\linewidth]{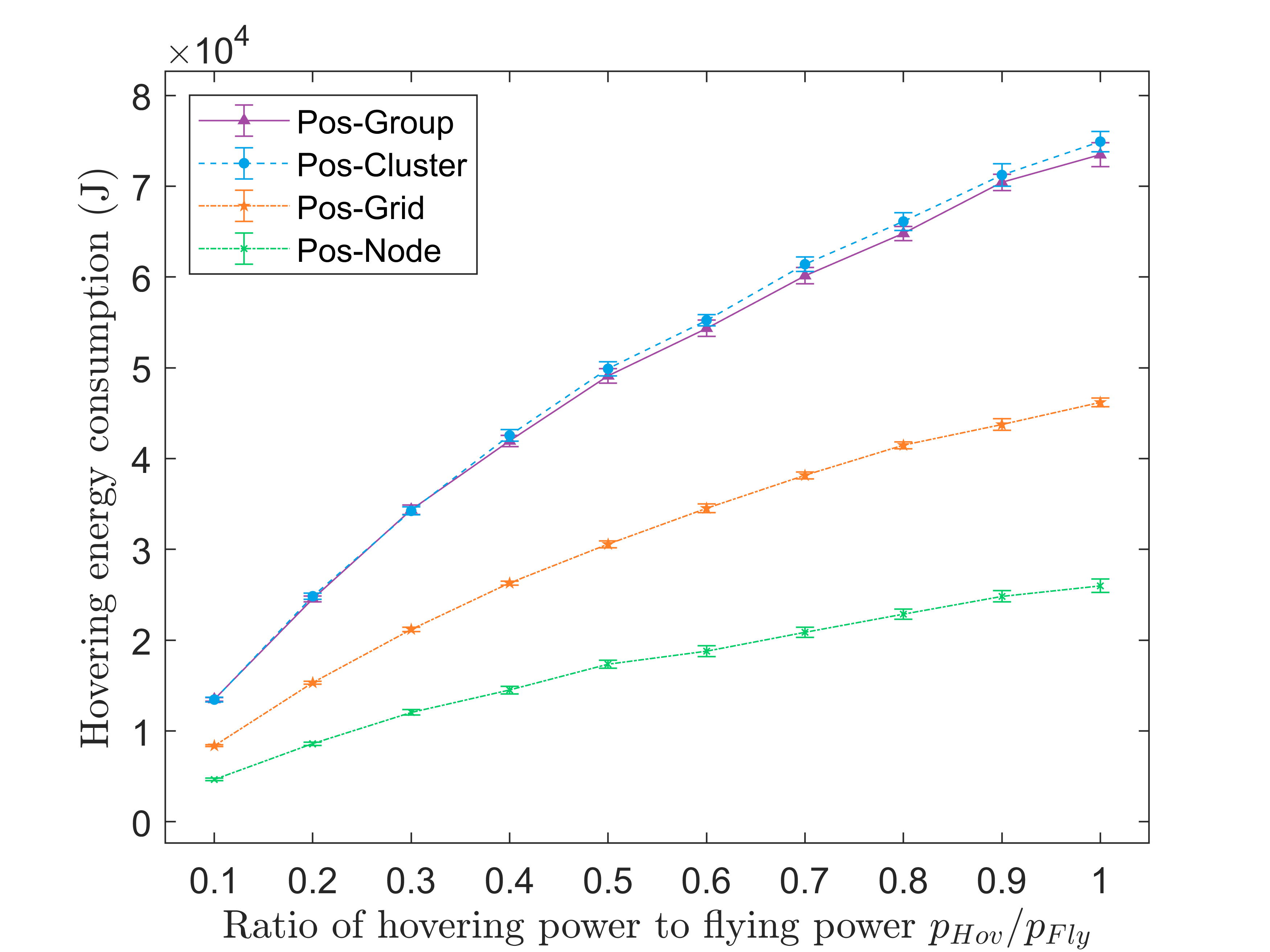}
        \caption{Hovering energy consumption}
        \label{fig_res2_hover}
    \end{subfigure}
    \caption{Effects of the ratio of hovering energy consumption power to flying energy consumption power.}
    \label{fig_res2}
    \vspace{-0.6cm}
\end{figure*}


\subsection{Comparison of Charging Direction Selection Methods}
\label{s7_performance_dc}
With the fixed options of Pos-Cluster and Tour-LKH, the performances of the five charging direction selection methods were evaluated with node number $n$ increases from 400 to 800. Six performance metrics are adopted in this work, including total energy loss, time span, charging direction number, direction choosing time, P3 solving time, and total running time. Direction choosing time metric measures the running time for choosing charging directions. P3 solving time represents the average time of the Cplex for solving P3, and total running time represents the time for running a whole scheme completely. The results are shown in Fig.~\ref{fig_res4}.

Fig.~\ref{fig_res4_loss} illustrates the total energy loss $e^\text{Total}_\text{Loss}$ of all the five methods all increase as $n$ increases. However, $e^\text{Total}_\text{Loss}$ of Dir-FuncEqv consistently remains lower than those of Dir-Node, Dir-ACC, and Dir-Polyhedron. This is because that Dir-Node, Dir-ACC, and Dir-Polyhedron fail to encompass all LMax-SCN set, which can be noticed easily with some analyses. Dir-GCC generates a charging direction set guaranteed to contains all LMax-SCN set, so it leads to schedules with total energy loss identical to Dir-FuncEqv. However, Dir-GCC's direction set is excessively redundant, as revealed in Fig.~\ref{fig_res4_count}, which makes Dir-GCC runs much slower than Dir-FuncEqv. Fig.~\ref{fig_res4_time} highlights the superior performance of Dir-FuncEqv in term of time span. Fig.~\ref{fig_res4_count} shows the number of charging directions determined by the methods, where the y-axis is in a logarithmic scale for the large scale differences of charging direction numbers. Fig.~\ref{fig_res4_dTime} shows that Dir-FuncEqv incurs higher time overhead, which is mainly due to an unfair point that Dir-FuncEqv contains a direction adjustment process for practical considerations, yet the others do not contains. In Fig.~\ref{fig_res4_cplex}, Dir-GCC consumes the most time due to the high redundancy in its charging directions. As shown in Fig.~\ref{fig_res4_rTime}, despite the additional time-consuming direction adjustment process, Dir-FuncEqv's total running time is not significantly different from the other algorithms. This demonstrates that Dir-FuncEqv is highly efficient.

\begin{figure}[!htbp]
    \centering
    \begin{subfigure}[b]{0.48\linewidth} 
        \centering
        \includegraphics[width=\linewidth]{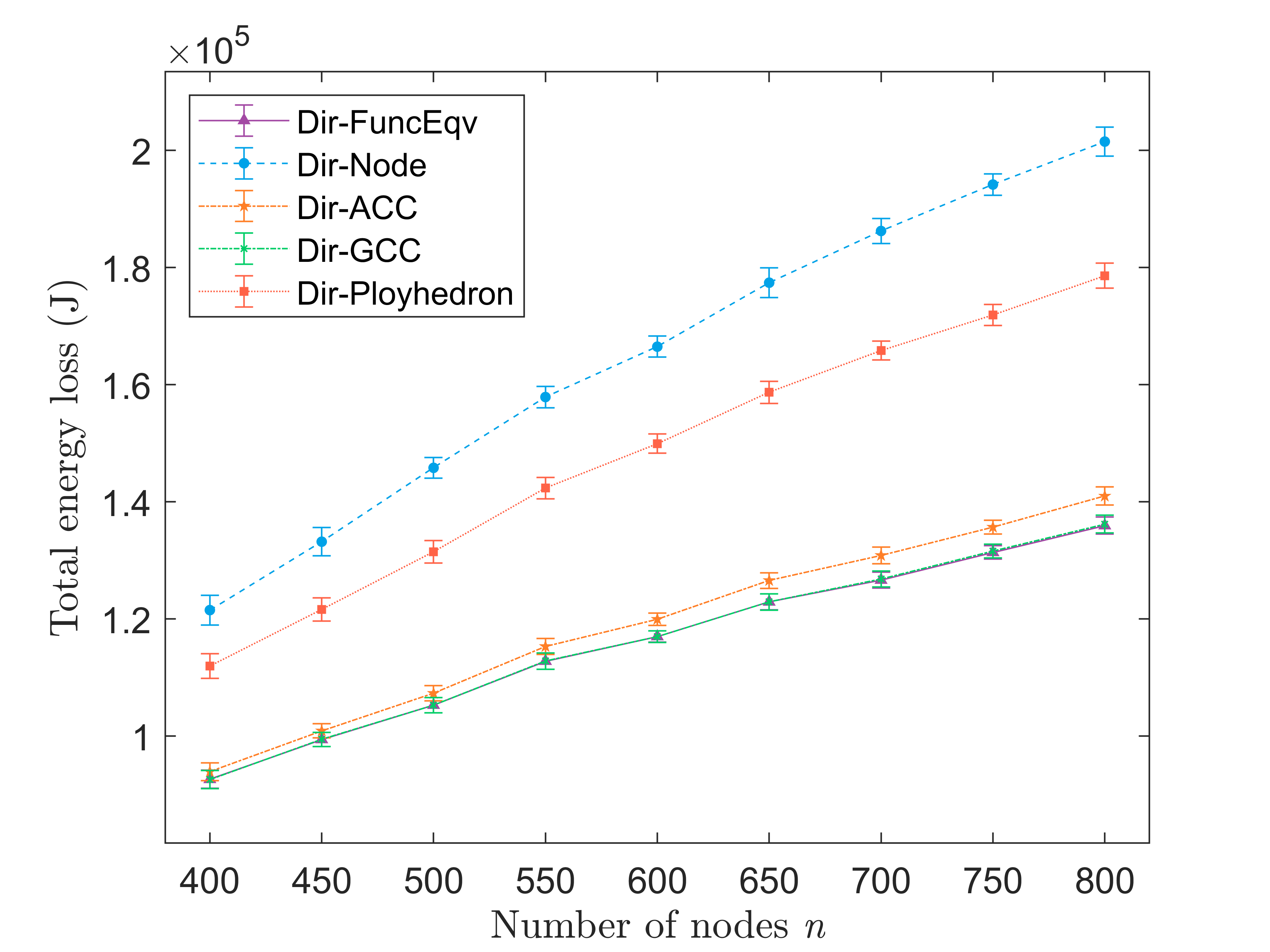}
        \caption{Total energy loss}
        \label{fig_res4_loss}
    \end{subfigure}
    \hfill
    \begin{subfigure}[b]{0.48\linewidth} 
        \centering
        \includegraphics[width=\linewidth]{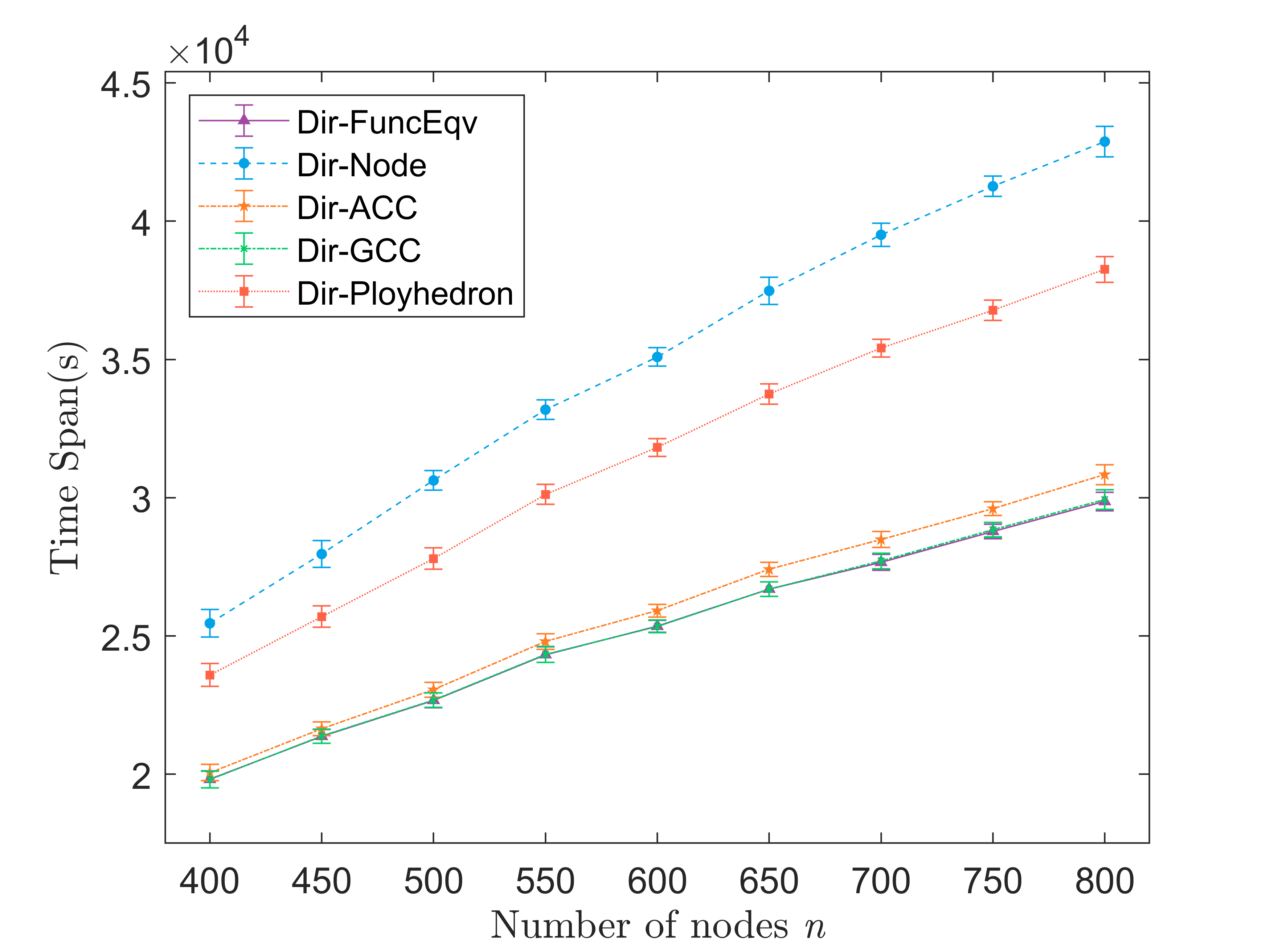}
        \caption{Time span}
        \label{fig_res4_time}
    \end{subfigure}
    
    \begin{subfigure}[b]{0.48\linewidth} 
        \centering
        \includegraphics[width=\linewidth]{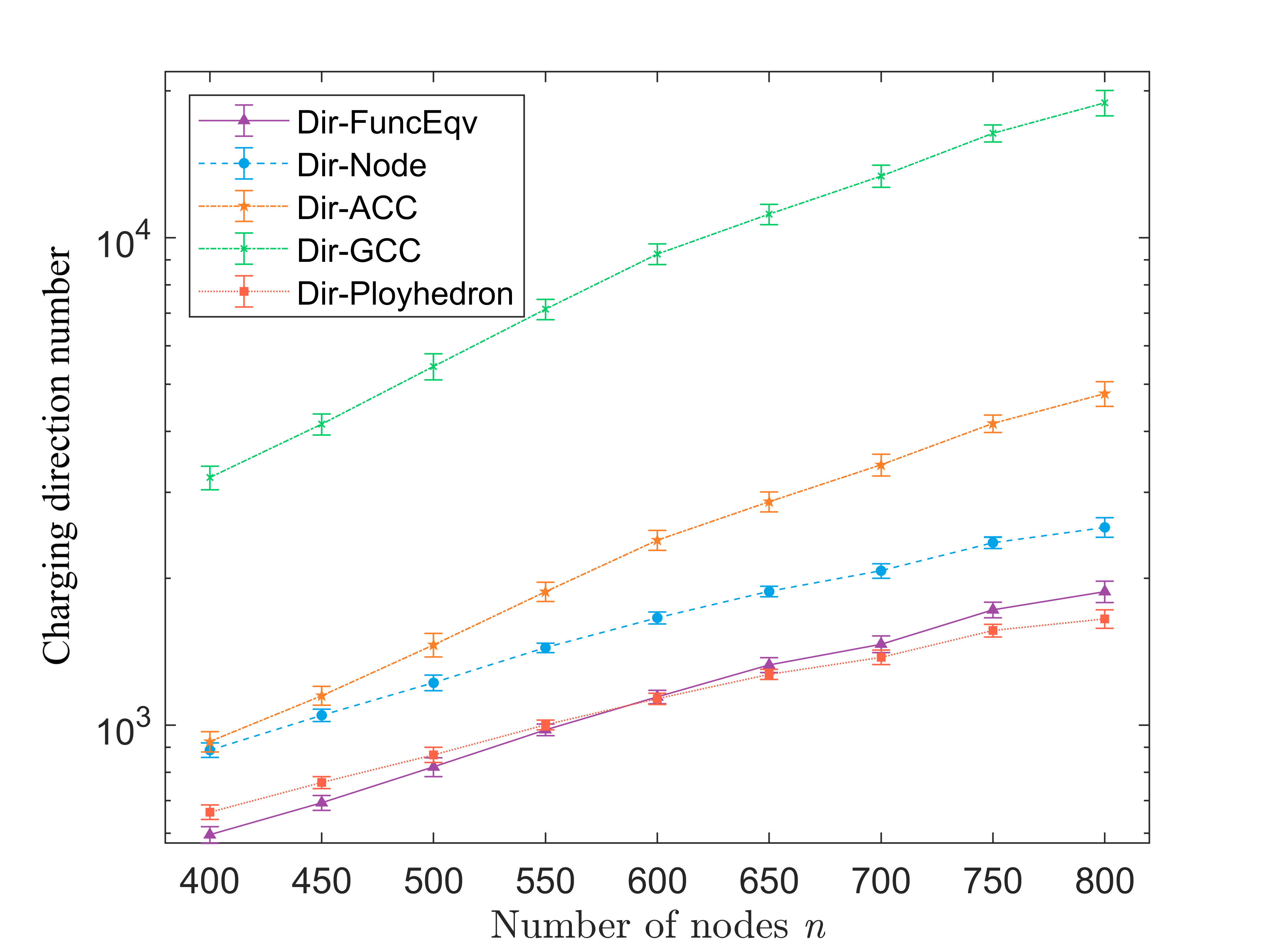}
        \caption{Charging direction number}
        \label{fig_res4_count}
    \end{subfigure}
    \hfill
    \begin{subfigure}[b]{0.48\linewidth} 
        \centering
        \includegraphics[width=\linewidth]{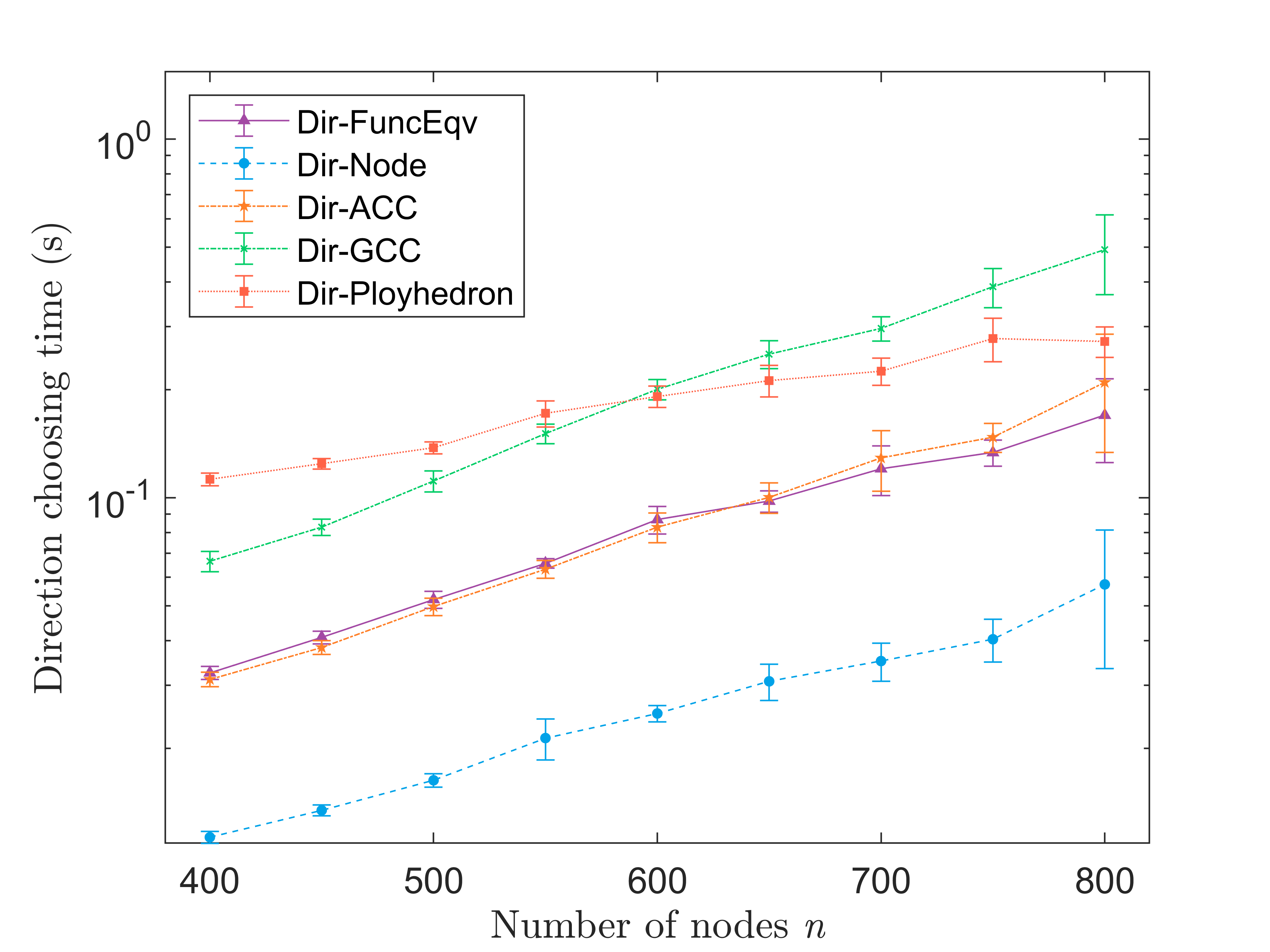}
        \caption{Direction choosing time}
        \label{fig_res4_dTime}
    \end{subfigure}
    
    \begin{subfigure}[b]{0.48\linewidth} 
        \centering
        \includegraphics[width=\linewidth]{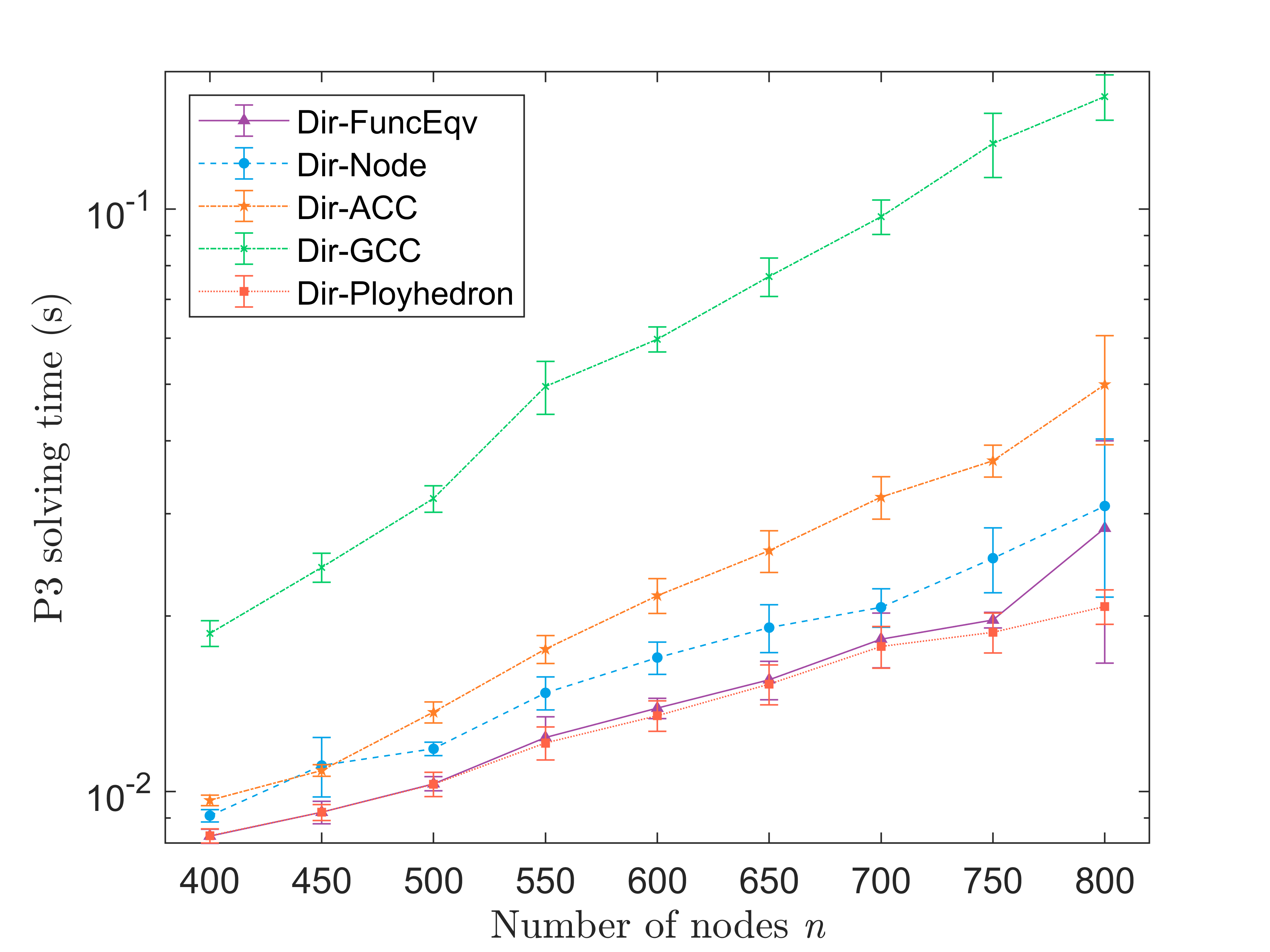}
        \caption{P3 solving time}
        \label{fig_res4_cplex}
    \end{subfigure}
    \hfill
    \begin{subfigure}[b]{0.48\linewidth} 
        \centering
        \includegraphics[width=\linewidth]{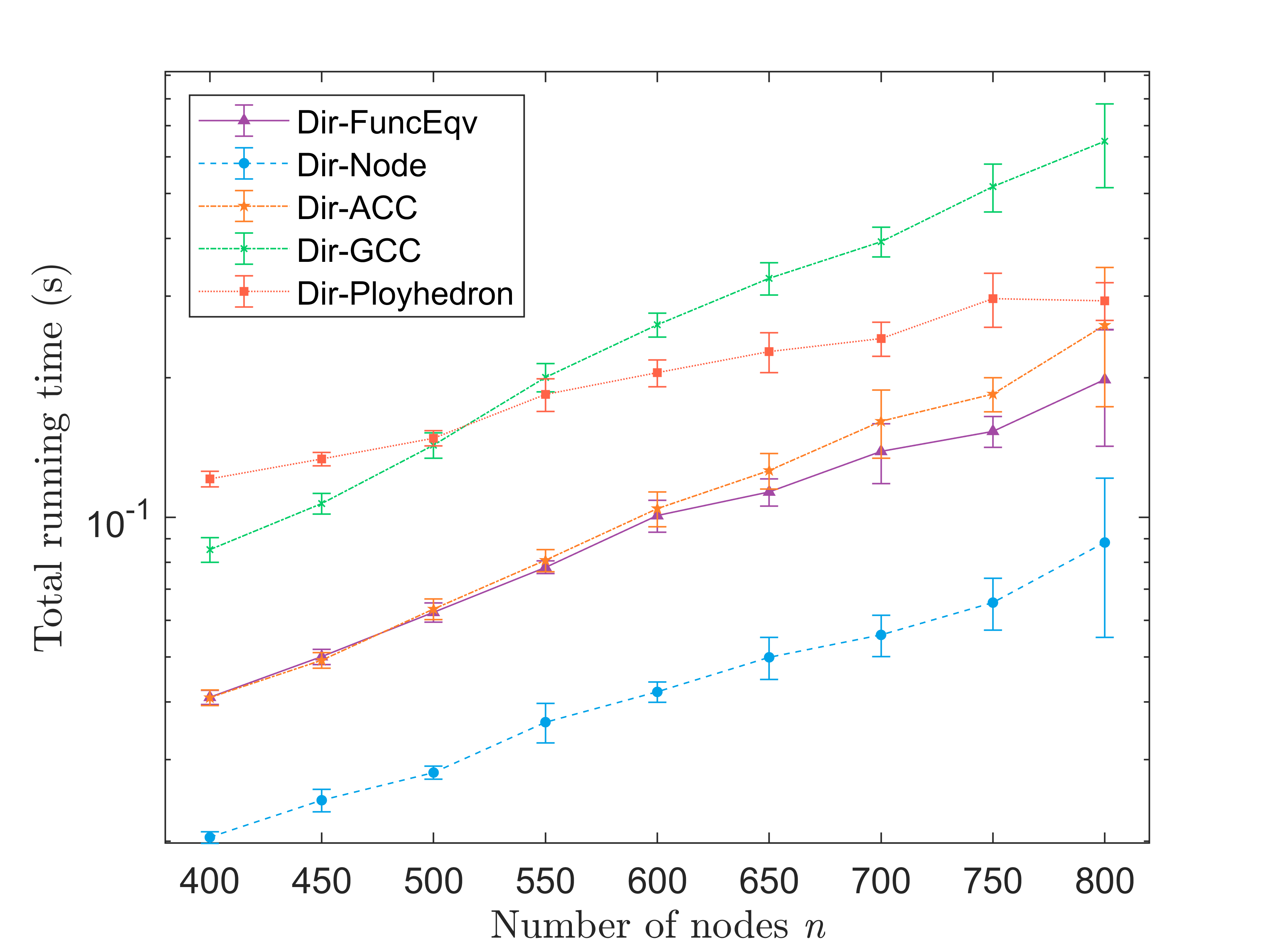}
        \caption{Total running time}
        \label{fig_res4_rTime}
    \end{subfigure}
    \caption{Performance comparison of charging direction selection methods}
    \label{fig_res4}
    \vspace{-0.5cm}
\end{figure}

\subsection{Comparison of Charging Tour Searching Methods}
To concentrate on the performance differences of \textit{Tour-LKH}, \textit{Tour-Ant}, and \textit{Tour-Greedy} in searching efficient charging tours, we fixed the options of Pos-Node and Dir-FuncEqv. The flying energy consumption and total running time performance metrics are shown in Fig.~\ref{fig_res5}.

As shown in Figure ~\ref{fig_res5_loss}, Tour-LKH consumes significantly less energy than Tour-Ant and Tour-Greedy. The results in Figure ~\ref{fig_res5_run} further demonstrate that Tour-LKH runs significantly faster than Tour-Ant. While Tour-Greedy boasts the fastest running speed due to the simplicity of its algorithm, this comes at the cost of significantly increased energy consumption.

\begin{figure}[!htb]
    \centering
    \begin{subfigure}[b]{0.48\linewidth}
        \centering
        \includegraphics[width=\linewidth]{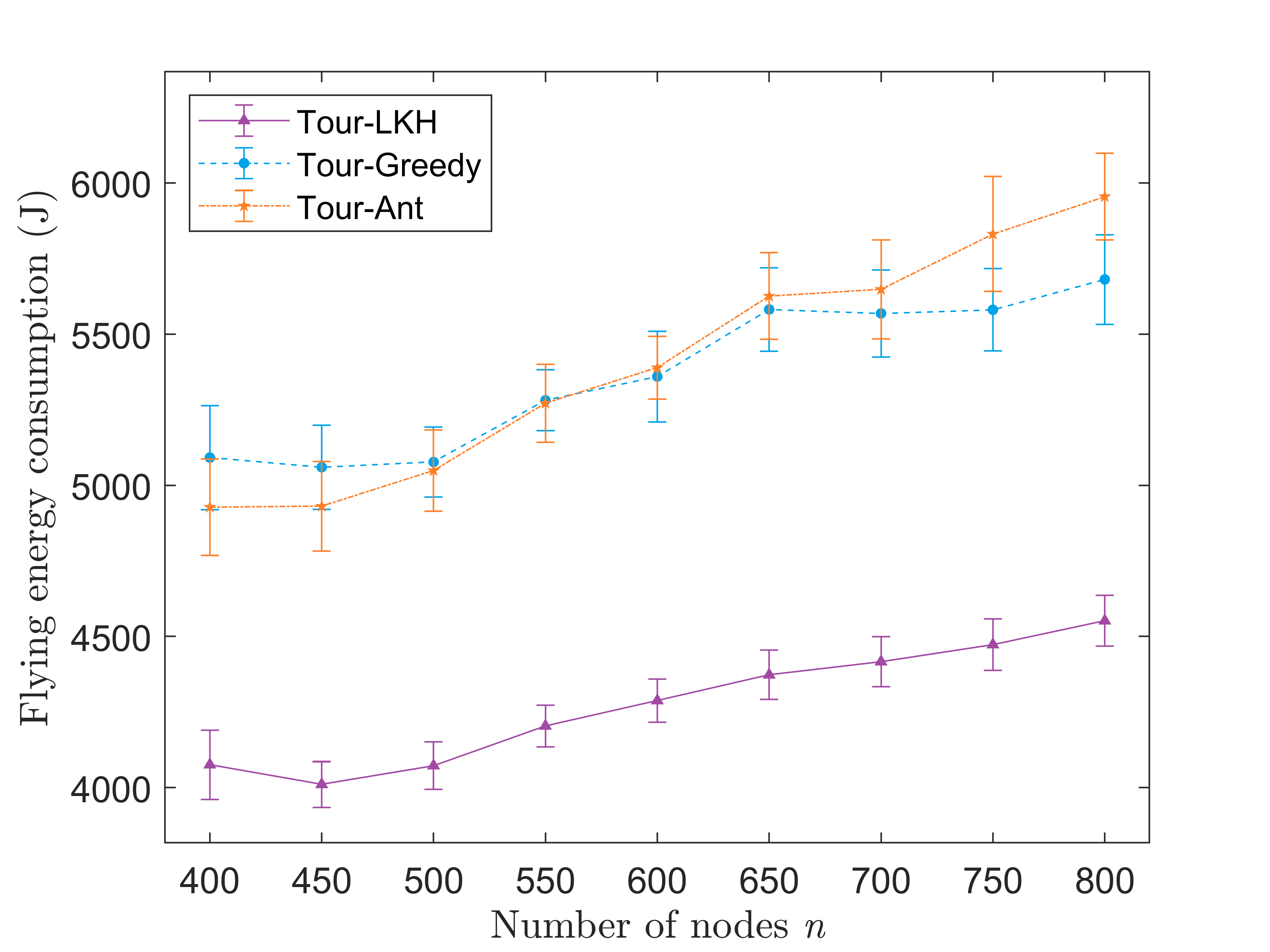}
        \caption{Flying energy consumption}
        \label{fig_res5_loss}
    \end{subfigure}
     \hfill
      \begin{subfigure}[b]{0.48\linewidth}
         \centering
        \includegraphics[width=\linewidth]{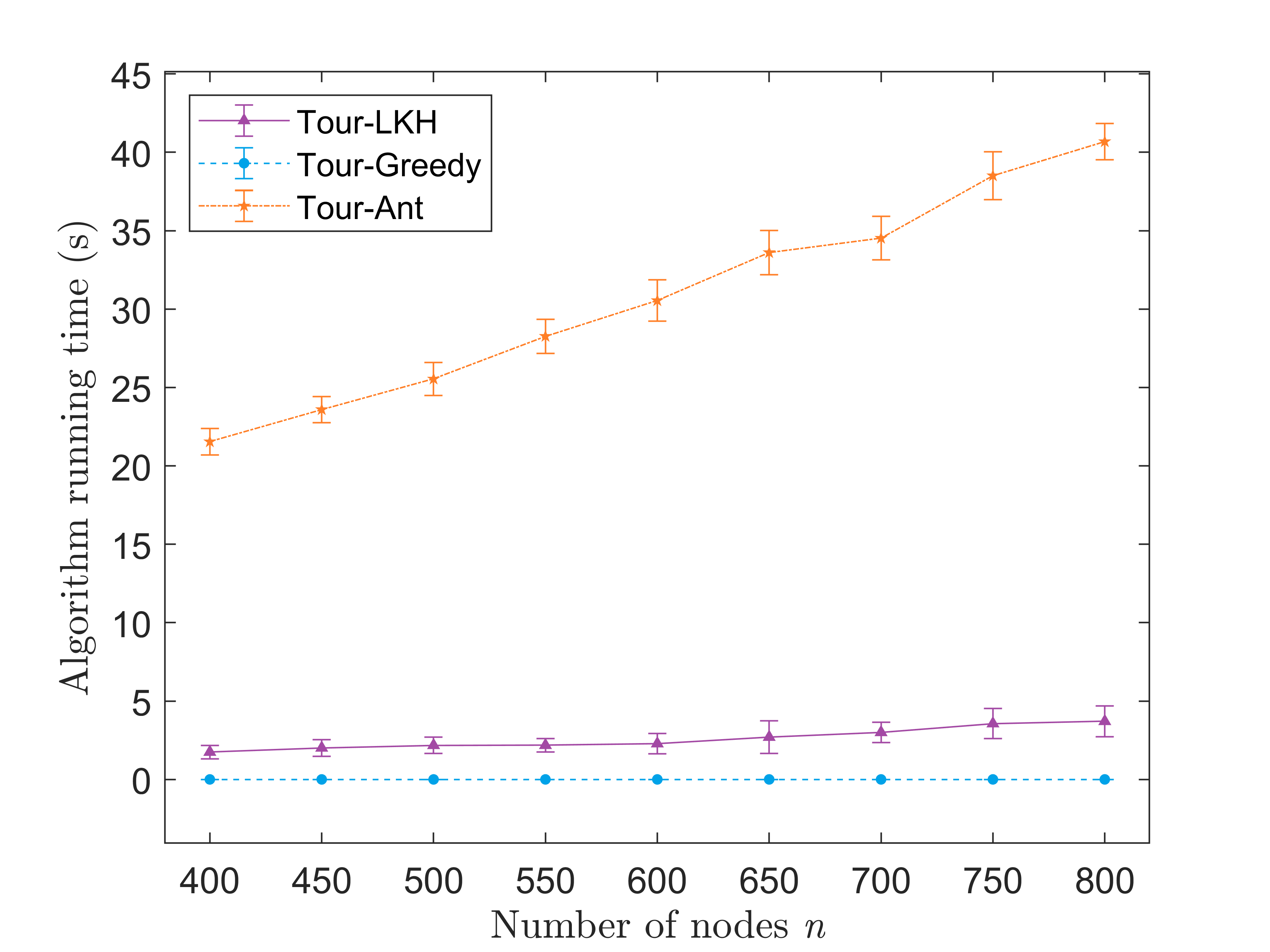}
        \caption{Total running time}
        \label{fig_res5_run}
    \end{subfigure}
    \caption{Performance comparison in solving TSP}
    \label{fig_res5}
    \vspace{-0.6cm}
\end{figure}

\subsection{Comparison of Complete Schemes}
\label{sec_cmp_complete_schemes}
Based on the results in the previous experiments, we select two additional complete schemes to compare with FELKH-3D. The first one is a combination of Pos-Group, Dir-Polyhedron, and Tour-Ant, which is denoted as ``\emph{Sch-GroupPolyAnt}`` and represents clustering based methods. Sch-GroupPolyAnt can be considered as a 3D extension of the algorithm proposed in \cite{Liang2023TVT}. The second one combines Pos-Grid, Dir-ACC, and Tour-Greedy, which is denoted as ``\emph{Sch-GridAccGreedy}`` and represents traditional methods. Sch-GridAccGreedy can be regarded as an adaption of the algorithm proposed in \cite{Jiang}. 

The results are shown in Fig.~\ref{fig_res6}. Fig.~\ref{fig_res6_loss} shows the results of the total energy loss metric. As FELKH-3D combines all the most energy efficient options in the steps, it consistently exhibits lower energy loss compared to the other schemes. As shown in Fig.~\ref{fig_res6_time}, FELKH-3D also produces the optimal charging schedules in term of time span.

\begin{figure}[!htb]
    \centering
    \begin{subfigure}[b]{0.48\linewidth}
        \centering
        \includegraphics[width=\linewidth]{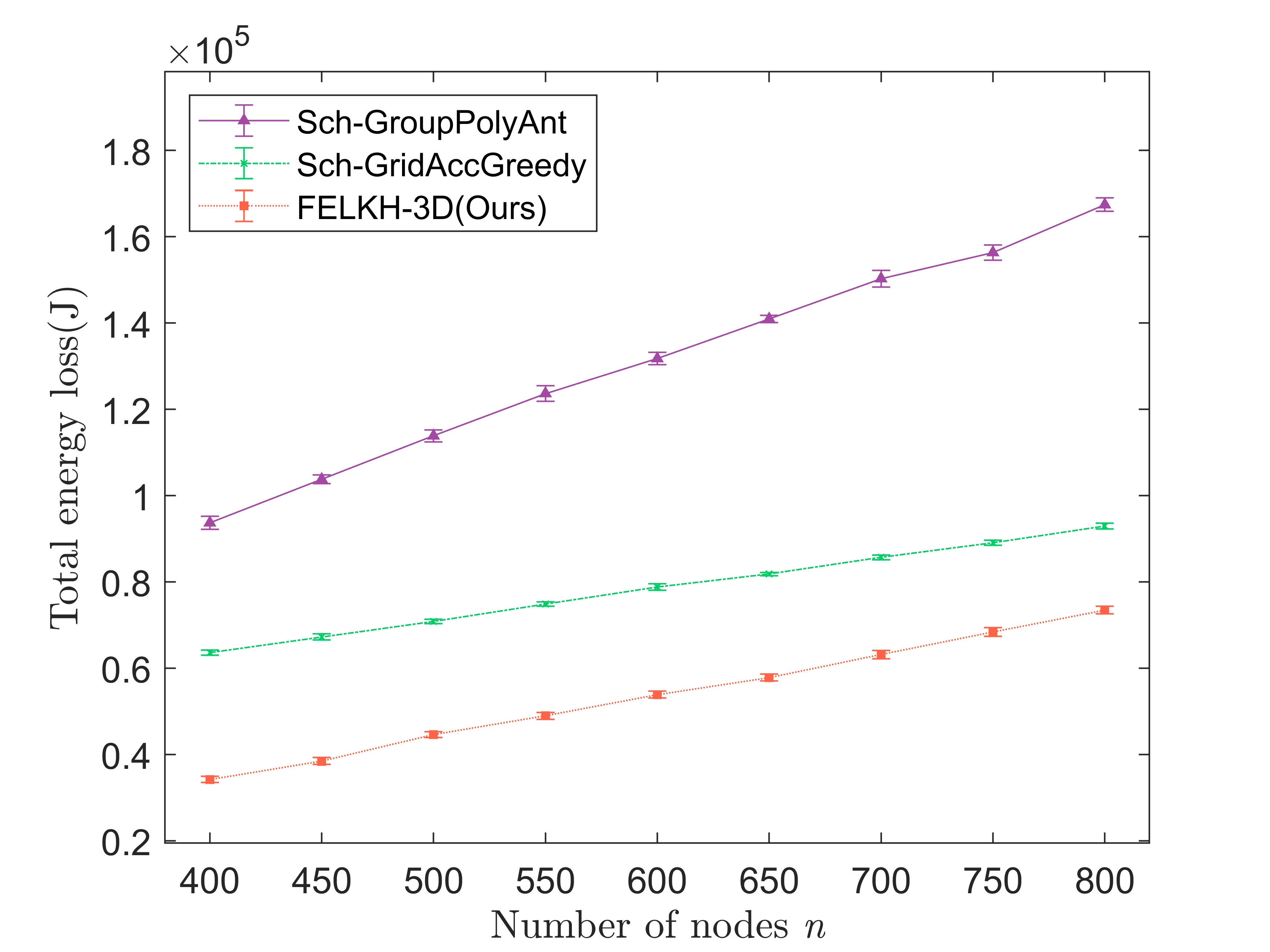}
        \caption{Total energy loss}
        \label{fig_res6_loss}
    \end{subfigure}
    \hspace{\fill}
    \begin{subfigure}[b]{0.48\linewidth}
         \centering
        \includegraphics[width=\linewidth]{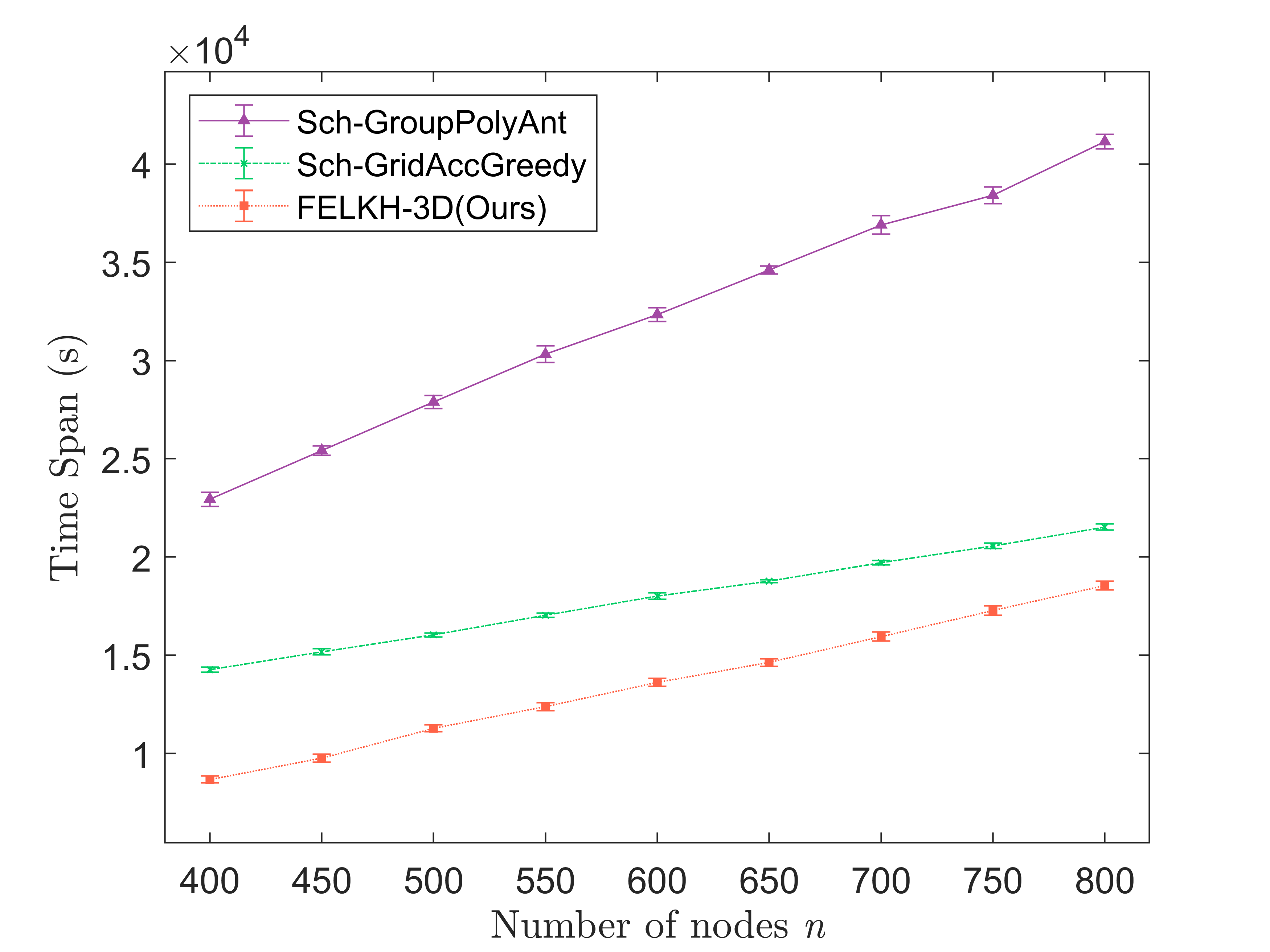}
        \caption{Time span}
        \label{fig_res6_time}
    \end{subfigure}
    \caption{Performance comparison of complete schemes for solving DCS-3D problems}
    \label{fig_res6}
    \vspace{-0.5cm}
\end{figure}

\section{ Conclusion}
\label{sec_conclusion}
In this paper, we addressed the DCS-3D problem for 3D-WRSNs using UAVs as wireless chargers, focusing on scenarios where nodes are distributed in three-dimensional space. We established that the DCS-3D problem is NP-hard and introduced the FELKH-3D algorithm to solve it. To determine an optimal set of charging directions, we developed the cMFEDS algorithm. cMFEDS identifies a minimal set of charging directions in a three-dimensional space such that the set is functional equivalent to the original infinite directions forming the whole sphere surface, which is essential for producing optimal charging schedules. 
Thus, the challenge of infinite charging direction space is address by the cMFEDS algorithm. To determine the optimal charging tour for the UAV, the LKH algorithm is employed. Simulation results demonstrated that FELKH-3D achieves the best performance.

\small
\bibliographystyle{IEEEtran}
\bibliography{IEEEabrv,reference}

\begin{IEEEbiography}
[{\includegraphics[width=1in,height=1.25in,keepaspectratio]{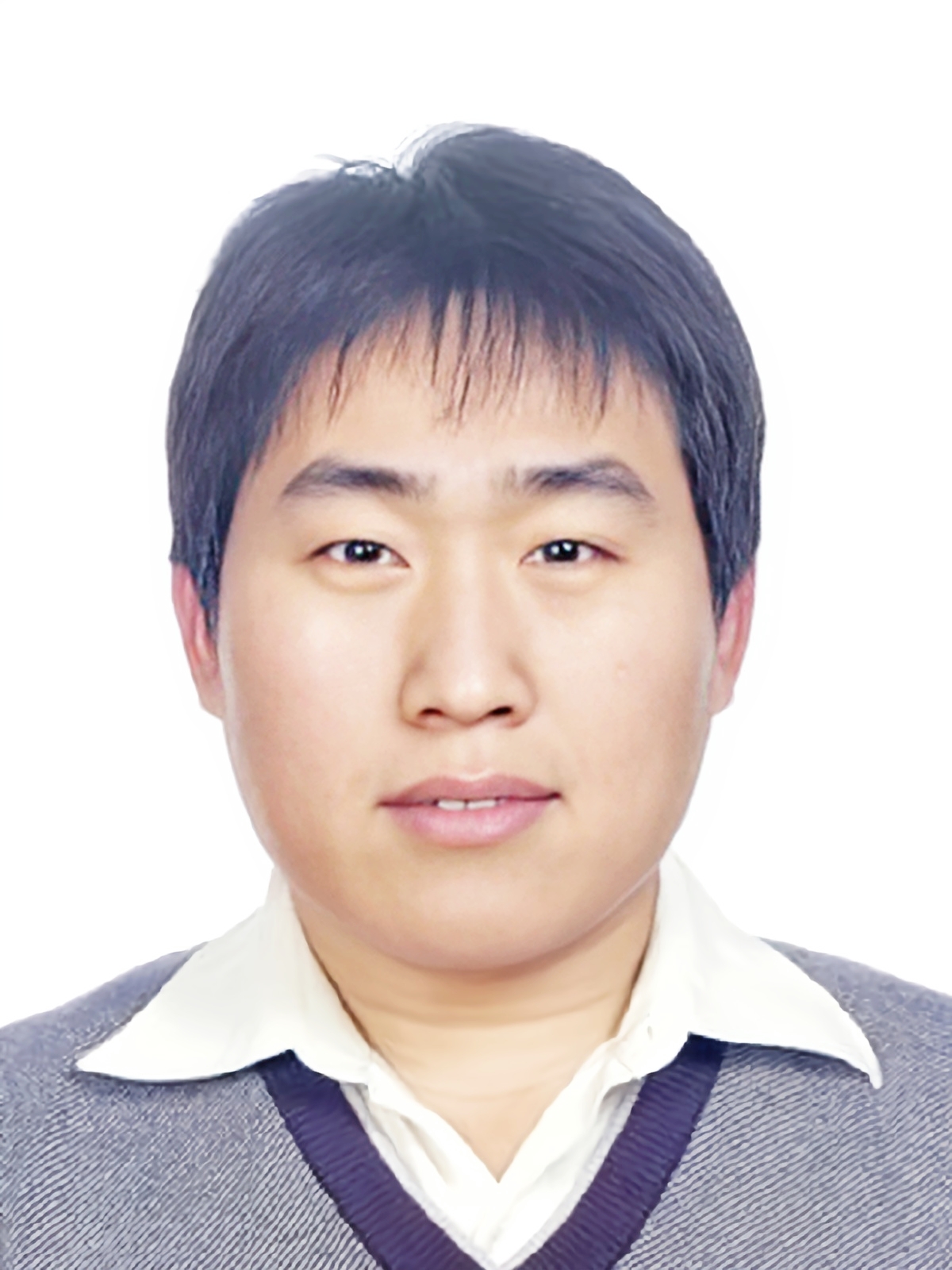}}]
{Zhenguo Gao} (Senior Member, IEEE) received the B.S. and M.S. degrees in mechanical and electrical engineering and the Ph.D. degree in computer architecture from Harbin Institute of Technology, Harbin, China, in 1999, 2001, and 2006, respectively.

He is currently a Professor with Huaqiao University, Xiamen, China, and also the Dean of the Key Laboratory of Computer Vision and Machine Learning (Huaqiao University), Fujian Province University, Xiamen. His research interests include wireless networks and edge computing.
\end{IEEEbiography}

\end{document}